%This is a template for producing LIPIcs articles.
%See lipics-manual.pdf for further information.

\documentclass[a4paper,UKenglish]{lipics}
  %for A4 paper format use option "a4paper", for US-letter use option "letterpaper"
  %for british hyphenation rules use option "UKenglish", for american hyphenation rules use option "USenglish"
 % for section-numbered lemmas etc., use "numberwithinsect"

%\usepackage{amsmath}
%\usepackage{amssymb}
%\usepackage{textcomp}

\usepackage{microtype}%if unwanted, comment out or use option "draft"
\usepackage{tikz}
\usetikzlibrary{calc,shapes.geometric,positioning}
\usepackage{listings}
\usepackage{subfigure}
\usepackage{floatflt}
\usepackage{wrapfig}

%\newtheorem{proposition}[theorem]{Proposition}
%\graphicspath{{./graphics/}}%helpful if your graphic files are in another directory

\bibliographystyle{plain}% the recommended bibstyle

% Author macros %%%%%%%%%%%%%%%%%%%%%%%%%%%%%%%%%%%%%%%%%%%%%%%%
\title{A New Upper Bound for the Traveling Salesman Problem in Cubic Graphs\footnote{This paper has been  formatted using the LIPIcs LaTeX template.}}
\titlerunning{New Upper Bound for TSP in Cubic Graphs} %optional, in case that the title is too long; the running title should fit into the top page column

\author[1]{Maciej Li\'{s}kiewicz}
\author[1]{Martin  R. Schuster}
\affil[1]{Institute of Theoretical Computer Science, University of Lübeck\\
          Ratzeburger Allee 160, 23538 Lübeck, Germany \\
         \texttt{liskiewi@tcs.uni-luebeck.de, schuster@tcs.uni-luebeck.de}}
\authorrunning{M. Li\'{s}kiewicz, M. R. Schuster} %mandatory. First: Use abbreviated first/middle names. Second (only in severe cases): Use first author plus 'et. al.'

%\Copyright[by]{Maciej Li\'{s}kiewicz and Martin  R. Schuster}%mandatory. Default is "by";  http://creativecommons.org/licenses/by/3.0/

\subjclass{F.2.2, G2.2.}% mandatory: Please choose ACM 1998 classifications from http://www.acm.org/about/class/ccs98-html . E.g., cite as "F.1.1 Models of Computation". 
\keywords{Exact Algorithms, Traveling Salesman Problem, Cubic Graphs, Hamiltonian Cycle}% mandatory: Please provide 1-5 keywords
%%%%%%%%%%%%%%%%%%%%%%%%%%%%%%%%%%%%%%%%%%%%%%%%%%%%%%%%%

% %Editor-only macros (do not touch as author)%%%%%%%%%%%%%%%%%%%%%%%%%%%%%%%%%%%
% \serieslogo{}%please provide filename (without suffix)
% \volumeinfo%(easychair interface)
%   {}% editors
%   {2}% number of editors: 1, 2, ....
%   {STACS 2013}% event
%   {1}% volume
%   {1}% issue
%   {1}% starting page number
% \EventShortName{}
% \DOI{10.4230/LIPIcs.xxx.yyy.p}% to be completed by the volume editor
%%%%%%%%%%%%%%%%%%%%%%%%%%%%%%%%%%%%%%%%%%%%%%%%%%%%%%%%%

\newcommand{\caO}{{\cal O}}

\begin{document}

\maketitle

\begin{abstract}
We provide a new upper bound for traveling salesman problem (TSP) in cubic graphs, i.\,e.\ graphs with maximum vertex degree three, and prove that the problem for an $n$-vertex graph can be solved in $\mathcal{O}(1.2553^n)$ time and in linear space. We show that the exact TSP algorithm of Eppstein, with some minor modifications, yields the stated result. The previous best known upper bound $\mathcal{O}(1.251^n)$ was claimed by Iwama and Nakashima [Proc. COCOON 2007]. Unfortunately, their analysis contains several mistakes that render the proof for the upper bound invalid.
\end{abstract}

\lstset{language=Matlab,numbers=left,numberstyle=\footnotesize,showspaces=false,showstringspaces=false,breaklines=true}

\tikzstyle{vertex}=[circle,fill=black,minimum size=3pt,inner sep=0pt]
\tikzstyle{edge} = [draw,-]
\tikzstyle{edgef} = [draw,line width=1.5pt,-]
\tikzstyle{edgesel} = [draw,color=black!30,line width=1.5pt,-] %[draw,line width=1.5pt,dotted,-]
\tikzstyle{edgerem} = [draw,dotted,-]
\tikzstyle{edgechoose} = [draw,dashed,-]

\section{Introduction}

It is an outstanding open problem whether the traveling salesman problem (TSP) and the closely related Hamiltonian cycle problem can be solved in $\caO(c^n)$ time for graphs on $n$ vertices, for some constant $c<2$. Recently Björklund et~al. \cite{BjorklundHKK12} have shown that the classical Bellman-Held-Karp exact algorithms \cite{Bellman62,HeldK62} for solving TSP can be modified to run in time $\caO((2 - \varepsilon)^n)$, where $\varepsilon > 0$ depends only on the maximum vertex degree. This provides the first upper bound on the time complexity of TSP that lies below $2^n$ for a broad class of graphs such as bounded degree graphs. Particularly, applying the result of \cite{BjorklundHKK12} for graphs with maximum vertex degree three, also called cubic graphs, one gets that TSP can be solved in time $2^{3n/4}n^{\caO(1)} = \caO(1.682^n)$. On the other hand, the problem of testing whether a cubic graph has a Hamiltonian cycle and consequently the decision version of TSP remain NP-complete even if the graphs 
are restricted to be planar \cite{GareyJT76}.

Exact algorithms for TSP for special classes of bounded degree graphs, in particular for cubic graphs, have been the subject of separate studies. The motivation for the study comes both from theoretical concerns and from practical applications, e.\,g.\ in computer graphics \cite{ArkinEtAl96,EppsteinGopi04}. The first exact algorithm for TSP in cubic graphs running faster than in time $2^n$ was proposed by Eppstein \cite{Eppstein07}. His algorithm solves the problem in $2^{n/3}n^{\caO(1)} = \caO(1.260^n)$ time and linear space and additionally it is easy to implement. Thus, although the technique by Björklund et al.\ improves the upper bound  $2^n$ for any degree bounded $\ge 3$, for specific bounds, like e.\,g.\ $3$, better methods exist. 

Eppstein's algorithm is a sophisticated recursive branch-and-bound search, which takes advantage of small vertex degrees in a graph. To speed-up searching, the algorithm uses the fact that in cubic graphs a selection of an edge to a recursively constructed Hamiltonian cycle forces several further edges to be in the cycle or not. In \cite{IwamaN07} Iwama and Nakashima slightly modify Eppstein's algorithm and provide a new interesting method to bound the number of worst-case branches in any path of the branching tree corresponding to recursive subdivisions of the problem. As a consequence, Iwama and Nakashima claim $\caO(1.251^n)$ to be an upper bound for the run-time of the algorithm. Unfortunately, their paper contains several serious mistakes that render the proof for the upper bound invalid (for details, see Section~\ref{section:comments:on:IN}). After reformulating the key lemma of \cite{IwamaN07} to be correct and then using the lemma to solve the recurrences derived in \cite{IwamaN07} in a proper way 
one could prove the upper bound $\caO(1.257^n)$ that still beats the bound  $\caO(1.260^n)$ by Eppstein.%\\[2mm]
%
%\noindent{\bf Our result.} 

\subsection*{Our Result} 
In this article we provide a new upper bound for TSP in cubic graphs. We show that Eppstein's algorithm with some minor modifications, similar to those used in \cite{IwamaN07}, yields the stated result: 
\begin{theorem}\label{theorem:main}
The traveling salesman problem for an $n$-vertex cubic graph can be solved in $\mathcal{O}(1.2553^n)$ time and in linear space.
\end{theorem}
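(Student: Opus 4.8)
The plan is to analyze Eppstein's branch-and-reduce algorithm for finding a minimum-weight Hamiltonian cycle in a cubic graph, augmented with the branching refinement sketched by Iwama and Nakashima. I would maintain the algorithm's state as the input multigraph together with a set $F$ of \emph{forced} edges required to lie on the cycle, a configuration being feasible only if every vertex is incident to at most two forced edges. First I would recall the \emph{reduction rules} that make the problem amenable to a branching analysis in the cubic case: whenever a vertex is incident to two forced edges its remaining edge is deleted (forced out); whenever a vertex of degree three has an edge deleted its two surviving edges become forced; degree-two vertices are contracted so that their two incident edges merge; and a handful of local rules eliminate parallel edges, short cycles, and triangles (carrying the edge weights along so that the procedure solves the weighted TSP and not merely the decision problem). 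The crucial structural fact, which holds because of the degree bound, is that forcing or deleting a single edge \emph{propagates}: it cascades through the graph and typically fixes a whole chain of further edges at no branching cost.

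Next I would set up the branching. The algorithm selects an unforced edge $yz$ incident to an endpoint $y$ of some forced edge and recurses on the two cases ``$yz \in F$'' and ``$yz$ deleted,'' running the reduction rules to completion in each branch. Choosing the measure to be the number of vertices $n$ (equivalently, a weighted count charging the still-undetermined part of the graph), each branch removes a number of vertices that depends on how far the forcing propagates. A careful enumeration of the local configurations around $y$ and $z$ yields recurrences of the form $T(n) \le T(n-a_1) + T(n-a_2) + \mathrm{poly}(n)$, and the \emph{worst} configuration---the one with the smallest $a_1,a_2$---is precisely what determines Eppstein's $2^{n/3}$ bound.

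The heart of the improvement, and the step I expect to be the main obstacle, is the amortized argument bounding how often this worst configuration can recur. Following the idea of Iwama and Nakashima, I would show that a bad branch necessarily creates local structure (a newly forced edge or a short forced path) guaranteeing that the \emph{next} branch along that root-to-leaf path is strictly better, so that bad branches cannot occur in unbounded consecutive runs. Formalizing this requires a potential/credit scheme, or an explicit case analysis over pairs (and possibly triples) of consecutive branches along any path of the search tree, so that the running time is governed not by the single worst recurrence but by the worst \emph{average} over such blocks. This is exactly where \cite{IwamaN07} erred, so the delicate point is to make the block analysis airtight: to verify that every way a bad branch can arise indeed forces an improvement downstream, with no overlooked configuration and no double-counting of the progress credited to a single propagation.

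Finally I would collect the block recurrences into a single characteristic equation and solve it numerically, obtaining a branching factor below $1.2553$ per vertex and hence the claimed $\caO(1.2553^n)$ time bound. Linear space follows because the algorithm is an ordinary depth-first traversal of the branching tree: along any root-to-leaf path it stores only the current modified graph and $\caO(1)$ bookkeeping per recursion level, with recursion depth $\caO(n)$.
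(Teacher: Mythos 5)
Your outline captures the general shape of the argument (branch-and-reduce with forcing propagation, plus an amortized bound on worst-case branches), but the mechanism you propose for the amortization is not the one that works, and the actual content of the bound is missing. First, a technical point: the measure cannot simply be the number of vertices, since branching does not delete vertices; Eppstein's analysis uses $s=|V(G)|-|F|-|\mathcal{C}|$ (vertices minus forced edges minus isolated 4-cycles), and the paper's refinement needs a multivariate recurrence in $s$, the number of free vertices $f$, and counters for the two distinct kinds of worst-case branches. That distinction is essential and absent from your plan: the $2T(s-3)$ worst case arises in two structurally different ways, called $A$-branches (branching on an edge adjacent to a single forced edge) and $B$-branches (branching inside a 6-cycle of unforced edges all of whose vertices carry forced edges), and they are controlled by entirely different resources.

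Second, and more importantly, your proposed amortization --- showing that a bad branch forces the \emph{next} branch on the path to be strictly better, so that bad branches cannot occur in long consecutive runs --- would fail: $A$-branches can and do occur many times in succession. What actually bounds them is a global budget, not local alternation. Each $A$-branch consumes four free vertices, so at most $n/4$ of them occur on any root-to-leaf path; each $B$-branch requires a 6-cycle all six of whose attached edges were selected earlier on the path, and the paper's central technical result (Proposition~\ref{dbranches}) shows, via a classification of how such 6-cycles reach the saturated state ($B_1$-, $B_2$-, $B_3$-cycles and the notion of internal edges), that one can charge to each $B$-branch at least four edge-selections made by neither $A$- nor $B$-branches, all charges being disjoint. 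This yields the path invariant $3a+7b\le n$, which enters the recurrence as a hard termination condition ($T=0$ once $3x+7y<\tfrac34 n$ for $x=n/4-a$, $y=n/7-b$); the bound $1.2553^n$ then falls out of optimizing an exponential weight function $2^{\alpha s+\beta(x+\frac73 y-\frac n4)+\gamma f}$ against the branch cases. Without the $A$/$B$ distinction, the global charging argument, and the disjointness bookkeeping (which is exactly where Iwama and Nakashima's version broke), the plan does not reach the stated constant.
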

Our proof techniques are based on ideas used by Eppstein \cite{Eppstein07} and Iwama and Nakashima \cite{IwamaN07} and a new, more careful study of worst-case branches in the tree of recursive subdivisions of the problem performed by the algorithm. Thus, our main contribution is more analytical than algorithmic. Nevertheless, we have implemented %the modified 
our algorithm 
and verified its easy implementability and good performance for graphs up to 114 vertices (for the experimental results see~\cite{Schuster2012}). %\\[2mm] 
%
%\noindent{\bf Related work.} 

\subsection*{Related Work} 
Applying the result by Björklund et al.\ for an $n$-vertex graph with maximum degree four one gets that TSP can be solved in  $\caO(1.856^n)$ time and exponential space. Eppstein \cite{Eppstein07} showed that the problem  can be solved in $\caO(1.890^n)$ time but using only polynomial space. Next, Gebauer \cite{Gebauer11} proposed an algorithm that runs in time  $\caO(1.733^n)$. This algorithm can also list the Hamiltonian cycles but it uses exponential space. Very recently, Cygan et al.\  \cite{CyganNPPRW11} have shown a Monte Carlo algorithm with constant one-sided error probability that solves the Hamiltonian cycle problem in $\caO(1.201^n)$ time for cubic graphs and in $\caO(1.588^n)$ time for graphs of maximum degree four. Though the technique presented in \cite{CyganNPPRW11} works well for the Hamiltonian cycle problem, it is not usable for~TSP.%\\[3mm]
%

%\noindent{\bf Paper organization.} 
\subsection*{Paper organization} 
In Section~\ref{secalgorithm} we recall Eppstein's algorithm and in Section~\ref{mainidea} main ideas of our proof techniques are sketched. In Section~\ref{section:our:analysis} we specify our modifications of Eppstein's algorithm and provide its analysis. Section~\ref{sec:proofs} presents the proof of our main technical result on the number of worst-case branches  of the branching tree. Finally, Section~\ref{section:comments:on:IN} discusses some issues concerning the paper by Iwama and Nakashima and the last section presents our conclusions.

\section{Outline of Eppstein's Algorithm}
\label{secalgorithm}
\tikzstyle{smallvertex}=[circle,fill=black,minimum size=2pt,inner sep=0pt]
\tikzstyle{smalledge} = [draw,-]
\tikzstyle{smalledgef} = [draw,line width=1.1pt,-]

Eppstein's TSP algorithm \cite{Eppstein07} (see also Appendix) searches {\em recursively} for a minimum weight Hamiltonian cycle $H_\text{\it min}$ in a given cubic graph $G$. The algorithm constructs successively Hamiltonian cycles which are determined by a set of {\em forced edges} $F$. The goal is to find the set $F$ which coincides with $H_\text{\it min}$. In each recursion step, an edge $e \in G\setminus F$ is chosen (Step~3). Obviously, $e$ either belongs to $H_\text{\it min}$ or not. Thus, the algorithm makes two recursive calls: once $e$ is added to $F$ (Step~4), assuming $e\in H_\text{\it min}$, and once $e$ is removed from $G$ (Step~5), assuming $e\not\in H_\text{\it min}$. The better solution to these two subproblems will then be returned (Step~6).

At the beginning of each recursive call, $G$ and $F$ are simplified iteratively (Step~1). If $F$ contains two edges meeting at a single vertex $v$, the third edge incident to $v$ cannot be part of $H_\text{\it min}$, and therefore it is removed from $G$. If $G$ contains a vertex with degree two, both incident edges have to be in $H_\text{\it min}$, thus, they are added to $F$. If $G$ contains parallel edges, one of the edges is removed from $G$ depending on the weight and on $F$. Next, if $G$ contains a cycle of four unforced edges, two opposite vertices of which are each incident to a forced edge outside the cycle, like here 
\begin{tikzpicture}[node distance=0.1, scale=0.3]
      \node[smallvertex] (a) at (0,0) {};
      \node[smallvertex] (b) [right=of a] {};
      \node[smallvertex] (c) [below=of b] {};
      \node[smallvertex] (d) [left=of c] {};
      \node[smallvertex] (e) [above left=of a] {};
      \node[smallvertex] (f) [above right=of b] {};
      \node[smallvertex] (g) [below right=of c] {};
      \node[smallvertex] (h) [below left=of d] {};
      \foreach \source/ \dest in {a/e, c/g}
           \path[smalledgef] (\source) -- (\dest);
     \foreach \source/ \dest in {a/b, b/c, c/d, d/a, b/f, d/h}
           \path[smalledge] (\source) -- (\dest);
    \end{tikzpicture}\ , 
then all non-cycle edges that are incident to a vertex of the cycle are added to $F$. Furthermore, triangles are contracted to a single vertex by adjusting the weights of attached edges. These techniques reduce the size of the recurrence tree of the algorithm and enforce $G$ to be simple, cubic and triangle-free. Simultaneously, it is verified whether the current set of forced edges $F$ leads to a contradiction: e.\,g.\ $F$ might contain three edges meeting at a single vertex or $F$ contains a non-Hamiltonian cycle. Then the algorithm returns a {\em None-value}.
%A contradiction is also found if $G$ contains a vertex with degree zero or one. 

Another idea to reduce the size of the recurrence tree is to stop the recursion when $G\setminus F$ consists of a set of non-connected 4-cycles. Then a solution is found %(in polynomial time) 
by constructing a minimum spanning tree on some helper graph $G'$ (Step~2).% This way 4-cycles with four attached edges in $F$ have not to be processed recursively.

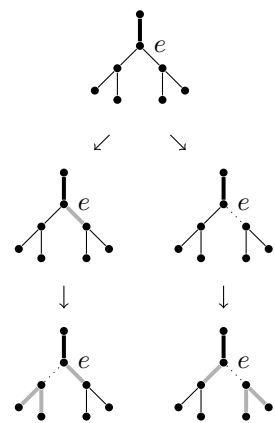
\begin{wrapfigure}{r}{0pt}
% \begin{figure}[htbp]
  \centering
%   \subfigure[Branching, no Step~1]{
%     \hspace{0.2cm}
    \begin{tikzpicture}[node distance=0.3*1, scale=1]
      \node[vertex] (a1) at (1,2.1) {};
      \node[vertex] (a2) at (0,0) {};
      \node[vertex] (a3) at (2.1,0) {};
      \node[vertex] (a4) at (0,-2.1) {};
      \node[vertex] (a5) at (2.1,-2.1) {};

      \draw [->] (0.6,0.5) -- (0.4,0.3);
      \draw [->] (1.4,0.5) -- (1.6,0.3);

      \draw [->] (0,-1.5) -- (0,-1.8);
      \draw [->] (2.1,-1.5) -- (2.1,-1.8);

      \foreach \ind in {1,2,3,4,5} {
        \node[vertex,label=right:$e$] (y\ind) [below=of a\ind] {};
        \node[vertex] (x\ind) [below left=of y\ind] {};
        \node[vertex] (z\ind) [below right=of y\ind] {};
        \node[vertex] (b\ind) [below left=of x\ind] {};
        \node[vertex] (c\ind) [below right=of z\ind] {};
        \node[vertex] (d\ind) [below=of z\ind] {};
        \node[vertex] (e\ind) [below=of x\ind] {};
        \foreach \source/ \dest in {a\ind/y\ind}
          \path[edgef] (\source) -- (\dest);
      }

      \foreach \source/ \dest in {x1/y1, y1/z1, b1/x1, e1/x1, c1/z1, d1/z1, c2/z2, d2/z2, b3/x3, e3/x3, b2/x2, e2/x2, x3/y3, c3/z3, d3/z3, x2/y2}
        \path[edge] (\source) -- (\dest);
      \foreach \source/ \dest in {y2/z2}
        \path[edgesel] (\source) -- (\dest);
      \foreach \source/ \dest in {y3/z3}
        \path[edgerem] (\source) -- (\dest);

      \foreach \source/ \dest in {c4/z4, d4/z4, b5/x5, e5/x5}
        \path[edge] (\source) -- (\dest);
      \foreach \source/ \dest in {b4/x4, e4/x4, y4/z4, x5/y5, c5/z5, d5/z5}
        \path[edgesel] (\source) -- (\dest);
      \foreach \source/ \dest in {x4/y4, y5/z5}
        \path[edgerem] (\source) -- (\dest);
    \end{tikzpicture}
    \caption{A branch and subsequent simplifications of $G$~and~$F$.}
    \label{fig:branching}
%   \caption{Branching; bold black line: edge in $F$; bold gray line: edge selected by the current branch or forced to be selected in Step~1 directly after the branch; dotted line: edge removed by the branch or forced to not appear in a cycle.}
\end{wrapfigure}
The choice of an edge $e$ (Step~3) which specifies two subproblems for the recursive calls, plays a crucial role in our analysis. Splitting the problem into two subproblems determined by $e$ will be called a {\em branch} (for an example see  Fig.~\ref{fig:branching}). A newly formed subproblem may induce several further changes to $G$ and $F$ by Step~1, as shown in Fig.~\ref{fig:branching}. Here, and in the rest of this paper, the bold black line indicates an edge in $F$, the bold gray line an edge selected to be added to $F$ by the current branch or forced to be be added to $F$ in Step~1 directly after the branch, and the dotted line indicates an edge removed by the branch or forced not to appear in a cycle. 

To reduce the size of the recursion tree, the following prioritisations are used. First, if $G\setminus F$ contains a 4-cycle, two vertices of which are adjacent to edges in $F$, like here 
\begin{tikzpicture}[node distance=0.1, scale=0.3]
      \node[smallvertex] (a) at (0,0) {};
      \node[smallvertex] (b) [right=of a] {};
      \node[smallvertex] (c) [below=of b] {};
      \node[smallvertex] (d) [left=of c] {};
      \node[smallvertex] (e) [above left=of a] {};
      \node[smallvertex] (f) [above right=of b] {};
      \node[smallvertex] (g) [below right=of c] {};
      \node[smallvertex] (h) [below left=of d] {};
      \foreach \source/ \dest in {a/e, b/f}
           \path[smalledgef] (\source) -- (\dest);
     \foreach \source/ \dest in {a/b, b/c, c/d, d/a, d/h, c/g}
           \path[smalledge] (\source) -- (\dest);
    \end{tikzpicture}\ , 
then a non-cycle edge of $G\setminus F$ that is incident to a vertex of the cycle is chosen (Step~$3(a)$). If there is no such 4-cycle, but $F$ is nonempty, then an edge in $G\setminus F$ which is adjacent to an edge in $F$ is chosen (Step~3(b)). If $F$ is empty, then any edge in $G$ is chosen. 

% {$ $} 
% 
% \vspace*{-4.5mm}
%\noindent
To analyse the time complexity of the algorithm, Eppstein derives a recurrence
\begin{equation}\label{eq:eppsetin:recc}
 T(s) = \max \{2T(s-3), T(s-2)+T(s-5) \},
\end{equation}
the solution of which can be used to bound the number of iterations of the algorithm\footnote{In fact, Eppstein uses a slightly different recurrence which, however, has the same asymptotic solution. We use the form  above to be consistent with our analysis.}. The recurrence uses the variable $s$ defined as  
\begin{equation} \label{eq:eppstein:s}
  s=|V(G)|-|F|-|\mathcal{C}|,
\end{equation} 
where  $\mathcal{C}$ denotes the set of 4-cycles of $G$ that form connected components of $G\setminus F$. The solution is $T(s)=\caO(2^{s/3})$ and since in an $n$-vertex graph $s$ is at most $n$ this gives the bound $\caO(2^{n/3})$ on the run-time of the algorithm.

\section{Main Ideas of our Analysis}
\label{mainidea}

In this section we present basic notions and main ideas of our analysis of Eppstein's algorithm.
Our technique largely relies on exploiting ideas due to Iwama and Nakashima \cite{IwamaN07} and on a new, more careful analysis of the branching tree.

The worst-case component $2T(s-3)$ in the recurrence relation~(\ref{eq:eppsetin:recc}) corresponds to the situations when the algorithm reduces the problem of size $s$ to two subproblems each of size $s-3$. Eppstein proves that there are two cases leading to such situations. In the first case the algorithm chooses in Step~$3(b)$ an edge $yz$ to be adjacent to a forced edge $xy$ and neither $yz$ nor $yw$ -- the third edge of $G$ incident to $y$ -- is adjacent to a second edge in $F$. In the second case, an edge $yz$ chosen in Step~$3(b)$ belongs to a length-six cycle of unforced edges, each vertex of which is also incident to a forced edge. In their paper Iwama and Nakashima call the branches of the algorithm corresponding to these cases respectively $A$- and   $B$-branches. Branches which are neither of type $A$ nor $B$ are called $D$-branches. Results of performing these branches are shown in Fig.~\ref{fig:basic:branchings}.

%\vspace*{-4mm}
\begin{figure}[htbp]
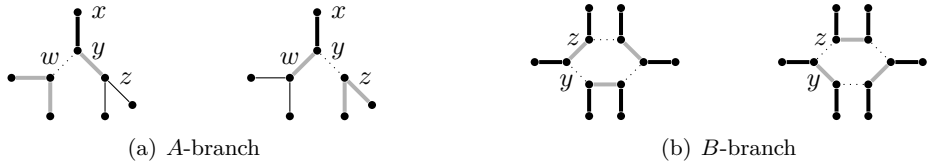

  \centering
  \subfigure[$A$-branch]{
    \begin{tikzpicture}[node distance=0.3*1.3, scale=1.5]
      \input{branch}

      \foreach \source/ \dest in {c2/z2, d2/z2, b3/x3, e3/x3}
        \path[edge] (\source) -- (\dest);
      \foreach \source/ \dest in {b2/x2, e2/x2, y2/z2, x3/y3, c3/z3, d3/z3}
        \path[edgesel] (\source) -- (\dest);
      \foreach \source/ \dest in {x2/y2, y3/z3}
        \path[edgerem] (\source) -- (\dest);
    \end{tikzpicture}
  }
  \hspace{1.5cm}
  \subfigure[$B$-branch]{
    \begin{tikzpicture}[node distance=0.3*1, scale=1.3]
      \input{6cycle-br}

      \foreach \source/ \dest in {a2/b2, c2/d2, e2/f2, b3/c3, d3/e3, f3/a3}
        \path[edgerem] (\source) -- (\dest);
      \foreach \source/ \dest in {b2/c2, d2/e2, f2/a2, a3/b3, c3/d3, e3/f3}
        \path[edgesel] (\source) -- (\dest);
      \node[label=left:{$z$}] at ($(a2)+(+0.1,0)$) {};
      \node[label=below:{$y$}] at ($(f2)+(0,+.06)$) {};
      \node[label=left:{$z$}] at ($(a3)+(+0.1,0)$) {};
      \node[label=below:{$y$}] at ($(f3)+(0,+.06)$) {};
      
    \end{tikzpicture}
  }
  \caption{Results of performing $A$- and $B$-branches.}
  \label{fig:basic:branchings}
\end{figure}

%\vspace*{-4mm}
Since $A$- and $B$-branches have the biggest impact on worst-case performance, the idea is to find an upper bound on the number of such branches. Iwama and Nakashima observed that performing an $A$-branch results in subproblems for which the number of free vertices in the graph is decreased by four. A vertex is called free if it is not incident to an edge in $F$. This implies that the total number of $A$-branches along any path of the backtrack tree cannot exceed $n/4$ for any $n$-vertex input graph. Thus, an important challenge now is to bound the worst-case number of occurrences of $B$-branches and then to incorporate these information by introducing new variables into a recurrence equation. The solution of the recurrence equation can then be used to bound the run-time of the algorithm.

In our approach we will count, similarly as in \cite{IwamaN07}, $A$-branches and $B$-branches together. We prove that if $P$ is a single path of the backtrack tree and $a$, resp. $b$, denotes the number of $A$-, resp. $B$-branches, then $3a+7b\le n$  holds for any $n$-vertex input graph. This bound will play a crucial role in our analysis. 

To prove the estimation above we use some ideas due to Iwama and Nakashima for analysing 6-cycles and  
a new, more careful technique to analyse $B$-branches. We show that along any path $P$ containing $b$ $B$-branches the algorithm selects at least $4b$ edges which are neither selected by $A$- nor by $B$-branches. Since any $A$- and any $B$-branch selects three edges, the inequality $3a+3b+4b\le n$ follows.

\section{Modification of Eppstein's Algorithm and the Analysis}
\label{section:our:analysis}

We make a minor modification in the algorithm of Eppstein, and add a prioritisation (introducing a new Step~$3(a')$ between Step~$3(a)$ and $3(b)$)
%(indicated in italic font below) 
destroying  structures leading to $B$-branches. A similar modification was used before by Iwama and Nakashima in \cite{IwamaN07}. The modification facilitates the analysis of the branching trees and enables to increase the lower bound on the number of edges selected by $D$-branches. 

A 6-cycle in $G$ is called {\em live} if none of its six cycle edges are selected, and a 6-cycle which is not live is called {\em dead}.

\begin{itemize}
\item[] \hspace*{-6mm}$3(a')$ If there is no 4-cycle and if $G\setminus F$ contains a live 6-cycle with a vertex $y$ which has a neighboring edge in $F$ (that is not a cycle edge but an attached one), let $z$ be one of $y$'s neighboring vertices (on the cycle). If two or more such live 6-cycles exist, then select a 6-cycle such that most attached edges are already selected. If there is more than one such edge $yz$ in the 6-cycle, choose $yz$ so, that $z$ also has a neighboring edge in $F$.
\end{itemize}

It is easy to see, that the correctness as well as the analysis presented by Eppstein are not affected by this modification. Now our aim is to provide a recurrence equation which incorporates information about $A$- and $B$-branches. The solution of this recurrence equation will then bound the run-time of the algorithm. %Next we will solve the recurrence. 

\subsection{The Recurrence} 
\label{subsection:Recurrence}
%\noindent{\bf The Recurrence.}
% The recurrence of the algorithm can be analyzed based on Step~3, which chooses an edge for the next recursion step. In one recursive call of the algorithm the edge is selected and in the other call the edge is removed from $G$. The branching into these two subproblems can be characterized by a set of different branching cases, which will be defined in the following.
For our analysis of the run-time of the algorithm, we define a multivariate recurrence equation  in the variables $n,s,x,y,$ and $f$. Variable $n$ denotes the number of vertices of the input graph and will not be modified by the recurrence. Variable $s$ is defined in a similar way as in \cite{Eppstein07} (cf. equation (\ref{eq:eppstein:s}) in a previous section); we let 
$$s=|V(G)|-|F|-2|\mathcal{C}|,$$ where  $\mathcal{C}$ is the set of 4-cycles  as defined in Section~\ref{secalgorithm}. 
%denotes the set of 4-cycles of $G$ that form connected components of $G\setminus F$. 
Next, let 
$$
  x=n/4-a \quad \text{and} \quad y=n/7 -b,
$$ 
where $a$, resp. $b$, is the number of $A$-branches, resp. $B$-branches, made by the algorithm along the current backtrack path\footnote{Speaking more formally, the algorithm can e.\,g.\ use two variables (starting with null values) to count the number of $A$- and $B$-branches and $a$ and $b$ are the current values of those variables.}. Finally, $f$ is the number of free vertices in $G$. We can bound the worst-case number of leaves of the backtrack tree as the solution of the following recurrence defined for non-negative integers $n,s,x,y,f$ as follows:
\begin{equation}
T(n,s,x,y,f)=\text{max}\left.\begin{cases}
              2T(n,s-3,x-1,y,f-4)\\
              2T(n,s-3,x,y-1,f)\\
              T(n,s-5,x,y,f-2)+T(n,s-2,x,y,f-2)\\
              2T(n,s-4,x,y,f)\\
              T(n,s-4,x,y,f-2)+T(n,s-3,x,y,f-2).
             \end{cases}\right.
\label{new:recurrence}
\end{equation}

The recurrence is based on the branch cases given in \cite{Eppstein07} and can be shown to  describe correctly the reductions of the problem. This can be done in an analogous way as in the proof of Lemma~7 in \cite{Eppstein07} with the modification that we now consider additionally free vertices (parameter $f$) in each situation. For Eppstein's case of $2T(s-3)$, we distinguish three subcases in the recurrence above: $A$- and $B$-branches are covered by the first, resp. second line, and branches caused by Step~$3(a)$ are included in the fifth line. Note, that due to different definitions of parameter $s$ in our recurrence the branch caused by Step~$3(a)$ decreases the variable $s$ in the reduced subproblems by 3 and by 4 (and not by 3 and 3 as in \cite{Eppstein07}). For Eppstein's case of $T(s-2)+T(s-5)$, we include the third and fourth line. He omits the case $T(s-4)$, since it is dominated by the other cases. 
% Formaly: -------------------

% Therefore, we can conclude the following about the recurrence~(\ref{new:recurrence}). Suppose that for all integers $n,s,x,f$ which specify the base cases of~(\ref{new:recurrence}) it is true that $T(n,s,x,f)$ bounds from above the number of leaves of the backtrack tree for any $n$-vertex graph when the algorithm starts with values $s,x,f$. Then this property holds for all  non-negative integers  $n,s,x,f$. 

To complete the definition of the recurrence we need to determine base cases describing termination conditions. Obviously, if $s=0$ then the algorithm reaches the bottom of the recurrence tree. Thus, we define the first base case as $T(n,0,x,y,f)=1$, for any $n,x,y,f$. The bottom of the recursion is also reached by the algorithm, if $s>0$ and no further branch can be applied. This implies the second base case: $T(n,s,x,y,f)=1$ if 
$\max\{ T(n,s-3,x-1,y,f-4),
        T(n,s-3,x,y-1,f),
        \min\{T(n,s-5,x,y,f-2), T(n,s-2,x,y,f-2)\},
        T(n,s-4,x,y,f),
        \min\{T(n,s-4,x,y,f-2), T(n,s-3,x,y,f-2)\}
     \}=0$.
A value of zero terminates impossible branches: one of these termination conditions is reached when $f<0$ or $s<0$. Thus, we define $T(n,s,x,y,f)=0$ if $f<0$ or $s<0$.

The last termination condition, which plays a crucial role in keeping the size of the tree small, occurs when $3x+7y<\frac{3}{4}n$. We define:
\begin{equation}
     T(n,s,x,y,f)=0 \quad \text{if} \quad 3x+7y<\frac{3}{4}n.
\label{new:recurrence:base:case}
\end{equation}

The correctness follows from our main technical result bounding the number of $A$- and $B$-branches along any path of the backtrack tree.

\begin{proposition} Let $P$ be a single path of the backtrack tree  and suppose that there are 
%in total 
$b$ $B$-branches on $P$. Then along $P$ the algorithm selects at least $4b$ edges which are neither selected by $A$- nor by $B$-branches. 
\label{dbranches}
\end{proposition}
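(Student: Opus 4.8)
The plan is to prove the inequality by a careful accounting argument: for each $B$-branch on the path $P$, I want to identify a distinct set of four edges that get \emph{selected} (forced into $F$) but that are attributable neither to an $A$-branch nor to a $B$-branch, and then argue that these sets are disjoint across the $b$ different $B$-branches, so that in total at least $4b$ such non-$A$/non-$B$-selected edges accumulate along $P$. Recall from Fig.~\ref{fig:basic:branchings} that a $B$-branch operates on a live $6$-cycle, every vertex of which already carries an attached forced edge, and that the branch selects three of the six cycle edges (an alternating set). The key observation I would exploit is that the structure enabling a $B$-branch — a $6$-cycle with forced attached edges at all six vertices — cannot appear for free: those six attached edges must themselves have been forced into $F$ at some earlier point along $P$, either by prior branches or by the simplification Step~1.

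First I would set up the framework precisely: fix the path $P$ from the root of the backtrack tree to a leaf, and process its branches in order. For each $B$-branch $\beta$ on $P$, let $C_\beta$ be the live $6$-cycle it acts on. I would argue that at the moment $\beta$ is performed, all six vertices of $C_\beta$ are incident to attached forced edges (this is exactly the defining condition of a $B$-branch from Section~\ref{mainidea}), and that these attached edges are distinct from the three cycle edges the branch selects. The heart of the argument is then to trace back \emph{which} branches caused these attached edges to enter $F$ and to show that they contribute a surplus of at least four ``genuine'' $D$-selected edges per $B$-branch.

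The second and more delicate step is to rule out double-counting and to handle the new prioritisation Step~$3(a')$, which was introduced precisely to control $B$-branch structures. I would use Step~$3(a')$ to argue that the algorithm is forced to process the attached edges of a live $6$-cycle before branching on a cycle edge, and that the prioritisation clause ``select a $6$-cycle such that most attached edges are already selected'' ensures that when a $B$-branch finally fires, a controlled number of its attached edges were freshly forced by $D$-branches (rather than inherited from another $B$-branch's bookkeeping). By assigning to each $B$-branch the four such freshly-$D$-selected edges and showing that the assignment is injective — no edge is charged to two different $B$-branches — the desired bound $4b$ follows. A parity or alternation argument on the $6$-cycle (the selected edges form one of two alternating triples) should give the precise count of four attached edges that are \emph{not} themselves endpoints of $A$- or $B$-selections.

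The main obstacle I expect is exactly this disjointness/injectivity bookkeeping: an edge forced into $F$ by Step~1 simplifications can in principle be relevant to the $6$-cycle of more than one subsequent $B$-branch, and $6$-cycles occurring at different points along $P$ may share vertices or attached edges. Making the charging scheme watertight will require a careful case analysis of how live $6$-cycles can overlap and how Step~$3(a')$ sequences the selection of their attached edges, and it is here that the full strength of the modification will be needed. I would structure the final proof around an induction on the branches of $P$ in order, maintaining an invariant that the number of non-$A$/non-$B$-selected edges seen so far is at least $4b'$ where $b'$ is the number of $B$-branches seen so far, with the inductive step establishing that each new $B$-branch adds four fresh such edges that were not previously charged.
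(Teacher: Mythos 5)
Your overall accounting framework (charge four non-$A$/non-$B$ selected edges to each $B$-branch, then prove the charges are injective) matches the paper's strategy in outline, but the concrete charging scheme you propose does not work, and its failure is precisely the error that this paper identifies in Iwama and Nakashima's argument. You plan to locate the four charged edges among the six \emph{attached} edges of the $6$-cycle on which the $B$-branch fires, arguing via the prioritisation of Step~$3(a')$ that enough of them were ``freshly forced by $D$-branches.'' This is false: there exist cubic graphs and paths $P$ on which a cycle becomes $C(6)$ with only \emph{two} of its attached edges selected by $D$-branches and the remaining four selected by $A$-branches (the paper exhibits such a configuration explicitly in the appendix as a counterexample to Lemma~1 of \cite{IwamaN07}, and Step~$3(a')$ does not prevent it). A ``parity or alternation argument on the $6$-cycle'' cannot rescue this either: the alternating triples concern the cycle edges selected by the $B$-branch itself and say nothing about which earlier branches forced the six attached edges into $F$.

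The missing idea is that for the bad cycles the deficit must be made up by edges that are \emph{not} attached to the $6$-cycle at all. The paper classifies the $C(6)$-cycles into $B_1$-cycles (at most two attached edges $A$-selected, so at least four are $D$-selected and one is done) and $B_2$-/$B_3$-cycles (more than two attached edges $A$-selected), and for the latter it introduces \emph{internal edges} --- edges selected by a $D$-branch that acquire two selected neighbours, or edges selected by Step~2 on isolated $4$-cycles --- which are harvested from a case analysis of all configurations produced by Steps~$3(a)$ and $3(a')$ (Lemmata~\ref{lemma4b2} and \ref{lemma4b3}). Disjointness is then obtained not by your proposed induction along $P$ alone, but from a structural result (Lemma~\ref{b2b3structure}) showing that at most two $B_2$-cycles or one $B_3$-cycle can be active at a time and that no further cycle can be activated while one is active, combined with Fact~\ref{fact:attached:edges} on the disjointness of attached edges. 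Without the internal-edge mechanism and this activation analysis, your inductive invariant cannot be maintained: the inductive step fails exactly when the current $B$-branch's cycle is of type $B_2$ or $B_3$.
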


We will prove this proposition in the next section. 

\begin{corollary}\label{cor:on:xyn}
The invariant of the algorithm running on an input graph with $n$ vertices is that $3a+7b\le n$ or equivalently that $3x+7y\ge \frac{3}{4}n$. Thus, (\ref{new:recurrence:base:case}) is the correct base case of recurrence~(\ref{new:recurrence}).
\end{corollary}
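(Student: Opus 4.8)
The plan is to derive the invariant $3a+7b\le n$ directly from Proposition~\ref{dbranches} by a disjoint-counting argument on the edges that the algorithm adds to $F$ along the path $P$, and then to convert this bound into the stated inequality $3x+7y\ge\frac{3}{4}n$ by a routine substitution.

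First I would observe that, along a single path $P$ of the backtrack tree, the set of forced edges $F$ grows monotonically: every edge that is selected (added to $F$, whether by a branch or by the simplifications of Step~1) remains in $F$ for the rest of $P$. Consequently the edges selected along $P$ are pairwise distinct, and their total number equals the number of edges of $F$ accumulated at the leaf of $P$. Since $F$ is always a subset of the Hamiltonian cycle $H_\text{\it min}$ that the algorithm is trying to construct -- otherwise a contradiction would be detected in Step~1 and a None-value returned -- we have $|F|\le n$ at every node of $P$. Thus the total number of edges selected along $P$ is at most $n$.

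Next I would partition the selected edges into three groups and argue that they are disjoint. By definition each $A$-branch and each $B$-branch selects exactly three edges, contributing $3a$ and $3b$ selected edges respectively; these two groups are disjoint because each edge enters $F$ at most once. Proposition~\ref{dbranches} supplies a third group of at least $4b$ selected edges that are, by its statement, selected by neither $A$- nor $B$-branches, and hence disjoint from the first two groups. Summing the three disjoint groups and invoking the bound from the previous step yields
\[
 3a+3b+4b \;=\; 3a+7b \;\le\; n .
\]
Finally, substituting $x=n/4-a$ and $y=n/7-b$ gives $3x+7y=\tfrac{7}{4}n-3a-7b$, so $3a+7b\le n$ is equivalent to $3x+7y\ge\tfrac{3}{4}n$. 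This is precisely the complement of the condition in~(\ref{new:recurrence:base:case}), which shows that the state $3x+7y<\tfrac{3}{4}n$ is never reached by a legitimate branch, and hence that setting $T=0$ there is a correct base case for the recurrence~(\ref{new:recurrence}).

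The only genuine difficulty is entirely absorbed into Proposition~\ref{dbranches}; the corollary itself is essentially bookkeeping. The single point that requires care is the disjointness together with the $|F|\le n$ bound: I must make sure that the $4b$ edges furnished by the proposition really are counted as distinct selected edges and are not inadvertently double-counted against the $A$- and $B$-branch edges, and that no selected edge is ever removed from $F$ later on the same path, so that the three group sizes genuinely add up to a lower bound on $|F|$.
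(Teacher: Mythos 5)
Your proposal is correct and follows essentially the same route as the paper: count $3$ selected edges per $A$-branch, $3$ per $B$-branch, add the $4b$ edges from Proposition~\ref{dbranches} that are selected by neither, note all these are distinct, bound the total by $n$, and rewrite via $x=n/4-a$, $y=n/7-b$. The only nitpick is your justification of $|F|\le n$ via ``$F\subseteq H_\text{\it min}$'' (not literally true on failing branches); the bound holds anyway because Step~1 guarantees $F$ has maximum degree two at every vertex, so it spans at most $n$ edges.
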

\begin{proof}
By definition, each $A$-branch and each $B$-branch selects 3 edges. Thus, for any single path $P$ of the backtrack tree from this property and from Proposition~\ref{dbranches} it follows that $3a+7b\le n$, where $a$, resp. $b$, denotes the number of $A$-branches, resp. $B$-branches, on $P$ and $n$ denotes the number of vertices of the input graph. The inequality above can be rewritten as $3x+7y\ge \frac{3}{4}n$ using the definitions $x=n/4-a$ and $y=n/7-b$. Thus $3x+7y < \frac{3}{4}n$ cannot occur. 
\end{proof}

Since, for any single path of the backtrack tree, the total number of $A$-branches cannot exceed $n/4$ and the total number of $B$-branches is not bigger then $n/7$, we can conclude:  
\begin{corollary}
The worst-case number of leaves of the backtrack tree on an $n$-vertex graph is bounded by $T(n,n,n/4,n/7,n)$.
\end{corollary}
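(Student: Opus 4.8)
The plan is to prove the bound by structural induction over the backtrack tree, showing that the number of leaves in the subtree rooted at any recursive call is at most $T$ evaluated at the parameter values $(s,x,y,f)$ that the algorithm carries when it enters that call; the corollary then follows by instantiating these values at the root. Let $L(v)$ denote the number of leaves of the backtrack subtree rooted at a recursive call $v$, and let $(s,x,y,f)$ be the values these quantities take when the algorithm enters $v$ (after the Step~1 simplifications). I will establish $L(v)\le T(n,s,x,y,f)$ for every $v$, and then read off the initial values.

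For the induction I first invoke the correspondence between the algorithm's branches and the five lines of recurrence~(\ref{new:recurrence}): by the analysis underlying the recurrence (the analogue of Lemma~7 of \cite{Eppstein07}, extended to track the free-vertex count $f$), every branch performed at an internal node transforms the current parameters into those of exactly one of the five lines, its two children receiving the correspondingly decremented parameters. An $A$-branch decrements $x$ by one and $f$ by four (line~1), a $B$-branch decrements $y$ by one (line~2), Eppstein's $T(s-2)+T(s-5)$ case is captured by lines~3--4, and the Step~$3(a)$ branch by line~5. Since every line decreases $s$ by a positive amount, I can carry out the induction on $s$.

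The base of the induction are the leaves of the actual backtrack tree, i.e.\ calls at which no branch is performed, which happens either because $s=0$ or because no admissible branch exists; in both cases the corresponding base case of the recurrence assigns $T=1$, so $L(v)=1\le T(n,s,x,y,f)$. The crucial point here, and the reason Corollary~\ref{cor:on:xyn} is needed, is that a genuine computation never enters the pruning region $3x+7y<\frac34 n$ where $T$ is defined to be $0$: Corollary~\ref{cor:on:xyn} guarantees the invariant $3x+7y\ge\frac34 n$ throughout, so the value $0$ (like the impossible-branch values arising from $s<0$ or $f<0$) is never attained at a reachable configuration and hence never spuriously undercounts a real leaf. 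For an internal node $v$ with children $v_1,v_2$ we have $L(v)=L(v_1)+L(v_2)$, and the induction hypothesis bounds this by $T(\text{params of }v_1)+T(\text{params of }v_2)$; since these parameters are exactly those appearing in one of the five lines, the sum is at most the maximum defining $T(n,s,x,y,f)$.

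It remains to evaluate at the root. Initially $F=\emptyset$ and no $A$- or $B$-branch has been made, so $a=b=0$, giving $x=n/4$ and $y=n/7$; all $n$ vertices are free, so $f=n$; and $s=|V(G)|-|F|-2|\mathcal{C}|=n-2|\mathcal{C}|\le n$. Thus the number of leaves is at most $T(n,\,n-2|\mathcal{C}|,\,n/4,\,n/7,\,n)$, and by monotonicity of $T$ in its $s$-argument — which follows from a subsidiary induction on $s$, enlarging $s$ only enlarging the set of admissible branches and hence the value of the max-recurrence — this is at most $T(n,n,n/4,n/7,n)$, as claimed. I expect the main obstacle to lie in the base-case bookkeeping: one must verify that \emph{every} way the algorithm can terminate maps to a recurrence base case of value $\ge 1$ rather than to the pruning value $0$, which is exactly where the invariant of Corollary~\ref{cor:on:xyn} is indispensable; cleanly establishing the $s$-monotonicity of $T$ is the secondary point requiring care.
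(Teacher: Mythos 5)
Your proposal is correct and follows essentially the same route as the paper: the paper treats this corollary as an immediate consequence of the recurrence describing the branch reductions (its analogue of Lemma~7 of \cite{Eppstein07}), the base-case correctness supplied by Corollary~\ref{cor:on:xyn}, and the root values $a=b=0$, $f=n$, $s\le n$, which is exactly what your structural induction formalizes. The only additions on your side are the explicit leaf-counting induction and the monotonicity of $T$ in $s$ (needed since $s=n-2|\mathcal{C}|\le n$ at the root), both of which the paper leaves implicit and both of which go through as you sketch.
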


\subsection{Solving the Recurrence}
%\noindent{\bf Solving the Recurrence.}
To bound the solution for the recurrence \eqref{new:recurrence} we use a function of the form
%\begin{equation}
$R(n,s,x,y,f)=2^{\alpha s+\beta (x+\frac{7}{3}y-\frac{1}{4}n)+\gamma f}.$
Note that the term $x+\frac{7}{3}y-\frac{1}{4}n$ incorporates the information provided by Proposition~\ref{dbranches} and Corollary~\ref{cor:on:xyn}.
%\label{recurrencesolution:new}
%\end{equation}
Our aim is to find parameters $\alpha, \beta,$ and $\gamma$ with the property that for all $n,s,x,y,f$ it holds: $T(n,s,x,y,f)\leq R(n,s,x,y,f)$, and that 
the following upper bound is best possible 
\[
 T(n,n,n/4,n/7,n) \le 2^{\alpha n+\beta (\frac{n}{4}+\frac{7}{3}\frac{n}{7}-\frac{1}{4}n)+\gamma n} =2^{(\alpha+\frac{\beta}{3}+\gamma)n}.
\]
Thus, we have to minimize
%\begin{equation}
$\alpha+\beta/3+\gamma$
%\label{miniwama}
%\end{equation}
with constraints  $T(n,s,x,y,f)\leq R(n,s,x,y,f)$. For the function $R$ of the form
as above %~(\ref{recurrencesolution:new}) 
we get first the following dependencies deduced from the recurrence equation \eqref{new:recurrence}:
\begin{equation*}
\begin{split}
2R(n,s-3,x-1,y,f-4)%&=2\cdot 2^{\alpha (s-3)+\beta((a-1)+2b)+\gamma(f-4)}\\
&=2^{(-3\alpha -\beta-4\gamma)+1}\, R(n,s,x,y,f)\\
2R(n,s-3,x,y-1,f)%&=2\cdot 2^{\alpha (s-3)+\beta(a+2(b-1))+\gamma f}\\
&=2^{(-3\alpha -\frac{7}{3}\beta)+1}\, R(n,s,x,y,f)\\
R(n,s-5,x,y,f-2)+R(n,s-2,x,y,f-2)%&=2^{\alpha (s-5)+\beta(a+2b)+\gamma(f-2)}+2^{\alpha (s-5)+\beta(a+2b)+\gamma(f-2)}\\
&=(2^{-5\alpha -2\gamma}+2^{-2\alpha -2\gamma})\, R(n,s,x,y,f)\\
2R(n,s-4,x,y,f)%&=2\cdot 2^{\alpha (s-4)+\beta(a+2b)+\gamma f}\\
&=2^{(-4\alpha )+1}\, R(n,s,x,y,f)\\
R(n,s-4,x,y,f-2)+R(n,s-3,x,y,f-2)%&=2^{\alpha (s-4)+\beta(a+2b)+\gamma(f-2)}+2^{\alpha (s-3)+\beta(a+2b)+\gamma(f-2)}\\
&=(2^{-4\alpha -2\gamma}+2^{-3\alpha -2\gamma})\, R(n,s,x,y,f).
\end{split}
\label{recurrenceequations}
\end{equation*}
From %\eqref{recurrenceequations}
this  we conclude that a solution is valid under the constraints
\begin{equation}
\begin{split}
&3\alpha+\beta+4\gamma\geq 1,\quad 3\alpha+\frac{7}{3}\beta\geq 1,\quad 2^{-5\alpha-2\gamma}+2^{-2\alpha-2\gamma}\leq 1,\\
&4\alpha\geq 1,\quad 2^{-4\alpha-2\gamma}+2^{-3\alpha-2\gamma}\leq 1.
\end{split}
\label{constraints}
\end{equation}
Minimizing $\alpha+\beta/3+\gamma$ under \eqref{constraints} gives a rational approximation
%\begin{equation}
$\alpha=\frac{157}{531},\ \beta=\frac{1-3\alpha}{2}=\frac{20}{413},$ and $\gamma=\frac{\beta}{3}=\frac{20}{1239}$
%\end{equation}
resulting in 
\begin{equation*}
\alpha+\frac{\beta}{3}+\gamma=\frac{1219}{3717}\quad \text{and}\quad2^\frac{1219}{3717}\approx 1.25523.
\label{newupperbound}
\end{equation*}

From this estimation we can conclude an upper bound  $\mathcal{O}(1.2553^n)$ on the run-time of the algorithm.
%concluded to an upper bound for our new analysis of $\mathcal{O}(1.2553^n)$, thus improving the bound of $\mathcal{O}(1.260^n)$ as found in \cite{Eppstein07}.
Since the modified algorithm can still use only linear space, this completes the proof of Theorem~\ref{theorem:main}.

\section{Proof of Proposition~\ref{dbranches}}
\label{sec:proofs}

This section is organized as follows. We start with some preliminaries. Next, Subsection~\ref{subsection:main:steps} presents main steps of the proof and 
in Subsection~\ref{subsection:proofs}  proofs of our key lemmata are given. Finally,  Subsection~\ref{subsection:auxiliary} provides an auxiliary result needed for the proofs.

\subsection{Preliminary Observations}
\label{subsection:preliminaries}
Recall, that any $B$-branch is caused by a length-six cycle of unforced edges, each vertex of which is also incident to a forced edge (i.\,e.\ a selected edge stored in $F$). Let $C(i)$, with $0 \le i \le 6$, denote a live~6-cycle such that $i$ edges of its six attached edges have already been selected. Thus, each $B$-branch is caused by a $C(6)$-cycle. We start with the following fact concerning $C(i)$-cycles which was observed by Iwama and Nakashima in the proof of Lemma~1 in \cite{IwamaN07}.
%Iwama and Nakashima 
%(see the proof of Lemma~1 in \cite{IwamaN07}).
\begin{lemma}[\cite{IwamaN07}] \label{ref:lemma:d:branch}
Let $C$ be a  (live) $C(3)$-, $C(4)$- or $C(5)$-cycle and assume a single branch of the algorithm increases the number of selected edges that are incident to $C$ but still leaves $C$ live (thus, e.\,g.\ $C(4)$ changes to $C(5)$ or $C(6)$). Then the branch is a $D$-branch. 
\label{onlyd}
\label{c3andmore}
\end{lemma}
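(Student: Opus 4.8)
The plan is to argue by contradiction, showing that if such a branch were of type $A$ or $B$ it would force a structure that is incompatible with the hypothesis that $C$ remains a live $C(i)$-cycle with $3 \le i \le 5$. First I would recall the precise local structure that triggers an $A$-branch and a $B$-branch. An $A$-branch (Step~$3(b)$) is performed at a vertex $y$ incident to a forced edge $xy$, where neither of the other two edges at $y$ is adjacent to a second forced edge; selecting $yz$ then forces $yw$ out and forces the two edges at $z$ (the ``opposite'' configuration shown in Fig.~\ref{fig:basic:branchings}(a)). A $B$-branch is performed only when the chosen edge lies on a length-six cycle of unforced edges every vertex of which is already incident to a forced attached edge, i.\,e.\ a $C(6)$-cycle.

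Next I would treat the two cases separately. For the $B$-branch case the argument is immediate: by definition a $B$-branch is caused by a $C(6)$-cycle, so the six cycle edges involved all have their attached edges already selected. A cycle $C$ that is a live $C(3)$-, $C(4)$-, or $C(5)$-cycle by hypothesis has strictly fewer than six of its attached edges selected, so $C$ itself cannot be the $C(6)$-cycle triggering the branch. One must still check that the branch cannot be a $B$-branch on some \emph{other} 6-cycle while incidentally adding a selected edge incident to $C$; here I would use the fact that a $B$-branch selects exactly the three alternating cycle edges of its own $C(6)$-cycle and forces out the other three, and then verify that the only way such a branch increases the count of selected edges attached to $C$ without destroying $C$ would again require $C$ to share the full forcing pattern, contradicting $i \le 5$ or contradicting that $C$ stays live.

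For the $A$-branch case I would examine what happens to $C$ when an $A$-branch adds a selected edge incident to a vertex $v$ of $C$. An $A$-branch at a vertex $y$ selects $yz$ and forces the two further edges at $z$ into $F$ while forcing $yw$ out. If the newly selected edge incident to $C$ is a \emph{cycle} edge of $C$, then $C$ is no longer live, contradicting the hypothesis that $C$ remains live. If instead the newly selected edge is an \emph{attached} edge of $C$ meeting $C$ at some cycle vertex $v$, then I would show that the $A$-branch's forcing pattern (a forced edge $xy$ already present, plus the newly forced opposite edges) places two forced edges meeting at $v$ or at an adjacent cycle vertex; by the simplification rules of Step~1 this either forces a cycle edge out of the cycle (again destroying liveness) or violates the degree constraints, so such a branch cannot leave $C$ live. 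Thus any single branch increasing the number of selected edges incident to a live $C(3)$/$C(4)$/$C(5)$-cycle while keeping it live is neither an $A$- nor a $B$-branch, hence a $D$-branch.

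The main obstacle I expect is the careful bookkeeping in the $A$-branch case: one must enumerate how a newly selected attached edge at a cycle vertex interacts with the forced edges already guaranteed by the definition of $C(i)$, and rule out every sub-configuration in which the induced forcing (via Step~1 simplifications) could add a selected attached edge \emph{without} either forcing a cycle edge or creating a second forced edge at a cycle vertex that would in turn kill the cycle. Since this is exactly the kind of local case analysis that Iwama and Nakashima carry out in their Lemma~1, I would lean on that structural analysis and reproduce the relevant sub-cases rather than re-deriving the branch taxonomy from scratch.
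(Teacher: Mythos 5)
Your $B$-branch case is essentially the paper's (a $B$-branch only selects or removes the cycle edges of its own $C(6)$-cycle, and such an edge cannot be an attached edge of a live cycle, so the branch cannot raise the degree of $C$). The genuine gap is in the $A$-branch case, which is the heart of the lemma. You argue that if an $A$-branch adds a selected \emph{attached} edge to $C$ at a cycle vertex $v$, its forcing pattern ``places two forced edges meeting at $v$ or at an adjacent cycle vertex'' and thereby kills $C$. That is false: the three edges selected by an $A$-branch are $yz$ together with two edges meeting at a common vertex ($w$, or $z$ in the other recursive call), so an attached edge of $C$ among them gives the cycle vertex $v$ exactly \emph{one} forced incident edge while both cycle edges at $v$ stay unforced; $C$ remains live and its degree simply increases. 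Indeed this is precisely how 6-cycles progress from $C(0)$ to $C(1)$ and $C(2)$ via $A$-branches throughout the paper (see the transition diagrams for $B_2$- and $B_3$-cycles), so no purely local argument about the forcing pattern around $C$ can prove the lemma. For the unmodified Step~$3(b)$ the statement would in fact be false.

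What makes the lemma true is the prioritisation, which your proposal never invokes. Since $C$ is a live $C(i)$-cycle with $i\ge 3$, it has a vertex with a neighbouring forced edge, so Step~$3(b)$ is never reached: the branch comes from Step~$3(a)$ (then it is a $D$-branch and we are done) or from Step~$3(a')$. In the latter case the chosen edge lies on a live 6-cycle $C'$ with the most selected attached edges, hence on a $C(j)$-cycle with $j\ge i\ge 3$ (and $C'\ne C$, since branching on $C$ itself would make it dead). The paper then rules out the $A$-branch by analysing $C'$, not $C$: as shown in Fig.~\ref{6cycle-a-branch}, an $A$-branch whose chosen edge lies on a live 6-cycle forces at least four vertices of that cycle to be free (otherwise more than three edges would be selected), so the cycle is at most $C(2)$ --- contradicting $j\ge 3$. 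Your proposal analyses the wrong cycle and omits the one structural ingredient (Step~$3(a')$) on which the argument rests.
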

\begin{proof} %[Proof of Lemma~\ref{c3andmore}]
Let $C$ be a $C(i)$-cycle as stated in the lemma and assume $Q$ is a branch increasing the number of selected edges that are incident to $C$ and still leaves $C$ live. If $Q$ is performed in Step~$3(a)$ then we are done since branches performed due to this step are $D$-branches. If Step~$3(a)$ cannot be applied, $Q$ has to be done due to Step~$3(a')$ since in the current situation there exists at least one live 6-cycle, namely $C$. But to leave $C$ live, there has to exist at least one additional (live) $C(j)$-cycle, with $j\ge i\ge 3$, since otherwise $Q$ would transform $C$ making it dead. 

Thus, assume $Q$ affects in Step~$3(a')$ a $C(j)$-cycle transforms cycle $C$ from $C(i)$ to $C(i')$, with $i'>i$. Obviously, $Q$ cannot be a $B$-branch because such branches cannot change the degree of another $C(i)$-cycle. We will show that $Q$ cannot be an $A$-branch as well, which will complete the proof of the lemma.

\begin{figure}[htbp]
  \centering
  \begin{tikzpicture}[node distance=0.5*1, scale=1]
    \node[vertex,label=left:$z$] (a1) at (0,0) {};
    \node[vertex,label=left:$z$] (a2) at (5,0) {};

    \foreach \ind in {1,2} {
      \input{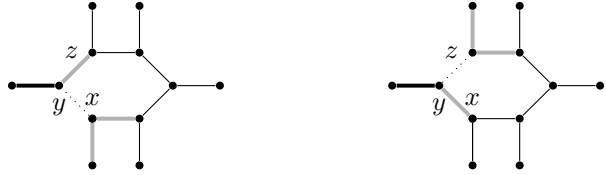}
%      \node[vertex,label=above left:$y$] (f\ind) [above left=of e\ind] {};
      \node[label=below:{$y$}] at ($(f\ind)+(0,+.07)$) {};
      \node[vertex,label=above:$x$] (e\ind) [left=of d\ind] {};
    }

    \foreach \source/ \dest in {a1/g1, b1/h1, c1/i1, d1/j1, e1/k1, a1/b1, b1/c1, c1/d1, b2/h2, c2/i2, d2/j2, e2/k2, b2/c2, c2/d2, d2/e2, e2/k2}
      \path[edge] (\source) -- (\dest);
    \foreach \source/ \dest in {f1/l1, f2/l2}
      \path[edgef] (\source) -- (\dest);
    \foreach \source/ \dest in {e1/f1, f2/a2}
      \path[edgerem] (\source) -- (\dest);
    \foreach \source/ \dest in {f1/a1, d1/e1, e1/k1, a2/g2, a2/b2, e2/f2}
      \path[edgesel] (\source) -- (\dest);
  \end{tikzpicture}
  \caption{Results of performing an $A$-branch affecting  a live 6-cycle $C'$: they show that at least four of the  vericies of $C'$ have to be free. Thus, right before the branch is performed, $C'$  has to be a $C(j)$-cycle, with $1\le j\le 2$.}
  \label{6cycle-a-branch}
\end{figure}

Assume, to the contrary, that $Q$ is an $A$-branch. Figure~\ref{6cycle-a-branch} shows results of performing an $A$-branch affecting a live 6-cycle $C'$. We can see, that at least four of the vertices of $C'$ have to be free, since otherwise the branch would select more than three edges. Thus, it follows directly that $C'$ can be at most a $C(2)$-cycle. We get a contradiction, since branch $Q$ has to transform a $C(j)$-cycle, with $j\ge 3$. 
\end{proof}

The simple fact below has also been used in \cite{IwamaN07}, although it was not stated explicitly in the paper.
\begin{wrapfigure}{r}{0pt}
%\begin{figure}[htbp]
  \centering
   \hspace*{2mm}
   \begin{tikzpicture}[node distance=0.25]
      \tikzstyle{edge4} = [draw,color=black!30,line width=1.5pt,-,color=red]
  
      \node[vertex] (a1) {};
      \foreach \ind in {1}
        \input{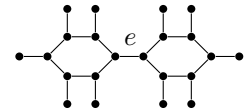};

      \foreach \source/ \dest in {}
	\path[edgef] (\source) -- (\dest);
      \foreach \source/ \dest in {f1/l1, a1/g1, b1/h1, a1/b1, b1/c1, c1/d1, d1/e1, e1/f1, f1/a1, c1/i1, d1/j1, e1/k1}
	\path[edge] (\source) -- (\dest);

      \node[vertex] (a2) [above right=of i1] {};
      \node[vertex] (b2) [right=of a2] {};
      \node[vertex] (c2) [below right=of b2] {};
      \node[vertex] (d2) [below left=of c2] {};
      \node[vertex] (e2) [left=of d2] {};

      \node[vertex] (g2) [above=of a2] {};
      \node[vertex] (h2) [above=of b2] {};
      \node[vertex] (i2) [right=of c2] {};
      \node[vertex] (j2) [below=of d2] {};
      \node[vertex] (k2) [below=of e2] {};
 
     \foreach \source/ \dest in {}
	\path[edgef] (\source) -- (\dest);
      \foreach \source/ \dest in {d2/j2, e2/k2, a2/g2, c2/i2, a2/b2, b2/c2, c2/d2, d2/e2, i1/a2, b2/h2, e2/i1}
	\path[edge] (\source) -- (\dest);
                         
     \node[label=above:$e$] at ($(c1)+(0.2,-0.1)$) {};
    \end{tikzpicture}   
    \hspace*{2mm}
  \caption{A common attached  edge.} 
  \label{fig:commen:edge} 
%\end{figure}
\end{wrapfigure}
\begin{fact} \label{fact:attached:edges}
Let $P$ be a single path of the backtrack tree and let $C$ and $C'$ be any cycles which become $C(6)$ on $P$. Then there is no edge in $G\setminus F$ which is attached both to $C$ and to $C'$ and which is added to $F$ by the algorithm. Moreover, the cycles $C$ and $C'$ are disjoint, i.\,e.\ they have no common cycle edges.
\end{fact}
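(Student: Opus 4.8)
The plan is to prove the two claims of Fact~\ref{fact:attached:edges} separately, in both cases arguing by contradiction and invoking the structural properties of the modified algorithm together with Lemma~\ref{c3andmore}.

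For the first claim, suppose toward a contradiction that some edge $e \in G\setminus F$ is attached to both $C$ and $C'$ and is added to $F$ by the algorithm at some point along $P$. Adding $e$ to $F$ is an event that increases, simultaneously, the count of selected attached edges of $C$ and of $C'$. Since both $C$ and $C'$ eventually become $C(6)$ on $P$, they must remain live up to the moment $e$ is selected. Thus selecting $e$ is a branch (or a forced consequence of a branch) that increments the number of selected attached edges of a live 6-cycle while leaving it live; by Lemma~\ref{c3andmore}, whenever such an increment is applied to a $C(3)$-, $C(4)$- or $C(5)$-cycle the responsible branch must be a $D$-branch, and in particular cannot be a $B$-branch acting on that very cycle. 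The key point is that a single edge-addition cannot be the cycle-edge of both $C$ and $C'$ (it is attached, not a cycle edge, to each), so it serves purely as an attached edge for both simultaneously. First I would show that this forces an inconsistency with the prioritisation rules: an attached edge shared by two distinct live 6-cycles would, at the moment one of them reaches $C(6)$, trigger a $B$-branch on that cycle which also alters the degree of the other live 6-cycle — but a $B$-branch cannot change the status of a second 6-cycle (as observed in the proof of Lemma~\ref{c3andmore}), giving the contradiction.

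For the second claim, that $C$ and $C'$ share no cycle edge, I would again argue by contradiction. Suppose $C$ and $C'$ have a common cycle edge $g$. Since both become $C(6)$, the edge $g$ must remain unselected (a cycle edge of a live 6-cycle is by definition not selected) right up until each cycle is branched on. When the algorithm performs the $B$-branch that consumes $C$, it selects alternating cycle edges of $C$, which in particular assigns a definite in-or-out status to $g$; symmetrically the branch consuming $C'$ does the same to $g$. The two cycles being distinct but sharing $g$ forces these alternating selections to interact, and one checks that the status imposed on $g$ (and on the attached/forced edges at the two endpoints of $g$) by the first $B$-branch is incompatible with $C'$ still being a live 6-cycle afterward — so $C'$ could never reach $C(6)$, contradicting the hypothesis. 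The cleanest way to carry this out is to observe that each $B$-branch makes every vertex of its cycle incident to a forced edge and fixes the cycle edges; a shared cycle edge then makes the two 6-cycles overlap in a forced structure, which the simplification steps of Step~1 would have already collapsed or which violates cubicity.

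The main obstacle I expect is the first claim, specifically making rigorous the timing argument: one must track \emph{when} along $P$ the shared attached edge is added to $F$ relative to when $C$ and $C'$ individually reach the $C(6)$ state, and argue that no ordering is consistent. The subtlety is that the edge might be added not by an explicit branch but as a forced consequence in Step~1, so I would need to treat the forced-edge propagation carefully and confirm that Lemma~\ref{c3andmore} (whose statement covers a single branch increasing the selected-edge count of a live cycle) applies to these forced additions as well, or else reduce the forced case to the branch case. Verifying that the shared attached edge cannot simultaneously be compatible with both cycles surviving to $C(6)$ — without double-counting the edge as contributing to both — is where the bulk of the care lies, and it is precisely this that underpins the later counting of $4b$ distinct $D$-selected edges in Proposition~\ref{dbranches}.
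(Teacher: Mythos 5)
Your argument for the first claim has a genuine gap. The contradiction you aim for never materializes: the event that simultaneously increments the attached-edge counts of $C$ and $C'$ is the selection of the shared edge $e$ itself, and nothing in Lemma~\ref{c3andmore} forbids a $D$-branch (or a forced consequence of one) from doing this --- $D$-branches are exactly the branches that are \emph{permitted} to add attached edges to live cycles, so ``the responsible branch must be a $D$-branch'' is not a contradiction. Your fallback --- that when one cycle reaches $C(6)$ the ensuing $B$-branch ``alters the degree'' of the other cycle --- is also not right: by the time $C$ is $C(6)$ the edge $e$ is already in $F$, and the $B$-branch on $C$ only selects or removes cycle edges of $C$; it does not act on $e$ again, so it changes nothing about $C'$ through $e$. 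The mechanism the paper actually uses is about \emph{when $e$ can be selected at all}: both endpoints of $e$ lie on live 6-cycles, so every edge adjacent to $e$ is an unforced, present cycle edge; hence $e$ is adjacent to no edge of $F$ and neither endpoint can reach degree two without first killing $C$ or $C'$. This rules out $e$ being chosen by Step~3(a), 3(a$'$) or 3(b) or being forced by Step~1, leaving only the very first branch (when $F$ is empty) as a way to put $e$ into $F$; the paper then observes that after this initial step one of the two cycles necessarily becomes dead before reaching $C(6)$. This ``only the initial branch can select $e$'' observation is the missing idea, and your proposal supplies no substitute for it.

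For the second claim your instinct is sound but routed through the wrong event, which costs you an unjustified timing assumption. You wait for the $B$-branch that consumes $C$ to fix the status of the shared cycle edge $g$ and then argue $C'$ dies; this requires showing that this $B$-branch occurs before $C'$ reaches $C(6)$. The paper's argument is more direct and avoids the issue: at a vertex where the two cycles diverge, the attached edge of $C$ is a cycle edge of $C'$ (and vice versa), so the very transition that makes the first of the two cycles a $C(6)$-cycle already selects a cycle edge of the other and kills it --- no appeal to the subsequent $B$-branch is needed.
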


\begin{proof}
Assume $e$ is an attached edge to both $C$ and $C'$ when it is added to $F$, i.\,e.\ that the  situation is like the one shown in Fig.~\ref{fig:commen:edge}. Then it is easy to see that this can happen only in the case when the initial branch chooses the edge $e$ and next adds it to $F$. But after this step $C$ or $C'$ becomes dead before reaching $C(6)$. Thus, we get a contradiction that $C$ and $C'$ are $C(6)$ somewhere on $P$. 

It is also easy to see that cycles $C$ and $C'$ cannot have common cycle edges. Otherwise the first transition which makes one of such cycles a $C(6)$-cycle makes at the same time the other cycle dead before it becomes $C(6)$ -- a contradiction.
\end{proof}

Using the fact above one can show, for example, that for any single path $P$ if there are $b$ $B$-branches on $P$ then along $P$ the algorithm selects at least $6b+3b$ edges. Thus, one can conclude that for any $n$-vertex graph, along any path $P$ at most $n/9$ $B$-branches can occur. This is a better bound than $n/7$ shown in Section~\ref{subsection:Recurrence} but it seems to be useless to our analysis. From Fact~\ref{fact:attached:edges} we can conclude also:
\begin{corollary}\label{cor:b:branch}
Let $P$ be a single path of the backtrack tree and let $C$ be any cycle which became $C(6)$ on $P$. Then any attached edge of $C$ is selected by an $A$- or $D$-branch, but not by a $B$-branch.
\end{corollary}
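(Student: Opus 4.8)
The plan is to argue by contradiction, and the only external input I would need is Fact~\ref{fact:attached:edges}. Suppose some attached edge $e$ of $C$ is selected by a $B$-branch. First I would pin down precisely what a $B$-branch does: it is caused by a live $6$-cycle $C'$ that has reached $C(6)$, and the three edges it adds to $F$ are (alternating) \emph{cycle} edges of $C'$; it never adds an attached edge of $C'$ to $F$. Under the assumption this forces $e$ to be a cycle edge of $C'$, so in particular both endpoints of $e$ lie on $C'$.

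Next I would exploit cubicity to convert this into a statement about shared cycle edges. Since $e$ is attached to $C$, at least one endpoint $v$ of $e$ lies on $C$, while $e$ is not a cycle edge of $C$. In the cubic graph $G$ the vertex $v$ has exactly three incident edges: its two cycle edges on $C$ together with $e$. Because $v$ also lies on $C'$ and $e$ is one of its two $C'$-cycle edges, the second $C'$-cycle edge incident to $v$ must coincide with one of $v$'s two $C$-cycle edges. Hence $C$ and $C'$ share a cycle edge.

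Finally I would close the argument with a short case split. If $C'=C$, the conclusion that $e$ is a cycle edge of $C'=C$ already contradicts $e$ being an attached (non-cycle) edge of $C$. If $C'\neq C$, then $C$ and $C'$ both become $C(6)$ on the same path $P$, so Fact~\ref{fact:attached:edges} guarantees they are disjoint, i.e.\ share no cycle edge -- contradicting the previous paragraph. Either way the assumption is untenable, and since every edge added to $F$ is selected by an $A$-, $B$- or $D$-branch, $e$ must in fact be selected by an $A$- or $D$-branch.

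I expect the only step needing real care to be the cubicity argument of the second paragraph: the degree-$3$ constraint at $v$ is exactly what turns ``$e$ is attached to $C$ and is a cycle edge of $C'$'' into ``$C$ and $C'$ share a cycle edge,'' after which everything reduces to a direct appeal to the disjointness part of Fact~\ref{fact:attached:edges}. The remaining ingredient -- that a $B$-branch selects only cycle edges of its causing cycle -- is routine from the definition of the $B$-branch, and I anticipate no further obstacles.
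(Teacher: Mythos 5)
Your proof is correct and follows essentially the route the paper intends: the paper states this corollary without an explicit proof as an immediate consequence of Fact~\ref{fact:attached:edges}, and your derivation (a $B$-branch selects only cycle edges of its underlying $C(6)$-cycle $C'$, cubicity at the attachment vertex forces $C$ and $C'$ to share a cycle edge, contradicting the disjointness part of Fact~\ref{fact:attached:edges}) is precisely the missing filler. The cubicity step you flag is indeed the only point needing care, and you handle it correctly.
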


\subsection{Outline of the Proof of Proposition~\ref{dbranches}}
\label{subsection:main:steps}

To prove the proposition we will associate with each $C(6)$-cycle on a path $P$ at least four edges which are selected neither by $A$- nor by $B$-branches. Fact~\ref{fact:attached:edges} and Corollary~\ref{cor:b:branch} suggest the following approach, which in fact was used in \cite{IwamaN07}: for any 6-cycle $C$ which becomes $C(6)$ on $P$, count the number of attached edges of $C$ which have been selected by $D$-branches. Using Lemma~\ref{ref:lemma:d:branch} one can show easily that for any such cycle there exists at least one edge selected by a $D$-branch. In their paper \cite{IwamaN07} Iwama and Nakashima claim that the number of edges can be increased to three. Unfortunately, this claim is false and one can show that there exist cubic graphs such that for some paths $P$ there exists a $C(6)$-cycle on $P$ having only two attached edges selected by $D$-branches and the remaining four edges are selected by $A$-branches (for more details see Section~\ref{section:comments:on:IN}). 

Thus, to associate with each $C(6)$-cycle at least four edges selected by $D$-branches, we need to extend   significantly the approach above. We will consider different types of cycles. Obviously, if a $C(6)$-cycle has already at least four edges selected by $D$-branches (we will call such cycles $B_1$-cycles), then we are done. 
The bad cases, i.\,e.\ cycles having more than two attached edges selected by $A$-branches, will be divided into two subcases called  $B_2$- and $B_3$-cycles.

\begin{definition}
Let $P$ be a single path of the backtrack tree and let $C$ be a $C(6)$-cycle somewhere on $P$. We call $C$ a {\em $B_1$-cycle} if at most two attached edges have been selected by $A$-branches. We call $C$ a {\em $B_2$-cycle} if more than two attached edges have been selected by $A$-branches and it changes on $P$ from $C(0)$ to $C(3)$ and then from $C(3)$ to $C(6)$ (directly or indirectly). We call $C$ a {\em $B_3$-cycle} if more than two edges have been selected by $A$-branches and $C$ becomes $C(6)$ by changes on $P$ from $C(0)$ to $C(6)$ without becoming $C(3)$ in between. We call an edge a {\em $B$-attached edge} if it is an edge attached to a $B_1$, $B_2$ or $B_3$-cycle.
\end{definition}

Below we illustrate how $B_2$- and $B_3$-cycles can transform from $C(0)$ to $C(6)$ (a letter $A$, resp. $D$, indicates a direct $A$-, resp. $D$-branch):
\[
\begin{array}{ll}
%\text{$B_1$-cycles}:&C(0) \overset{D}{\longrightarrow} C(2) \overset{A}{\longrightarrow} C(3) \overset{D}{\longrightarrow} C(6)\\
%                    &C(0) \overset{A}{\longrightarrow} C(2) \overset{D}{\longrightarrow} C(4) \overset{D}{\longrightarrow} C(5)\overset{D}{\longrightarrow} C(6)\\
\text{$B_2$-cycle}:&C(0) \overset{A}{\longrightarrow} C(1) \overset{A}{\longrightarrow} {\mathbf{C(3)}} \overset{D}{\longrightarrow} C(4)\overset{D}{\longrightarrow} C(6)\\
\text{$B_3$-cycle}:&C(0) \overset{A}{\longrightarrow} {\mathbf{C(2)}} \overset{A}{\longrightarrow}  {\mathbf{C(4)}} \overset{D}{\longrightarrow} C(6)
\end{array}
\]

Now, assume $P$ is a path of the backtrack tree which contains $b_i$ $B_i$-cycles, for $i=1,2,3$. Our aim is to prove that there exist  $4b_1+4b_2+4b_3$ edges on $P$ which are selected by $D$-branches. We summarize first the case of $B_1$-cycles. 

\begin{lemma}[$B_1$-cycles]
Let $P$ be a single path of the backtrack tree and let $C$ be a $B_1$-cycle on $P$. 
Then four attached edges of $C$ have been selected by $D$-branches.
\label{lemma4b1}
\end{lemma}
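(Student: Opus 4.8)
The plan is to derive the lemma directly from Corollary~\ref{cor:b:branch} together with the definition of a $B_1$-cycle, so that the whole argument reduces to a counting argument over the six attached edges of $C$.

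First I would recall that, since $C$ is a $C(6)$-cycle somewhere on $P$, the very definition of $C(6)$ guarantees that all six of its attached edges have been selected and placed into $F$ at some point along $P$. The key structural input is Corollary~\ref{cor:b:branch}: every attached edge of a cycle that becomes $C(6)$ on $P$ is selected by either an $A$-branch or a $D$-branch, and never by a $B$-branch. Hence each of the six attached edges of $C$ falls into exactly one of the two categories "selected by an $A$-branch" or "selected by a $D$-branch," giving a clean partition of the six edges.

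Next I would invoke the hypothesis that $C$ is a $B_1$-cycle, which by definition means that at most two of its attached edges were selected by $A$-branches. Combining this bound with the partition from the previous step, at least $6-2=4$ of the attached edges must have been selected by $D$-branches, which is exactly the assertion of the lemma.

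I do not expect a genuine obstacle at this point: the statement is essentially an immediate consequence of Corollary~\ref{cor:b:branch} and the way the class of $B_1$-cycles was defined. The substantive work lies elsewhere in the development — upstream in Fact~\ref{fact:attached:edges}, which ensures that no attached edge is shared by two cycles reaching $C(6)$ and that such cycles are edge-disjoint (this is what makes Corollary~\ref{cor:b:branch} available without double-counting), and downstream in the treatment of $B_2$- and $B_3$-cycles, where more than two attached edges are selected by $A$-branches and this naive count no longer yields four $D$-selected edges.
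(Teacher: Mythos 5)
Your proposal is correct and follows essentially the same route as the paper: the authors likewise derive the lemma immediately from the fact that $B$-branches never select attached edges of a cycle that becomes $C(6)$ (Corollary~\ref{cor:b:branch}) together with the defining bound of at most two $A$-selected attached edges, leaving at least four $D$-selected ones. Your closing remark about Fact~\ref{fact:attached:edges} preventing double-counting also matches the paper's accompanying observation.
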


Since $B$-branches do not select an attached edge of another live 6-cycle, the lemma follows directly from the definition of $B_1$-cycles. Thus, the first intermediate conclusion is that if on a path $P$ there are $b_1$ $B_1$-cycles then there exist at least $4b_1$ edges selected by $D$-branches. This follows from the lemma above and Fact~\ref{fact:attached:edges} that we do not count the edges twice.

The tricky part of the proof involves $B_2$- and $B_3$-cycles.

\begin{definition}
Let $P$ be a single path of the backtrack tree. A $B_2$-cycle $C$ is called {\em active} somewhere on $P$, if $C$ is $C(3)$, $C(4)$ or $C(5)$. A $B_3$-cycle $C$ is called {\em active} somewhere on $P$, if $C$ is $C(4)$ or $C(5)$.
Additionally, we say that a cycle $C$ is {\em activated} by a branch $Q$, if $C$ was not active right before $Q$ and is active right after performing $Q$.
\end{definition}

Now, we proceed as follows. We follow path $P$ from the root to a leaf and analyse $B_2$- and $B_3$-cycles occurring along $P$. The needed edges are identified when a currently analysed $B_2$-, resp. $B_3$-cycle, is active. The lemma below guarantees that we do not count edges twice, i.\,e.\ that all of the edges we find are pairwise disjoint.

\begin{lemma}
Let $P$ be a single path of the backtrack tree. Then it is true that (1) at any time there are at most two $B_2$-cycles or at most one $B_3$-cycle active and (2) if some $B_2$- or $B_3$-cycle is active, no branch can make another cycle active.
% (for an example of an  arrangement of active $B_2$- (white) and $B_3$-cycles (gray) along a path~$P$, see fig. below)
(See Fig.~\ref{fig:b2b3structure}.).
\label{b2b3structure}
\end{lemma}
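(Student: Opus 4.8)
The plan is to combine two ingredients: the priority rule of the new Step~$3(a')$, and Lemma~\ref{ref:lemma:d:branch}, which forces every live-preserving transition out of an \emph{active} state to be a $D$-branch. Throughout I follow $P$ from the root towards a leaf. The key preliminary claim I would establish is that \emph{a cycle can only be activated by an $A$-branch}. For a $B_2$-cycle this is a counting argument: by Lemma~\ref{ref:lemma:d:branch} every transition of the phase $C(3)\to C(6)$ that keeps the cycle live is a $D$-branch, so the three attached edges selected in this phase are all $D$-edges; since a $B_2$-cycle has, by definition, more than two attached edges selected by $A$-branches, all three attached edges of the phase $C(0)\to C(3)$ must be $A$-edges, and in particular the branch reaching $C(3)$ (the activation) is an $A$-branch. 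The same counting applies to a $B_3$-cycle: the phase $C(4)\to C(6)$ consists of $D$-branches, so at least three of the four attached edges of the phase $C(0)\to C(4)$ are $A$-edges, i.e.\ at most one of them is a $D$-edge; as this phase skips $C(3)$, its last step is a $+2$ jump into $C(4)$, and were it a $D$-branch it would contribute two $D$-edges, a contradiction. Hence the branch reaching $C(4)$ is also an $A$-branch. Neither activation can be a $B$-branch by Corollary~\ref{cor:b:branch}.

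Part~(2) then follows quickly. Suppose a $B_2$- or $B_3$-cycle $C$ is active; then $C$ is a live 6-cycle carrying at least three selected attached edges, so it has a vertex with a neighbouring edge in $F$ and therefore satisfies the precondition of Step~$3(a')$. Consequently every subsequent branch is performed in Step~$3(a)$ (if a 4-cycle is present) or in Step~$3(a')$, and never in Step~$3(b)$. Since $A$-branches are produced only in Step~$3(b)$, no $A$-branch can occur while $C$ is active, and by the preliminary claim no further cycle can be activated.

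For part~(1) I would first reduce the statement to a single branch. Fix a time at which a set $S$ of cycles is active and let $Q$ be the branch that activated the earliest cycle of $S$. Part~(2) shows that nothing was active just before $Q$ (otherwise $Q$ could activate nothing) and that nothing new is activated after $Q$ until the active set empties again; hence every cycle of $S$ is activated by the single branch $Q$, which by the preliminary claim is an $A$-branch. It therefore suffices to bound the number of cycles a single $A$-branch can activate. An $A$-branch selects exactly three edges; activating a $B_2$-cycle consumes at least one of them (a step into $C(3)$), while activating a $B_3$-cycle consumes at least two (the $+2$ jump into $C(4)$). The budget of three edges alone already excludes two simultaneously activated $B_3$-cycles. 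To exclude the remaining budget-feasible configurations — three $B_2$-cycles, or one $B_2$- together with one $B_3$-cycle — I would analyse the rigid local structure of an $A$-branch around the vertices $y,z,w$ (Figure~\ref{6cycle-a-branch}), using Fact~\ref{fact:attached:edges} to keep the activated cycles pairwise disjoint and free of common selected attached edges, together with the auxiliary structural result of Subsection~\ref{subsection:auxiliary}.

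The hard part will be exactly this last geometric case analysis in part~(1): translating the three-edge budget into the claimed bound requires understanding precisely how the attached edges selected by one $A$-branch can be distributed among several distinct live 6-cycles, and ruling out the mixed and triple configurations that the crude edge count leaves open. By contrast, the steps showing that activation must be an $A$-branch and that an active cycle blocks Step~$3(b)$ (hence all $A$-branches) are routine once Lemma~\ref{ref:lemma:d:branch} and the priority rule of Step~$3(a')$ are in hand.
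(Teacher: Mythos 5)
Your overall architecture matches the paper's: establish that only $A$-branches can activate a cycle, deduce that all simultaneously active cycles must be activated by a single $A$-branch, and then bound what one $A$-branch can do. However, there are two genuine problems. First, your part~(2) rests on the claim that ``$A$-branches are produced only in Step~$3(b)$,'' and this is false for the modified algorithm: a branch chosen in Step~$3(a')$ is still classified as an $A$-branch whenever the selected edge $yz$ has the local structure of Fig.~\ref{fig:basic:branchings}(a), and this really happens --- it is precisely the mechanism in Fig.~\ref{4:by:a:branches} and in the counterexample of the appendix, and the proofs of Lemma~\ref{ref:lemma:d:branch} and Lemma~\ref{b3properties} explicitly entertain $A$-branches arising from Step~$3(a')$. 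The correct route, which the paper takes, is structural: an $A$-branch acting through Step~$3(a')$ can only act on a $C(j)$-cycle with $j\le 2$ (Fig.~\ref{6cycle-a-branch}), and the tie-breaking rule of Step~$3(a')$ (prefer the live 6-cycle with the most selected attached edges) then forces that no $C(i)$-cycle with $i>2$ --- hence no active $B_2$- or $B_3$-cycle --- exists at the moment such a branch is performed.

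Second, for part~(1) you correctly reduce to a single $A$-branch and observe that the three-edge budget leaves open exactly the triple-$B_2$ and the mixed $B_2$-plus-$B_3$ configurations, but you then defer the exclusion of these cases to an unspecified ``geometric case analysis.'' That analysis is the substance of the lemma. The paper disposes of the mixed case via Lemma~\ref{b3properties}: a $B_3$-activation is necessarily a $C(2)\rightarrow C(4)$ transition consuming two of the three selected edges, and the third edge must be attached to a 4-cycle $\tilde{C}$ on which Step~$3(a)$ acts immediately afterwards; a short argument then shows this third edge cannot simultaneously be attached to a surviving $B_2$- or $B_3$-cycle, because either recursive call of Step~$3(a)$ on $\tilde{C}$ would make that cycle dead. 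For the triple-$B_2$ case the paper notes that all three cycles would have to be disjoint $C(2)$-cycles each fed exactly one edge by an $A$-branch acting on a fourth $C(2)$-cycle, and checks against Fig.~\ref{fig:thtee:b:two} that two of the three selected edges (those labelled $fa$ and $ed$ there) cannot be attached to two disjoint $C(2)$-cycles. Without these two concrete arguments the proposal does not establish the stated bound.
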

%\vspace*{-6mm}
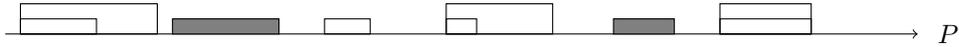
\begin{figure}[htbp]
  \centering
  \begin{tikzpicture}[scale=0.2]
    \draw[->] (0,0) -- (60,0);
    \node at (62,0) {$P$};
    \draw (1,0) rectangle (6,1);
    \draw (1,0) rectangle (10,2);
    \draw[fill=gray] (11,0) rectangle (18,1);
    \draw (21,0) rectangle (24,1);
    \draw (29,0) rectangle (31,1);
    \draw (29,0) rectangle (36,2);
    \draw[fill=gray] (40,0) rectangle (44,1);
    \draw (47,0) rectangle (53,1);
    \draw (47,0) rectangle (53,2);
  \end{tikzpicture}
  
  \caption{An arrangement of active $B_2$-cycles  (white) and $B_3$-cycles (gray) along a path~$P$.} 
  \label{fig:b2b3structure}
\end{figure}

Note that from the lemma above follows that $B_2$-cycles and $B_3$-cycles cannot be active simultaneously on $P$.
Before we show the main result of this section, we define an {\em internal edge} as a selected edge with certain properties.
\begin{definition}\label{def:internal:edge}
Let $P$ be a single path and $Q_1,Q_2,\ldots$ denote the branches along $P$ which are not $B$-branches. Let $E_j$ be the set of edges selected as direct consequence of branch $Q_j$ along $P$. These are the edges selected by the branch itself and edges selected by subsequent iterations of Step~1. Note that the contraction of adjacent edges does not modify $E_j$. Be further $E_P=\bigcup_j E_j$. Then an edge is called an \emph{internal edge}, if it is $(i)$ selected by a $D$-branch and has two adjacent edges in $E_P$ or $(ii)$ selected  by Step~2 to a resulting Hamiltonian cycle.
\end{definition}

\begin{figure}[htbp]
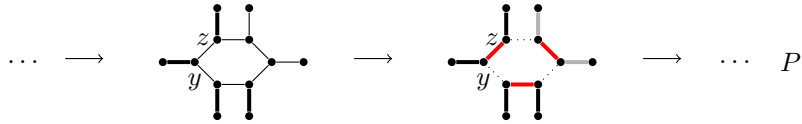

  \centering
   \begin{tikzpicture}[node distance=0.3]
      \tikzstyle{edge4} = [draw,color=black!30,line width=1.5pt,-,color=red]
  
      \node[vertex] (a1) {};
      \foreach \ind in {1}
        \input{6cycle-comp};

      \node[label=left:{$z$}] at ($(a1)+(+0.15,0)$) {};
      \node[label=below:{$y$}] at ($(f1)+(0,+.1)$) {};
      \foreach \source/ \dest in {d1/j1, e1/k1, f1/l1, a1/g1}
	\path[edgef] (\source) -- (\dest);
      \foreach \source/ \dest in {a1/b1, b1/c1, c1/d1, d1/e1, e1/f1, f1/a1, b1/h1, c1/i1}
	\path[edge] (\source) -- (\dest);

      \node[vertex] (a2) at (3.8,0) {};
      \foreach \ind in {2}
        \input{6cycle-comp};
      \node[label=left:{$z$}] at ($(a2)+(+0.15,0)$) {};
      \node[label=below:{$y$}] at ($(f2)+(0,+.1)$) {};
      \foreach \source/ \dest in {d2/j2, e2/k2, f2/l2, a2/g2}
	\path[edgef] (\source) -- (\dest);
      \foreach \source/ \dest in {a2/b2, c2/d2, e2/f2}
        \path[edgerem] (\source) -- (\dest);
      \foreach \source/ \dest in {c2/i2, b2/h2}
        \path[edgesel] (\source) -- (\dest);
      \foreach \source/ \dest in {b2/c2, d2/e2, f2/a2}
        \path[edge4] (\source) -- (\dest);

      \draw [->] (-2,-0.25) -- (-1.5,-0.25);
      \draw [->] (1.8,-0.25) -- (2.3,-0.25);
      \draw [->] (5.6,-0.25) -- (6.1,-0.25);
     % \draw [->] (2,-1.3) -- (2.5,-1.6);

     %\node[label=left:{$\ldots$}] at ($(l1)+(-1.5,0)$) {};

     \node[label=left:{\hspace*{-35mm} $\ldots$}] at (l1) {};
     \node[label=right:{\hspace*{13mm} $\ldots$}] at (i2) {};
     \node[label=right:{\hspace*{21mm} $P$}] at (i2) {};
    \end{tikzpicture}
  \caption{An example for internal edges (red) selected by a $D$-branch due to Step~$3(a')$. } 
  \label{fig:internal:edge} 
\end{figure}

For an example of internal edges see Fig.~\ref{fig:internal:edge}.  Along a path $P$ the algorithm chooses an edge $yz$ in a $C(4)$-cycle (left) in Step~$3(a')$. Note that this leads to a $D$-branch. The figure shows a fragment of a path following a recursive call on $G, F\cup\{yz\}$. The configuration to the right shows the situation after the branch (including subsequent iterations of Step~1). The red edges are internal since they are selected by a $D$-branch and each red edge has two adjacent edges which could not be selected by a $B$-branch. 

An essential property of an internal edge is, that it is neither a $B$-attached edge nor an edge selected by $A$- or $B$-branches.

\begin{lemma}[$B_2$-cycles]
Let $P$ be a single path of the backtrack tree and suppose there are $b_2$ $B_2$-cycles on $P$. Then there exist at least $b_2$ internal edges for $P$.
\label{lemma4b2}
\end{lemma}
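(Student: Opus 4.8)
The plan is to assign injectively to each $B_2$-cycle on $P$ one internal edge, so that the resulting $b_2$ edges are pairwise distinct. Since an internal edge is, by its essential property, never a $B$-attached edge nor selected by an $A$- or $B$-branch, such edges are automatically disjoint from the three attached edges that each $B_2$-cycle already contributes on its way from $C(3)$ to $C(6)$. Recall that a $B_2$-cycle $C$ first becomes $C(3)$ (with at least three attached edges selected by $A$-branches) and is then advanced to $C(6)$ while remaining live; by Lemma~\ref{ref:lemma:d:branch} every branch that raises the number of selected edges incident to $C$ without killing $C$ is a genuine $D$-branch, performed in Step~$3(a')$ (or in Step~$3(a)$). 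The internal edge for $C$ cannot be a cycle edge of $C$ itself: those are forced only when $C$ is finally destroyed as a $C(6)$-cycle, a reduction charged as a $B$-branch in Section~\ref{subsection:Recurrence}, so they are $B$-selected and excluded. The edge must therefore be found among the companion structure consumed while $C$ is active.

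\textbf{Where the internal edges come from.} First I would locate where internal edges are created. Let $Q$ be a $D$-branch advancing $C$ while keeping $C$ live. Selecting a previously unselected attached edge of $C$ forces a vertex $w$ outside $C$; in the typical case $w$ lies on a live $6$-cycle $C'$ which $Q$, acting in Step~$3(a')$, turns dead. This companion $C'$ cannot already be $C(6)$: the edge by which $Q$ advances $C$ is attached to $C$, and were it also attached to a $C(6)$-companion $C'$ it would be an edge added to $F$ attached to two cycles that become $C(6)$ on $P$, contradicting Fact~\ref{fact:attached:edges}; hence the destruction of $C'$ is a $D$-branch, not a $B$-branch. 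Exactly as in Figure~\ref{fig:internal:edge}, $Q$ forces the three alternate cycle edges of $C'$, and each such forced edge is flanked by two attached edges of $C'$ that are selected (by $Q$ or earlier) and therefore lie in $E_P$. Consequently each of these edges has two adjacent edges in $E_P$ and, being selected by a $D$-branch, is internal in the sense of Definition~\ref{def:internal:edge}. (Should $C$ instead survive as a live $6$-cycle down to the Step~$2$ termination, the required internal edge is supplied by case~(ii) of that definition; the Step~$3(a)$ case is analogous.)

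\textbf{Charging and distinctness.} It remains to charge one internal edge to each $B_2$-cycle without collisions. I would traverse $P$ from the root to a leaf and, for each $B_2$-cycle $C$, reserve an internal edge produced by the companion cycle destroyed when $C$ is advanced. Lemma~\ref{b2b3structure} guarantees that at any moment at most two $B_2$-cycles are active and that, while some $B_2$- or $B_3$-cycle is active, no branch makes a further cycle active; combined with Fact~\ref{fact:attached:edges}, which renders distinct $C(6)$-cycles vertex-disjoint, this confines the competition for the internal edges donated by a single companion to at most the two currently active $B_2$-cycles. As each companion destruction yields three internal edges, two simultaneously active cycles can always be served by distinct edges, whereas cycles whose active intervals are disjoint consume different companions and hence automatically receive different edges.

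\textbf{Main obstacle.} The step I expect to be hardest is precisely this charging. A single branch $Q$ may advance $C$ by two attached edges at once (as in the jump $C(4)\to C(6)$ of the canonical $B_2$-pattern) and may touch a second active $B_2$-cycle at the same time, so one must verify that the three internal edges of the consumed companion always suffice to give every active cycle its own edge, and that an edge reserved for $C$ does not silently reappear as a cycle or attached edge of a later $B_2$- or $B_3$-cycle. Making the activity window precise through Lemma~\ref{b2b3structure} and excluding overlaps through Fact~\ref{fact:attached:edges} is where the real work lies; the creation of the internal edges themselves, i.e.\ the local computation behind Figure~\ref{fig:internal:edge}, is routine once the companion cycle has been identified.
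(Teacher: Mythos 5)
There is a genuine gap, and it sits exactly where you declare the work to be ``routine.'' Your argument rests on the claim that whenever a $D$-branch $Q$ advances an active $B_2$-cycle $C$ by consuming a companion live $6$-cycle $C'$, the destruction of $C'$ proceeds as in Fig.~\ref{fig:internal:edge} and forces three alternate cycle edges of $C'$, each flanked by two selected attached edges, hence three internal edges per companion. That is false in general: it holds only when $C'$ is a $C(4)$- or $C(5)$-cycle in the favourable configurations. The paper's proof is an exhaustive enumeration of \emph{all} configurations reachable via Step~$3(a)$ and Step~$3(a')$ (Fig.~\ref{step3a} and Fig.~\ref{6cyclecases}), and in configurations \subref{6cycle-c}, \subref{6cycle-f}, \subref{6cycle-i} and \subref{6cycle-j} your claim breaks down. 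For instance, in configuration \subref{6cycle-j} the branch forces only two cycle edges of $C'$, and each of them has only one adjacent edge in $E_P$ at that moment (the other neighbours are still unforced), so \emph{zero} internal edges are produced on the spot; in configuration \subref{6cycle-i} the two forced cycle edges likewise fail the ``two adjacent edges in $E_P$'' test. Recovering even a single internal edge in these cases requires the delicate arguments the paper devotes to them: tracking whether the attached edges of $C'$ were selected by $A$- or $D$-branches, retroactively promoting a previously selected attached edge to internal status once the branch gives it a second selected neighbour (configuration \subref{6cycle-c} and the $e_1$/$e_2$ analysis around Fig.~\ref{6cycle-j-application}), splitting on the direct transition $C(3)\to C(4)$ versus $C(3)\to C(5)$ versus $C(3)\to C(6)$, and the separate Claim handling configuration \subref{6cycle-f} together with the extension patterns of Fig.~\ref{patterns}. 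None of this is present in your proposal, and your ``three internal edges per companion'' premise would make the charging trivially work where in fact it needs these case-by-case rescues.

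Your high-level skeleton does match the paper's: segment $P$ into maximal subpaths on which $B_2$-cycles are active, use Lemma~\ref{b2b3structure} to bound the number of simultaneously active cycles by two and to keep the subpaths disjoint, and argue that each subpath supplies as many internal edges as it has active $B_2$-cycles. But you have also inverted where the difficulty lies: the distinctness/charging step you flag as the main obstacle follows comparatively easily from Lemma~\ref{b2b3structure} and the disjointness of the subpaths, whereas the existence of the internal edges in the degenerate configurations --- which you dismiss as a local computation behind Fig.~\ref{fig:internal:edge} --- is the core of the proof and is missing.
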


By definition, we know that any $B_2$-cycle becomes $C(3)$ before becoming $C(6)$. Hence, from Lemma~\ref{c3andmore} we get that any $B_2$-cycle has three attached edges selected by $D$-branches. Since an internal edge can be neither a $B$-attached edge nor an edge selected by $A$- or $B$-branches, from   Lemma~\ref{lemma4b2}  and from Lemma~\ref{lemma4b1} we can conclude that if on a path $P$ there are $b_1$ $B_1$-cycles and $b_2$ $B_2$-cycles, then there exist at least $4b_1+4b_2$ edges neither selected by  $A$- nor $B$-branches. To complete the proof we have to consider the case of $B_3$-cycles.

\begin{lemma}[$B_3$-cycles]
Let $P$ be a single path of the backtrack tree and suppose that we have a $B_3$-cycle, say $C$, somewhere on $P$. Then it holds that (1) two attached edges of $C$ have been selected by $D$-branches and (2) there exist two additional edges that are either internal edges or they are selected by Step~2. Moreover, the additional edges are pairwise disjoint for all such $C(6)$-cycles.
\label{lemma4b3}
\end{lemma}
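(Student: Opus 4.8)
The plan is to prove the two assertions separately, dispatching (1) by the free-vertex/degree argument already used for Lemma~\ref{c3andmore} and reserving the real work for (2). Recall that a $B_3$-cycle reaches $C(6)$ as $C(0)\to C(2)\to C(4)\to C(6)$ (possibly through $C(5)$), the first transitions being $A$-branches and never passing through $C(3)$. First I would show that the final transition $C(4)\to C(6)$ cannot be an $A$-branch: by the analysis behind Fig.~\ref{6cycle-a-branch}, an $A$-branch that affects a live 6-cycle forces at least four of its vertices to be free, whereas a $C(4)$-cycle has only two free vertices. It is not a $B$-branch either, since $B$-branches never alter the count of another live 6-cycle (as noted in the proof of Lemma~\ref{c3andmore}). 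Because the transition leaves $C$ live, Lemma~\ref{c3andmore} forces it to be a $D$-branch, and the two attached edges it selects yield assertion (1).

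For (2) I would analyse the branch $Q$ that realises $C(4)\to C(6)$. By Lemma~\ref{c3andmore}, $Q$ is a $D$-branch, performed either in Step~$3(a)$ (on a 4-cycle) or in Step~$3(a')$ (on a live 6-cycle). Since forcing or deleting any cycle edge of $C$ would immediately make $C$ dead, the structure $Q$ branches on must be disjoint from the cycle edges of $C$, so the two remaining attached edges of $C$ can only be forced as a consequence of the Step~1 cascade triggered by $Q$. I would then trace this cascade at the two free vertices of $C$. The generic situation is the one depicted in Fig.~\ref{fig:internal:edge}: adding the chosen edge to $F$ collapses the neighbouring 4- or 6-cycle into a forced alternating path, several of whose forced edges are each flanked by two edges of $E_P$ and are therefore internal by Definition~\ref{def:internal:edge}(i); I would charge two of them to $C$. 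In the boundary situation where the recursion instead bottoms out at Step~2 before this collapse is completed, the two required edges are delivered by the spanning-tree selection and are internal by Definition~\ref{def:internal:edge}(ii).

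The point making this count legitimate is the essential property of internal edges noted just before the lemma: an internal edge is never a $B$-attached edge nor selected by an $A$- or $B$-branch, so the two edges charged to $C$ are automatically disjoint from the attached $D$-edges of (1) and from every edge counted for $B_1$- and $B_2$-cycles (the latter via Fact~\ref{fact:attached:edges} and Lemma~\ref{lemma4b2}). For disjointness among the $B_3$-cycles themselves I would invoke Lemma~\ref{b2b3structure}: at most one $B_3$-cycle is active at any moment, and no branch can activate a second while one is active, so the active windows of distinct $B_3$-cycles are disjoint. Since the two internal edges are identified precisely while $C$ is active (i.e.\ $C(4)$ or $C(5)$), they cannot be claimed by any other $B_3$-cycle, which gives the final pairwise-disjoint clause.

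The hard part will be the geometric case analysis hidden in the second paragraph: showing that whenever $Q$ forces the last two attached edges of a live $C(4)$-cycle $C$, the structure $Q$ acts on is necessarily arranged so that the Step~1 cascade produces at least two genuine internal edges. In particular I expect to spend most effort ruling out degenerate adjacencies that could spoil the two $E_P$-neighbours required by Definition~\ref{def:internal:edge}(i) — for instance an attached edge shared between $C$ and the triggering cycle (excluded by Fact~\ref{fact:attached:edges}), or the off-cycle endpoints of $C$'s two attached edges lying on $C$ or coinciding — and in cleanly separating this generic case from the terminal Step~2 case while checking that the two sources never supply the same edge.
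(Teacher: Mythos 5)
Your part (1) is essentially the paper's: Lemma~\ref{c3andmore} applied to the transitions out of $C(4)$ gives the two $D$-selected attached edges. The problem is part (2), where you charge the two additional edges to the $D$-branch that completes $C(4)\to C(6)$ and hope its Step~1 cascade yields two internal edges. The paper's own case analysis for $B_2$-cycles (proof of Lemma~\ref{lemma4b2}, Fig.~\ref{6cyclecases}) shows this cannot work in general: a Step~$3(a')$ branch that advances a live 6-cycle can land in configurations such as (f), (i) or (j) of Fig.~\ref{6cyclecases}, which are only guaranteed to produce a \emph{single} internal edge after considerable effort --- that is exactly why Lemma~\ref{lemma4b2} claims only $b_2$ internal edges for $b_2$ $B_2$-cycles rather than $2b_2$. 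Your ``generic situation'' (Fig.~\ref{fig:internal:edge}) is the favourable configuration; the degenerate ones you defer to a ``hard geometric case analysis'' are not a technicality but the place where the count of two fails, so the proposal has a genuine gap.

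The paper takes a different route: the two additional edges are charged not to the branch finishing $C(4)\to C(6)$ but to the \emph{activation} of $C$. The auxiliary Lemma~\ref{b3properties} --- a substantial structural result with its own long proof --- shows that the only way a $B_3$-cycle becomes active is via an $A$-branch transition $C(2)\to C(4)$ that is \emph{immediately followed} by a Step~$3(a)$ branch acting on a 4-cycle $\tilde C$. The two additional edges are then taken from $\tilde C$: if the recursive call adds the chosen edge to $F$, the 4-cycle $\tilde C$ becomes isolated and Step~2 later selects two of its cycle edges (internal by Definition~\ref{def:internal:edge}, case $(ii)$); if the chosen edge is deleted, three cycle edges of $\tilde C$ are forced by Step~1 and are internal by case $(i)$. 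You state the transition pattern $C(0)\to C(2)\to C(4)\to C(6)$ as a ``recall'', but that pattern --- and above all the forced Step~$3(a)$ follow-up, which you never mention and which is the actual source of the two edges --- is precisely what has to be proved before part (2) can go through.
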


Now we are ready to complete the proof. 
Let $P$ be a single path of the backtrack tree and suppose there are $b_i$ $B_i$-cycles, for $i=1,2,3,$ on $P$. From Lemma~\ref{lemma4b1} and Lemma~\ref{lemma4b2} we already know that there exist at least $4b_1+4b_2$ edges neither selected by  $A$- nor $B$-branches. Moreover, for $B_3$-branches we can conclude by Lemma~\ref{lemma4b3} and  by Lemma~\ref{b2b3structure} that there exist additionally at least $4b_3$ edges which are neither selected by $A$- nor by $B$-branches. Since, again by Lemma~\ref{b2b3structure} and Fact~\ref{fact:attached:edges}, the set of those $4b_3$ edges is disjoint with the edges assigned for $B_1$- and $B_2$-cycles, Proposition~\ref{dbranches} follows.\qed

% To prove Proposition~\ref{dbranches} we have to conclude that along $P$ there are $4b_1+4b_2+4b_3$ edges which are neither selected by $A$- nor by $B$-branches. Indeed, by Lemma~\ref{lemma4b1} we get that for $B_1$-branches there are $4b_1$ such edges. 
% 
% For $B_2$ we know from Lemma~\ref{c3andmore} that there are at least $3b_2$ edges selected by $D$-branches attached to $C(6)$-cycles causing the $B_2$-branches and from Lemma~\ref{lemma4b2} we have that there exist $b_2$ internal edges which are neither selected by $A$- nor by $B$-branches. Finally, for $B_3$-branches we can conclude by Lemma~\ref{lemma4b3} that there exist at lest $4b_3$ edges which are neither selected by $A$- nor by $B$-branches. Since, by Lemma~\ref{b2b3structure} all those edges  are pairwise disjoint, this completes the proof of Proposition~\ref{dbranches}.

\subsection{Proofs of Lemmata~\ref{b2b3structure}, \ref{lemma4b2} and \ref{lemma4b3}} 
\label{subsection:proofs}

\begin{proof}[Proof of Lemma~\ref{b2b3structure}]
We first claim that cycles can be activated only by $A$-branches. By definition a $B_2$-cycle $C$ changes on $P$ from $C(0)$ to $C(3)$ and then from $C(3)$ to $C(6)$ and it is activated, when it becomes $C(3)$. Due to Lemma~\ref{onlyd} the last three attached edges are selected by $D$-branches. Since a $B_2$-cycle has more than two attached edges selected by $A$-branches, the first three attached edges have to be selected by $A$-branches when the cycle was transformed form $C(0)$ to $C(3)$ and the claim follows. In case of $B_3$-cycles that, recall never become $C(3)$, we can conclude that such a cycle $C$ gets activated by a direct transition  $C(i)\rightarrow C(j)$, with $0\le i\le 2$ and $4\le j\le 6$. Since $C$ has more than two attached edges selected by $A$-branches and since, from Lemma~\ref{onlyd}, the last $6-j$ attached edges have to be selected by $D$ branches, we get that the transition $C(i)\rightarrow C(j)$ has to be performed due to an $A$-branch $Q$.

The second part {\it (2)} of the lemma follows now easily. 
%If $Q$ is an $A$-branch, there is no $C(i)$-cycle, with $i\geq 3$, right before $Q$ is performed, which remains live after this branch. This follows from two facts. 
Firstly, as shown in the proof of Lemma~\ref{ref:lemma:d:branch} (cf. Fig.~\ref{6cycle-a-branch}), if an $A$-branch affects a $C(j)$-cycle then $j\le 2$. Secondly, due to the prioritisation used in Step~$3(a')$, if the algorithm selects in this step a $C(j)$-cycle, with $j\le 2$, then there exists no $C(i)$-cycle, with $i>j$. This implies that, at the moment $Q$ is performed, there is no active $B_2$- or $B_3$-cycle. Consequently, no $A$-branch can be performed when a $B_2$- or $B_3$-cycle remains active and thus, no other cycle can be activated at that time. Therefore all $B_2$- and $B_3$-cycles which are active at the same time have to be activated by a single $A$-branch $Q$.

A proof of the first part {\it (1)} requires more involved arguments. We show first that by an $A$-branch only one $B_3$-cycle can be activated. For this aim we use the property that the only way in which the algorithm makes a $B_3$-cycle active is via an $A$-branch transition $C(2)\rightarrow C(4)$ which has to be directly followed by a branch due to Step~$3(a)$. This property is stated in Lemma~\ref{b3properties} and proven in the next subsection. 

Thus, activating a $B_3$-cycle $C$, an $A$-branch $Q$ attaches two of three selected edges to $C$. Since attached edges of $C(6)$-cycles are disjoint, these two edges do not activate another cycle. Moreover, from the property above we know that the third edge has to be attached to a 4-cycle, let us call it $\tilde{C}$, which forces to perform Step~$3(a)$ acting on $\tilde{C}$ (see Fig.~\ref{step3a} for configurations resulting by application of Step~$3(a)$ and the recursive calls). To see that the third edge cannot be attached to some  $B_2$- or $B_3$-cycle $C'$, too, we proceed as follows. 

Assume, for a contradiction, that the 4-cycle $\tilde{C}$ and a $B_2$- or $B_3$-cycle $C'$ share an attached edge $e$ selected by $Q$ and that, by the property above, the next branch performed by the algorithm is due to Step~$3(a)$ acting on $\tilde{C}$. We get that  $\tilde{C}$ and $C'$ also share two cycle edges adjacent to $e$. Further there is a vertex in which the cycles split up, meaning there is a cycle edge of $C'$ which is adjacent to  $\tilde{C}$. Since Step~$3(a)$ acting on $\tilde{C}$ is performed next, two cases can occur. Firstly, if the algorithm is recursively called on $G, F\cup\{yz\}$, where $yz$ is an edge chosen in Step~$3(a)$ (see Fig.~\ref{step3a:first:call}), all attached edges of $\tilde{C}$ are selected and thereby also cycle edge of $C'$. Hence, $C'$ becomes dead and cannot become $C(6)$ anymore.  Secondly, if the algorithm is recursively  called on $G\setminus\{yz\}, F$ (see Fig.~\ref{step3a:second:call}), the cycle edges of $\tilde{C}$ are either selected or removed, thus making  
$C'$ dead, too.  

Thus, we summarize that a single $A$-branch activates a $B_3$-cycle by selecting two attached edges,  $B_2$- and $B_3$-cycles do not share any attached edge, and finally that the third edge selected by the $A$-branch  cannot be an attached edge of any further $B_2$- or $B_3$-cycle.
Therefore we can conclude that by a single $A$-branch only one $B_3$-cycle can be activated.

It remains to show, that at most two $B_2$-cycles can be activated by an $A$-branch $Q$. Assume, for a contradiction, that $Q$ activates three $B_2$ cycles. 
From Fact~\ref{fact:attached:edges} we know that the cycles have to be pairwise disjoint, i.\,e.\ they cannot have common edges.
%It is easy to see that the cycles have to be pairwise disjoint, i.\,e.\ they cannot have common edges. If not, then the first transition which makes one of such cycles a $C(6)$-cycle makes at the same time the other cycle dead before it becomes  $C(6)$. 
%This contradicts that the second cycle is a  $B_2$-cycle.

\begin{figure}[htbp]
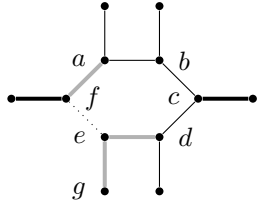

  \centering
    \begin{tikzpicture}[node distance=0.6*1, scale=1]
      \input{6cycle}

      \input{a-in-c2-labels}

      \foreach \source/ \dest in {a/g, b/h, d/j, e/k, a/b, b/c, c/d}
        \path[edge] (\source) -- (\dest);
      \foreach \source/ \dest in {f/l, c/i}
        \path[edgef] (\source) -- (\dest);
      \foreach \source/ \dest in {e/f}
        \path[edgerem] (\source) -- (\dest);
      \foreach \source/ \dest in {f/a, d/e, e/k}
        \path[edgesel] (\source) -- (\dest);
    \end{tikzpicture}
  \caption{An $A$-branch acting on  a $C(2)$-cycle.}
  \label{fig:thtee:b:two}
\end{figure}

Next, since $Q$ selects only three edges, all of the three $B_2$-cycles have to be $C(2)$ right before the branch and $Q$ acts on a fourth $C(2)$-cycle, which in turn will become dead after the branch (see Fig.~\ref{fig:thtee:b:two}).
%If one of the three $B_2$-cycles is not $C(2)$ then one can show easily the cycles are not disjoint.
% Now, the crucial property is, if an $A$-branch affects a $C(2)$-cycle, then the two selected edges have to be of distance three, as shown in Fig.~\ref{6cycle-a-branch-in-c2--6cycle-a-branch-for-c5}(a). 
However, for this branch case it is not possible to attach the three edges selected by $Q$ at three disjoint $C(2)$-cycles. In particular the edges $fa$ and $ed$ in Fig.~\ref{fig:thtee:b:two} cannot be attached to two disjoint $C(2)$-cycles. Therefore, we get a contradiction and conclude that we can activate at most two $B_2$-cycles by a single $A$-branch $Q$. Note, that by a such branch it can still be possible to get more than two non-disjoint $C(3)$-, $C(4)$- or $C(5)$-cycles.
\end{proof}

\begin{proof}[Proof of Lemma~\ref{lemma4b2}]
Let $P$ be a single path of the backtrack tree. 
Due to Lemma~\ref{b2b3structure} we know, that at most two $B_2$-cycles can be active at the same time and have to become dead, before other $B_2$-cycles can be activated. Let $Q$ be a fixed branch which activates one or two $B_2$-cycles. We consider the maximal subpath of the path $P$ which starts with $Q$ and along which at least one of the active $B_2$-cycles is still live. 
%which are performed until these $B_2$-cycles become dead.  
Along this subpath we examine all subsequent branches which transit an active $B_2$-cycle from $C(i)$ to $C(j)$, with $3\le i < j \le 6$. 
%are performed until these $B_2$-cycles become dead.  
We denote them by $Q_1,Q_2,\ldots,Q_t$. 
From Lemma~\ref{onlyd} we know that all these branches  are $D$-branches. An additional important property is that the branches $Q_\ell$ are performed by Step~$3(a)$ or $3(a')$, but not by Step~$3(b)$. 

Our aim is to show that during by $Q_1,Q_2,\ldots,Q_t$ at least two internal edges are selected if $Q$ activates two $B_2$-cycles and at least one internal edge is selected if $Q$ activates a single $B_2$-cycle. By the property above, to prove this claim it is sufficient  to analyze changes due to any possible branch performed by Step~$3(a)$ or $3(a')$. 
A resulting situation after completing such a branch, including subsequent iterations of Step~1, will be called a {\em configuration}, for short.

%the branches $Q_1,Q_2,\ldots,Q_t$, or if only one $B_2$-cycle is activated, at least one internal edge is selected. for two $B_2$-cycles activated by $Q$ at least two internal edge are selected by the branches $Q_1,Q_2,\ldots,Q_t$, or if only one $B_2$-cycle is activated, at least one internal edge is selected.

%Each branching forms two subproblems. Let the term {\em configuration} denote the changes made in one of the subproblems.
%-----------------[node distance=0.3*1, scale=1]
%\begin{floatingfigure}[r]{3cm}
\begin{figure}[htbp]
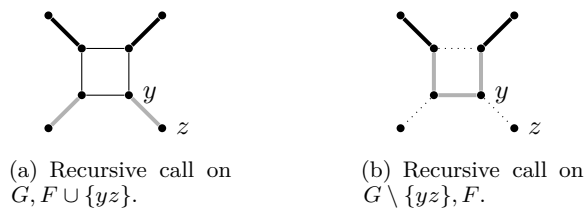

  \centering
  \subfigure[Recursive call on $G, F\cup\{yz\}$.]{
  \hspace{2mm}
  \begin{tikzpicture}[node distance=0.5, scale=1.4]
    \node[vertex] (a1) at (0,0) {};
    \foreach \ind in {1}
     \input{4cycle-comp}
     \foreach \source/ \dest in {a1/b1, b1/c1, c1/d1, d1/a1}
        \path[edge] (\source) -- (\dest);
     \foreach \source/ \dest in {a1/e1, b1/f1}
          \path[edgef] (\source) -- (\dest);
    \foreach \source/ \dest in {c1/g1, d1/h1}
       \path[edgesel] (\source) -- (\dest);
  \end{tikzpicture}
  \label{step3a:first:call}
  \hspace{2mm}
  }
  \hspace*{1.5cm}
  \subfigure[Recursive call on $G\setminus\{yz\}, F$.]{
  \hspace{2mm}
  \begin{tikzpicture}[node distance=0.5, scale=1.4]
    \node[vertex] (a1) at (0,0) {};
    \foreach \ind in {1}
     \input{4cycle-comp}
     \foreach \source/ \dest in {a1/e1, b1/f1}
          \path[edgef] (\source) -- (\dest);
    \foreach \source/ \dest in {b1/c1, c1/d1, d1/a1}
       \path[edgesel] (\source) -- (\dest);
   \foreach \source/ \dest in {a1/b1, c1/g1, d1/h1}
      \path[edgerem] (\source) -- (\dest);
  \end{tikzpicture}
  \label{step3a:second:call}
  \hspace{2mm}
  }
%   \begin{tikzpicture}[node distance=0.35, scale=1.4]
%     \input{4cycle}
%     \foreach \ind in {2}
%       \foreach \source/ \dest in {a\ind/b\ind, b\ind/c\ind, c\ind/d\ind, d\ind/a\ind}
%         \path[edge] (\source) -- (\dest);
%     %\foreach \source/ \dest in {c1/g1, d1/h1}
%     %  \path[edge] (\source) -- (\dest);
%     \foreach \source/ \dest in {c2/g2, d2/h2, b3/c3, c3/d3, d3/a3}
%        \path[edgesel] (\source) -- (\dest);
%    \foreach \source/ \dest in {a3/b3, c3/g3, d3/h3}
%       \path[edgerem] (\source) -- (\dest);
%   \end{tikzpicture}
  \caption{Configurations resulting from an application of Step~$3(a)$ and recursive calls on $G, F\cup\{yz\}$, resp. on $G\setminus\{yz\}, F$.}
  \label{step3a}
%\end{floatingfigure}
\end{figure}

In our proof we will therefore analyze all possible configurations which can occur due to applications of Step~$3(a)$ or Step~$3(a')$ and subsequent iterations of Step~1. 
Figure~\ref{step3a} shows the only two configurations due to Step~$3(a)$, Fig.~\ref{6cyclecases} shows {\em all} possible pairwise non-equivalent configurations due to Step~$3(a')$. Speaking more precisely, the figures show {\em fragments} of configurations illustrating the essential parts of graph $G$ and forced edges $F$ after a  recursive call either on $G,F\cup\{yz\}$ or on $G\setminus\{yz\},F$, where $yz$ is the edge chosen  in Step~$3(a)$, resp. Step~$3(a')$.
The configurations will be discussed in detail in the following points. Note that the number of possibilities for cases involving $C(3)$ and $C(4)$ is reduced due to our modification of the algorithm.

%\vspace*{-6mm}
\begin{figure}[htbp]
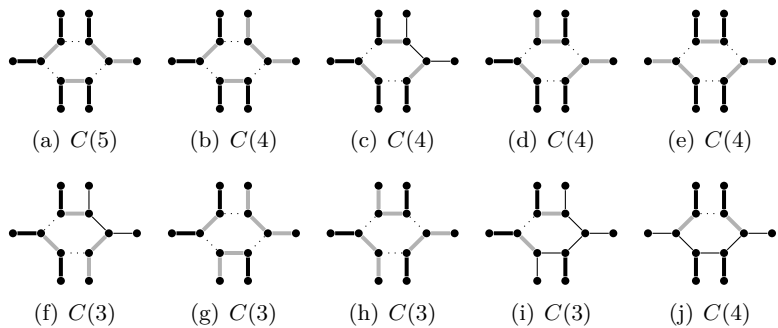

  \centering
  \subfigure[$C(5)$]{
    \begin{tikzpicture}[node distance=0.25]
      \input{6cycle}

      \foreach \source/ \dest in {d/j, e/k, f/l, a/g, b/h}
        \path[edgef] (\source) -- (\dest);
      \foreach \source/ \dest in {a/b, c/d, e/f}
        \path[edgerem] (\source) -- (\dest);
      \foreach \source/ \dest in {b/c, d/e, f/a, c/i}
        \path[edgesel] (\source) -- (\dest);
    \end{tikzpicture}
    \label{6cycle-a}
  }
  %\hspace{0.1cm}
  \subfigure[$C(4)$]{
    \begin{tikzpicture}[node distance=0.25]
      \input{6cycle}

      \foreach \source/ \dest in {d/j, e/k, f/l, a/g}
        \path[edgef] (\source) -- (\dest);
      \foreach \source/ \dest in {a/b, c/d, e/f}
        \path[edgerem] (\source) -- (\dest);
      \foreach \source/ \dest in {b/c, d/e, f/a, c/i, b/h}
        \path[edgesel] (\source) -- (\dest);
    \end{tikzpicture}
    \label{6cycle-b}
  }
  \subfigure[$C(4)$]{
    \begin{tikzpicture}[node distance=0.25]
      \input{6cycle}

      \foreach \source/ \dest in {d/j, e/k, f/l, a/g}
        \path[edgef] (\source) -- (\dest);
      \foreach \source/ \dest in {b/c, c/i, b/h}
        \path[edge] (\source) -- (\dest);
      \foreach \source/ \dest in {d/e, f/a}
        \path[edgerem] (\source) -- (\dest);
      \foreach \source/ \dest in {a/b, c/d, e/f}
        \path[edgesel] (\source) -- (\dest);
    \end{tikzpicture}
    \label{6cycle-c}
  }
  \subfigure[$C(4)$]{
    \begin{tikzpicture}[node distance=0.25]
      \input{6cycle}

      \foreach \source/ \dest in {d/j, e/k, f/l, b/h}
        \path[edgef] (\source) -- (\dest);
      \foreach \source/ \dest in {b/c, d/e, f/a}
        \path[edgerem] (\source) -- (\dest);
      \foreach \source/ \dest in {a/b, c/d, e/f, c/i, a/g}
        \path[edgesel] (\source) -- (\dest);
    \end{tikzpicture}
    \label{6cycle-d}
  }
  \subfigure[$C(4)$]{
    \begin{tikzpicture}[node distance=0.25]
      \input{6cycle}

      \foreach \source/ \dest in {d/j, e/k, b/h, a/g}
        \path[edgef] (\source) -- (\dest);
      \foreach \source/ \dest in {b/c, d/e, f/a}
        \path[edgerem] (\source) -- (\dest);
      \foreach \source/ \dest in {a/b, c/d, e/f, c/i, f/l}
        \path[edgesel] (\source) -- (\dest);
    \end{tikzpicture}
    \label{6cycle-e}
  }

  \subfigure[$C(3)$]{
    \begin{tikzpicture}[node distance=0.25]
      \input{6cycle}

      \foreach \source/ \dest in {a/g, f/l, e/k}
        \path[edgef] (\source) -- (\dest);
      \foreach \source/ \dest in {b/c, b/h, c/i}
        \path[edge] (\source) -- (\dest);
      \foreach \source/ \dest in {d/e, f/a}
        \path[edgerem] (\source) -- (\dest);
      \foreach \source/ \dest in {a/b, c/d, e/f, d/j}
        \path[edgesel] (\source) -- (\dest);
    \end{tikzpicture}
    \label{6cycle-f}
  }
  \subfigure[$C(3)$]{
    \begin{tikzpicture}[node distance=0.25]
      \input{6cycle}

      \foreach \source/ \dest in {a/g, f/l, d/j}
        \path[edgef] (\source) -- (\dest);
      \foreach \source/ \dest in {}
        \path[edge] (\source) -- (\dest);
      \foreach \source/ \dest in {a/b, e/f, c/d}
        \path[edgerem] (\source) -- (\dest);
      \foreach \source/ \dest in {a/f, b/h, b/c, c/i, d/e, e/k}
        \path[edgesel] (\source) -- (\dest);
    \end{tikzpicture}
    \label{6cycle-g}
  }
  \subfigure[$C(3)$]{
    \begin{tikzpicture}[node distance=0.25]
      \input{6cycle}

      \foreach \source/ \dest in {b/h, f/l, d/j}
        \path[edgef] (\source) -- (\dest);
      \foreach \source/ \dest in {}
        \path[edge] (\source) -- (\dest);
      \foreach \source/ \dest in {a/f, b/c, d/e}
        \path[edgerem] (\source) -- (\dest);
      \foreach \source/ \dest in {a/b, e/f, c/d, a/g, c/i, e/k}
        \path[edgesel] (\source) -- (\dest);
    \end{tikzpicture}
    \label{6cycle-h}
  }
  \subfigure[$C(3)$]{
    \begin{tikzpicture}[node distance=0.25]
      \input{6cycle}

      \foreach \source/ \dest in {a/g, f/l, d/j}
        \path[edgef] (\source) -- (\dest);
      \foreach \source/ \dest in {b/h, b/c, c/i, c/d, d/e, e/k}
        \path[edge] (\source) -- (\dest);
      \foreach \source/ \dest in {a/f}
        \path[edgerem] (\source) -- (\dest);
      \foreach \source/ \dest in {a/b, e/f}
        \path[edgesel] (\source) -- (\dest);
    \end{tikzpicture}
    \label{6cycle-i}
  }
  \subfigure[$C(4)$]{
    \begin{tikzpicture}[node distance=0.25]
      \input{6cycle}

      \foreach \source/ \dest in {d/j, e/k, b/h, a/g}
        \path[edgef] (\source) -- (\dest);
      \foreach \source/ \dest in {c/d, e/f, d/e, c/i, f/l}
        \path[edge] (\source) -- (\dest);
      \foreach \source/ \dest in {a/b}
        \path[edgerem] (\source) -- (\dest);
      \foreach \source/ \dest in {b/c, f/a}
        \path[edgesel] (\source) -- (\dest);
    \end{tikzpicture}
    \label{6cycle-j}
  }
  %\hspace{0.5cm}
  \caption{All possible configurations, resulting from an application of Step~$3(a')$ and subsequent iterations of Step~1 on a $C(k)$-cycle such that, in the end, a new edge is added to a $B_2$-cycle $C(i)$ (not seen in the figure), with $3\le i\le k\le 5$.}
  \label{6cyclecases}
\end{figure}
%\vspace*{-6mm}

After performing of Step~$3(a)$ and the subsequent application of Step~1 two configurations can occur, as shown in Fig.~\ref{step3a}. Note, that a $B_2$-cycle $C(i)$, which becomes a $C(j)$-cycle at the end of these steps, is not seen in the figure. We call a 4-cycle {\em isolated}, if all cycle edges are in $G\setminus F$ and all adjacent edges are in $F$. Let us first consider the left configuration of the figure where the edge $yz$ is selected. In consequence another edge attached to the 4-cycle is selected by Step~1. Since the 4-cycle became isolated, two internal edges will be selected by Step~2 at the end of path $P$. 
%We will not count these edges twice because no other branch along $P$ can select these edges again. 
Let us consider the right configuration of Fig.~\ref{step3a} where the edge $yz$ is removed from $G$. In consequence a second attached edge and one cycle edge are removed and three of the cycle edges are selected by iterations of Step~1. These three cycle edges are internal edges. Thus, any branch $Q_\ell$ 
%some of the branches $Q_1,Q_2,\ldots, Q_t$ are 
performed due to Step~$3(a)$ selects at least two internal edges and we are done if such a branch exists in the  sequence $Q_1,Q_2,\ldots, Q_t$.

Suppose then, that none of the branches $Q_1,Q_2,\ldots, Q_t$ is performed due to Step~$3(a)$. Therefore,
%that the configurations for Step~$3(a)$ do not occur and assume applications of branches $Q_\ell$  
all of the branches  have to be performed by an application of Step~$3(a')$. Our aim is to show that if two $B_2$-cycles are active, at least one of the branches selects two internal edges, or two of them select at least one internal edge each, or otherwise if only one $B_2$-cycle is active, one of the branches selects at least one internal edge. We will examine all possible configurations resulting from an  application of Step~$3(a')$ and subsequent iterations of Step~1. The configurations are as shown in Fig.~\ref{6cyclecases}. If one of the branches $Q_\ell$ leads to a configuration \subref{6cycle-a}, \subref{6cycle-b}, \subref{6cycle-d}, \subref{6cycle-e}, \subref{6cycle-g}, or \subref{6cycle-h}, then it is obvious that $Q_\ell$ selects at least two internal edges (cf. also Fig.~\ref{fig:internal:edge}). Hence, we are done. 

Let us assume that a branch $Q_\ell$ yields configuration \subref{6cycle-c}. Obviously there is one internal edge in this configuration. To find the second internal edge, assume one of the four edges which are attached to the 6-cycle in \subref{6cycle-c} has been selected by a $D$-branch. Since this edge is adjacent to two unforced cycle edges before the configuration occurs, it was not considered as internal edge so far. By applying $Q_\ell$ that yields~\subref{6cycle-c}, the edge will be adjacent to another selected edge and can be counted as internal edge now. Hence, we have two internal edges selected for this configuration. Assume all four edges attached to the 6-cycle in \subref{6cycle-c} are selected by $A$-branches. Then the 6-cycle must have been at most $C(2)$ and became the $C(4)$ in \subref{6cycle-c} while activating the $B_2$-cycles by $Q$. Obviously this 6-cycle and a $B_2$-cycle cannot share any selected attached edges, thus two edges selected by $Q$ are attached to the 6-cycle in the 
configuration and the third edge selected by $Q$ can activate at most one $B_2$-cycle, but not two. Hence, in this case, for the single active $B_2$-cycle we assign the internal edge from configuration \subref{6cycle-c}.

The rest of the proof will handle the remaining cases, namely \subref{6cycle-f}, \subref{6cycle-i}, and \subref{6cycle-j}. We will proceed as follows. We fix a $B_2$-cycle activated by $Q$ and denote it by $C$. Then we analyze a sequence of configurations resulting from applications of branches $Q_1, Q_2,\ldots,Q_t$, until $C$ changes to $C(6)$.
% and finally becomes dead. 
Obviously, if one of the configurations \subref{6cycle-a}-\subref{6cycle-e}, \subref{6cycle-g}, \subref{6cycle-h} occurs in this sequence we are done as shown above. Our aim is to prove, that if none of them occurs, meaning that the sequence contains only configurations \subref{6cycle-f}, \subref{6cycle-i}, and \subref{6cycle-j}, then we can assign to $C$ one internal edge and this edge will not be counted twice. To this aim we will consider three cases, namely $C$ transforms from $C(3)$ directly to $C(4)$, or to $C(5)$, or to $C(6)$. We will denote these direct transformations as  $C(3)\to C(j)$, with $4\le j\le 6$, for short.

%Moreover, let $C'$ denote a $C(j)$-cycle, with $3\le j \le 5$ used by a branch realizing transition $C(3)\to C(i)$ which in turn will become dead after the branch is finished. Note that such a $C(j)$-cycle exists since the branch has to be done due to Step~$3(a')$.

Assume first the case $C(3)\to C(4)$ that means $C$ once becomes $C(4)$. Let $C'$ denote a $C(k)$-cycle, with $4\le k \le 5$, used by a branch $Q_{\ell}$ to perform the next transition of $C$ from $C(4)$ to $C(5)$ or to $C(6)$. Obviously, this means that the $C(k)$ will become dead after the branch is finished. Note that such a $C(k)$-cycle exists since the branch has to be done due to Step~$3(a')$. If $C'$ is a $C(5)$-cycle, then we are done, since performing branch $Q_\ell$ on $C(5)$ yields configuration \subref{6cycle-a}. Assume $C'$ is a $C(4)$-cycle. If performing branch $Q_\ell$ on $C(4)$ leads to one of the configurations \subref{6cycle-b}-\subref{6cycle-e}, we are done as well. Thus, the only case we have to consider is that $Q_{\ell}$ leads to configuration \subref{6cycle-j}. In this case it is easy to see that both edges selected by the configuration have to be attached to~$C$. In Fig.~\ref{6cycle-j-application} example cycles $C$ and $C'$  are shown such that the application of $Q_\ell$  results 
in configuration~\subref{6cycle-j}.

%\begin{floatingfigure}[r]{5.2cm}
\begin{figure}[htbp]
   \centering
   \begin{tikzpicture}[node distance=0.7, scale=2]
      \input{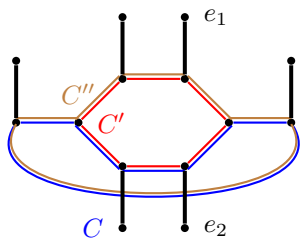}

      \node[vertex] (m) [above=of i] {};
      \node[vertex] (n) [above=of l] {};

%       \foreach \ind in {a,b,c,d,e,f,i,l} {
%         \node[node distance=-0.07] (\ind1) [below=of \ind] {};
%       }

      \node[label=right:$e_1$] at (h) {};
      \node[label=right:$e_2$] at (j) {};
      \node[label=above:$\color{brown} C''$] at (f) {};
      \node[label=right:$\color{red} C'$] at (f) {};
      \node[label=left:$\color{blue} C$] at (k) {};
      %\node[label=right:$e_3$] at (m) {};

      \tikzstyle{edge1} = [line width=0.8pt,-,color=blue]
      \tikzstyle{edge2} = [line width=0.8pt,-,color=red]
      \tikzstyle{edge3} = [line width=0.8pt,-,color=brown]
%       \foreach \source/ \dest in {a/b, b/c, f/a, c/i, l/f}
%         \draw[edge3] (\source) to[bend right=10] (\dest);
%       \draw[edge3] (i) to [out=-60,in=-120] (l);

      \draw[edge1] ($(c)+(0.02,-0.02)$) to ($(d)+(0.02,-0.02)$);
      \draw[edge1] ($(d)+(0,-0.03)$) to ($(e)+(0,-0.03)$);
      \draw[edge1] ($(e)+(-0.02,-0.02)$) to ($(f)+(-0.02,-0.02)$);
      \draw[edge1] (l) to (f);
      \draw[edge1] (c) to (i);
      \draw[edge1] (i) to [out=-60,in=-120] (l);

      \draw[edge3] ($(a)+(0,0.03)$) to ($(b)+(0,0.03)$);
      \draw[edge3] ($(b)+(0.02,0.02)$) to ($(c)+(0.02,0.02)$);
      \draw[edge3] ($(f)+(-0.02,0.02)$) to ($(a)+(-0.02,0.02)$);
      \draw[edge3] ($(l)+(0,0.03)$) to ($(f)+(0,0.03)$);
      \draw[edge3] ($(c)+(0,0.03)$) to ($(i)+(0,0.03)$);
      \draw[edge3] ($(i)+(0,-0.01)$) to [out=-60,in=-120] ($(l)+(0,-0.01)$);

      \foreach \source/ \dest in {d/j, e/k, b/h, a/g, l/n, i/m}
        \path[edgef] (\source) -- (\dest);

      \foreach \source/ \dest in {c/d, e/f, d/e, a/b, b/c, f/a}
        \draw[edge2] (\source) to (\dest);

    \end{tikzpicture}
   \caption{$B_2$- cycle $C$ (blue), 6-cycle $C'$ (red) and $C(4)$-cycle $C''$ (brown). }
   \label{6cycle-j-application}
%\end{floatingfigure}
\end{figure}

Thus, both cycles $C$ and $C'$ are $C(4)$ directly before branch $Q_\ell$ has lead to configuration~\subref{6cycle-j}. But, since $C$ is a $B_2$-cycle, it was $C(3)$  before and made the transition $C(3)\rightarrow C(4)$. Assume all four selected attached edges of $C'$ have been selected by $A$-branches. Then $C'$ was already $C(4)$ when $C$ was $C(3)$ and the transition $C(3)\rightarrow C(4)$ has to be made by a $C(4)$ or $C(5)$ in one of the configurations \subref{6cycle-a}-\subref{6cycle-e}, but not in \subref{6cycle-j}. In this case we are done. Assume next that three selected attached edges of $C'$ have been selected by $A$-branches and one edge by a $D$-branch. Assume that one of the edges shared with $C$, e.\,g. edge $e_2$, was selected by a $D$-branch. Then the other three selected edges attached to $C$ have to be selected by $A$-branches and we have a $C(4)$ cycle $C''$ with four attached edges selected by $A$-branches. Thus, the transition $C(3)\rightarrow C(4)$ of cycle $C$ again has to be due to 
one of the configurations \subref{6cycle-a}-\subref{6cycle-e}. Last we assume that one of the edges which are not shared with $C$, e.\,g. edge $e_1$, was selected by a $D$-branch. Then this edge is neither a $B$-attached edge nor was it counted as internal edge before, since it is adjacent to two unforced edges before the changes in configuration \subref{6cycle-j}. Thus, we can consider this edge to be the internal edge we are looking for. Moreover, due to configuration \subref{6cycle-j} one can only select edges attached to a single $B_2$-cycle, leaving the other $B_2$-cycle unchanged. Therefore we have found one internal edge for a $B_2$-cycle and this edge will not be counted for a second activated $B_2$-cycle, if such a cycle exists.

If the case $C(3)\to C(5)$ occurs, meaning that $C$ once becomes $C(5)$, then only configuration \subref{6cycle-a} can be applied to change $C(5)$ further to $C(6)$ and we are done. 
Thus, the only case which remains to be considered is a direct transition of $C$ from $C(3)$ to $C(6)$ without becoming $C(4)$ or $C(5)$ in between. Recall that an occurrence of the configurations \subref{6cycle-a}-\subref{6cycle-e}, \subref{6cycle-g} and \subref{6cycle-h} selects three internal edges, so that only \subref{6cycle-f}, \subref{6cycle-i} and \subref{6cycle-j} are left to discuss for this change.

\begin{figure}[htbp]
%\begin{wrapfigure}{r}{0pt}
   \centering
    \begin{tikzpicture}[node distance=0.3]
     \input{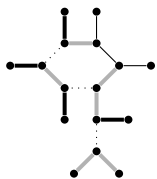}
   
      \foreach \source/ \dest in {a/g, f/l, e/k}
        \path[edgef] (\source) -- (\dest);
      \foreach \source/ \dest in {b/c, b/h, c/i}
        \path[edge] (\source) -- (\dest);
      \foreach \source/ \dest in {d/e, f/a}
        \path[edgerem] (\source) -- (\dest);
      \foreach \source/ \dest in {a/b, c/d, e/f, d/j}
        \path[edgesel] (\source) -- (\dest);
      \node[vertex] (c1) [right=of j] {};
      \node[vertex] (d1) [below=of j] {};
      \node[vertex] (e1) [below left=of d1] {};
      \node[vertex] (f1) [below right=of d1] {};
      \foreach \source/ \dest in {j/c1}
        \path[edgef] (\source) -- (\dest);
      \foreach \source/ \dest in {j/d1}
        \path[edgerem] (\source) -- (\dest);
      \foreach \source/ \dest in {d1/e1, d1/f1}
        \path[edgesel] (\source) -- (\dest);
    \end{tikzpicture}
    \caption{Configuration \subref{6cycle-f} extended by a pattern which forces selection or removal of additional edges during iterations in Step~1.  }
   \label{6cycle-f-extension}
\end{figure}
%\end{wrapfigure}
We can conclude, that occurrence of \subref{6cycle-i} or \subref{6cycle-j} cannot enforce the change $C(3)\rightarrow C(6)$, since in the configurations at most two edges are selected. Due to configuration \subref{6cycle-f} at least one internal edge is selected. On the other hand, three additional edges are selected due to this configuration. Thus, at most one $B_2$-cycle changes from $C(3)$ to $C(6)$ and an internal edge for this cycle is found. 
However, we have to be careful in this case. As explained at the beginning of the proof,  Fig.~\ref{6cyclecases} shows only fragments of configurations illustrating the essential parts of  $G$ and $F$. Thus, in particular some graphs in Fig.~\ref{6cyclecases} do not show necessarily all edges selected  during iterations in Step~1. In case of \subref{6cycle-f}, for example, if an adjacent edge to a newly  selected one is in $F$ then we can obtain a configuration like the one presented in Fig.~\ref{6cycle-f-extension}. Thus, it is possible that configuration \subref{6cycle-f} can be extended by some patterns to select more edges than just three. Some of these edges can be attached to further $B_2$-cycles. Below, we analyse this case in detail. The reader who is not interested in the details can skip it.

%A complete discussion on this matter is given in the appendix.
%%-------------------

%{\bf Further details to the proof of  Lemma~\ref{lemma4b2}.}
%{\bf Further details to the proof.}
\begin{claim}
Suppose a $B_2$-cycle $C$ changes from $C(3)$ directly to $C(6)$ due to a branch yielding configuration {\em \subref{6cycle-f}}. Then the branch selects at least one internal edge that we can assign to $C$. Moreover, if some other $B_2$-cycle is still active then there exists another branch selecting an additional internal edge.
\end{claim}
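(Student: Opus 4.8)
The plan is to track the single branch $Q$ that produces configuration~\subref{6cycle-f} and realizes the transition $C(3)\to C(6)$ of the fixed $B_2$-cycle $C$, and to exhibit one concrete internal edge of $Q$ chargeable to $C$ that is never charged to a second $B_2$-cycle. Write $C'$ for the live $C(3)$-cycle on vertices $a,b,c,d,e,f$ on which Step~$3(a')$ acts, with forced attached edges $ag,ek,fl$; then $Q$ selects the cycle edges $ab,cd,ef$ and the attached edge $dj$, together with whatever further edges are forced by the ensuing iterations of Step~1. Since $Q$ is performed in Step~$3(a')$ on the live $C(3)$-cycle $C'$, Lemma~\ref{onlyd} shows $Q$ is a $D$-branch. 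As $C$ advances from $C(3)$ to $C(6)$, three of the edges selected by $Q$ are attached edges of $C$; but an internal edge is never $B$-attached (were it attached to a $B$-cycle at a vertex $u$, the two cycle edges at $u$ never enter $E_P$, being fixed only by that cycle's $B$-branch, while the other endpoint supplies at most one $F$-edge, leaving fewer than two $E_P$-neighbours), so none of these three is the edge we seek and it must lie among the other selected edges.

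I would then single out the cycle edge $ef$ of $C'$ and charge it to $C$. Its neighbours $ek,fl$ are forced before $Q$, and I would verify $ek,fl\in E_P$ by contradiction: were $ek$ selected by a $B$-branch it would be a cycle edge of some $C(6)$-cycle $D$ processed strictly before $Q$ (since $ek$ already lies in $F$), so at the vertex $e$ the two edges $de,ef$ other than $ek$ would have been fixed by $D$'s $B$-branch --- one as $D$'s removed second cycle edge, the other as $D$'s forced attached edge --- contradicting configuration~\subref{6cycle-f}, in which $ef$ is freshly selected and $de$ freshly removed by $Q$. Hence $ef$ is a $D$-branch edge with two $E_P$-neighbours, so it is internal. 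The same edge-state bookkeeping shows $ef$ is not attached to $C$, because at each of $e,f$ one of the two incident edges other than $ef$ (namely $ek$, resp. $fl$) already lies in $F$, so the live cycle $C$ cannot route two unforced cycle edges through that endpoint; thus $ef$ may be assigned to $C$.

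For the second assertion I would handle the Step~1 extensions illustrated in Fig.~\ref{6cycle-f-extension}. In the drawn configuration the internal edge $ef$ is, by the property above, never $B$-attached, so only the three remaining selected edges $ab,cd,dj$ can be attached edges of $B_2$-cycles; carrying $C$ from $C(3)$ to $C(6)$ consumes all three, so $Q$ cannot simultaneously drive a second $B_2$-cycle $C''$ up to $C(6)$. Hence, if $C''$ is still active after $Q$, it is completed by a later branch $Q'$, to which the configuration analysis of Lemma~\ref{lemma4b2} applies and yields an internal edge for $C''$; this edge differs from $ef$, since $ef$ is a cycle edge of the now dead cycle $C'$ while $C''$'s internal edge is produced by $Q'$, and Fact~\ref{fact:attached:edges} forbids any overlap of edges charged to distinct $C(6)$-cycles.

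The main obstacle is exactly this last step once Step~1 cascades are admitted: a cascade makes the edge set actually selected by $Q$ depend on the surrounding graph, so one must argue uniformly over all extensions that $Q$ can carry at most one $B_2$-cycle from $C(3)$ to $C(6)$, and that the edge $ef$ charged to $C$ is never reused when the surviving cycle $C''$ is later completed. Reconciling the ``three chargeable edges'' count with arbitrary cascades while preserving the disjointness of Fact~\ref{fact:attached:edges} is the delicate point; by comparison, identifying $ef$ as internal is routine once the edge-state contradiction is set up.
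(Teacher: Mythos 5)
Your identification of the cycle edge $ef$ in configuration~\subref{6cycle-f} as the internal edge chargeable to $C$, and your verification that its two forced neighbours lie in $E_P$ (so that it is internal and not $B$-attached), is sound and consistent with what the paper intends. The problem is that your argument for both parts of the claim rests on the count ``$Q$ selects exactly $ab$, $cd$, $ef$, $dj$, of which only three can be attached edges of $B_2$-cycles,'' and you yourself concede at the end that this count breaks down once the Step~1 cascade is admitted. That concession is not a minor loose end: handling the cascade \emph{is} the content of the claim. The claim is stated (and placed where it is) precisely because Fig.~\ref{6cyclecases}\subref{6cycle-f} shows only a fragment of the configuration, and iterations of Step~1 can force additional selections (cf.\ Fig.~\ref{6cycle-f-extension}) whose edges may be attached to a second active $B_2$-cycle. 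Without controlling these extra edges you cannot conclude that $Q$ advances at most one $B_2$-cycle to $C(6)$, nor that a surviving $C''$ must wait for a later branch $Q'$; so the second assertion of the claim is not established.

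The paper closes exactly this gap by classifying all possible extensions of configuration~\subref{6cycle-f} into the two patterns of Fig.~\ref{patterns}. Pattern~\ref{pattern1} (a newly selected edge meets an edge already in $F$, forcing a removal and two further selections) immediately produces a second internal edge, so a single occurrence already finishes the two-cycle case. Pattern~\ref{pattern2} (a selected edge sits on a $4$-cycle with two forced cycle edges, triggering a contraction and a parallel-edge removal) also yields one additional internal edge per occurrence, so two occurrences finish the proof and one may assume it occurs at most once. With the cascade thus bounded, the paper counts that at most five non-internal edges are selected, which is not enough to carry two $C(3)$-cycles to $C(6)$ simultaneously; the surviving cycle is then handled by a later branch, recursively falling back on the configuration analysis of Lemma~\ref{lemma4b2}, with Fact~\ref{fact:attached:edges} guaranteeing no double counting. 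To repair your proof you would need to supply this classification of cascades (or an equivalent uniform bound on what Step~1 can select after configuration~\subref{6cycle-f}) rather than flag it as an open obstacle.
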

\begin{proof}[Proof of the Claim]
We have the case that $C$ changes from $C(3)$ to $C(6)$ without becoming $C(4)$ or $C(5)$ in between.
Since we want to count the total number of edges selected in the configuration we have to analyze how the configuration can possibly be extended and thereby select more edges than shown in Fig.~\ref{6cyclecases}\subref{6cycle-f}. There are two patterns which can occur in combination with the configuration, shown in Fig.~\ref{patterns}. The pattern from Fig.~\ref{pattern1} occurs, if an edge selected in the configuration is adjacent to an edge in $F$. Then the third edge incident to the same vertex is removed and two more edges are selected. This pattern can also occur in a chain multiple times. However, if this pattern is found one edge selected by the configuration or by the pattern itself can be considered as internal edge, since one adjacent edge is in $F$ and the other one is removed. Thus, a single occurrence of Pattern~1 lets us find two internal edges for configuration~\subref{6cycle-f} and we are done.

\begin{figure}[htbp]
  \centering
  \subfigure[Pattern~1]{
    \begin{tikzpicture}[node distance=0.35, scale=1]
      \node[vertex] (a) {};
      \node[vertex] (b) [right=of a] {};
      \node[vertex] (c) [above right=of b] {};
      \node[vertex] (d) [below right=of b] {};
      \node[vertex] (e) [below=of d] {};
      \node[vertex] (f) [below right=of d] {};

      \draw[->] (1.2,0) -- (1.5,0);

      \foreach \source/ \dest in {b/c}
        \path[edgef] (\source) -- (\dest);
      \foreach \source/ \dest in {a/b}
        \path[edgesel] (\source) -- (\dest);
      \foreach \source/ \dest in {b/d, d/e, d/f}
        \path[edge] (\source) -- (\dest);
    \end{tikzpicture}
    \hspace{0.2cm}
    \begin{tikzpicture}[node distance=0.35, scale=1]
      \node[vertex] (a) {};
      \node[vertex] (b) [right=of a] {};
      \node[vertex] (c) [above right=of b] {};
      \node[vertex] (d) [below right=of b] {};
      \node[vertex] (e) [below=of d] {};
      \node[vertex] (f) [below right=of d] {};

      \foreach \source/ \dest in {b/c}
        \path[edgef] (\source) -- (\dest);
      \foreach \source/ \dest in {b/d}
        \path[edgerem] (\source) -- (\dest);
      \foreach \source/ \dest in {a/b, d/e, d/f}
        \path[edgesel] (\source) -- (\dest);
    \end{tikzpicture}
    \label{pattern1}
  }
  \hspace{0.5cm}
  \subfigure[Pattern~2]{
    \begin{tikzpicture}[node distance=0.35, scale=1]
      \node[vertex] (a) {};
      \node[vertex] (b) [below right=of a] {};
      \node[vertex] (c) [below=of b] {};
      \node[vertex] (d) [below left=of c] {};
      \node[vertex] (e) [right=of b] {};
      \node[vertex] (f) [right=of c] {};
      \node[vertex] (g) [right=of e] {};
      \node[vertex] (h) [right=of f] {};

      \draw[->] (1.5,-0.4) -- (1.8,-0.4);

      \foreach \source/ \dest in {b/e, c/f}
        \path[edgef] (\source) -- (\dest);
      \foreach \source/ \dest in {b/c}
        \path[edgesel] (\source) -- (\dest);
      \foreach \source/ \dest in {a/b, c/d, e/f, e/g, f/h}
        \path[edge] (\source) -- (\dest);
    \end{tikzpicture}
    \begin{tikzpicture}[node distance=0.35, scale=1]
      \node[vertex] (a) {};
      \node[vertex] (b) [below right=of a] {};
      \node[vertex] (c) [below=of b] {};
      \node[vertex] (d) [below left=of c] {};
      \node[vertex] (e) [right=of b] {};
      \node[vertex] (f) [right=of c] {};
      \node[vertex] (g) [right=of e] {};
      \node[vertex] (h) [right=of f] {};

      \foreach \source/ \dest in {b/e, c/f}
        \path[edgef] (\source) -- (\dest);
      \foreach \source/ \dest in {b/c, e/g, f/h}
        \path[edgesel] (\source) -- (\dest);
      \foreach \source/ \dest in {e/f}
        \path[edgerem] (\source) -- (\dest);
      \foreach \source/ \dest in {a/b, c/d}
        \path[edge] (\source) -- (\dest);
    \end{tikzpicture}
    \label{pattern2}
  }
  \caption{Patterns, which can extend the configurations from Fig.~\ref{6cyclecases}.}
  \label{patterns}
\end{figure}
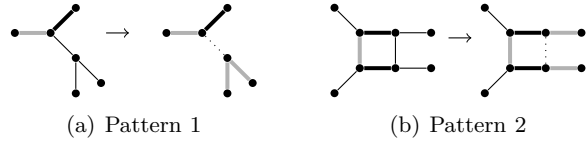

The pattern from Fig.~\ref{pattern2} occurs, if a selected edge is part of a 4-cycle, while the two adjacent cycle edges are in $F$. Due to edge contraction, the three edges will be contracted to a single edge and then form a parallel edge to the fourth cycle edge. The unforced cycle edge will be removed and two further edges selected. The first selected edge can be considered as internal edge. Thus, if this pattern occurs twice we are done. Therefore let us assume, that this pattern occurs only once.

% For the configurations \subref{6cycle-f}, \subref{6cycle-i} and \subref{6cycle-j}, the patterns can only be applied to \subref{6cycle-f}. We can thereby conclude, that \subref{6cycle-i} and \subref{6cycle-j} cannot make the change $C(3)\rightarrow C(6)$, since they select at most two edges in total. Configuration \subref{6cycle-f} selects at least one internal edge. If the first pattern is applied once, a second internal edge is selected and we are done, so that we assume that the first pattern does not occur. Now, the configuration shown in the figure selects three other edges. In addition, the second pattern can be applied once in such a way, that the internal edge counted for the configuration is the same edge as we found in the pattern. Thus, we still have only one internal edge selected for five other edges.

If there is only one $B_2$-cycle active, then one internal edge selected by \subref{6cycle-f} is sufficient. If there are two $B_2$-cycles active, both have to be $C(3)$, since otherwise \subref{6cycle-f} cannot be applied. Since at most five other edges are selected by \subref{6cycle-f}, only one of the two $B_2$-cycles can perform the change $C(3)\rightarrow C(6)$. The other cycle either stays $C(3)$ or becomes $C(4)$ or $C(5)$. If it stays $C(3)$, then one internal edge was selected to change one $B_2$-cycle to $C(6)$ and then only one $B_2$-cycle is left, so that \subref{6cycle-f} would select again one internal edge or one of the other configurations has to be applied as discussed before. If the cycle becomes $C(4)$ or $C(5)$, also the discussion above applies.
\end{proof}
%%-------------------

Summarizing, our analysis shows that for any path $P$ of the backtrack tree the following property holds. Let $P'$ be a maximal subpath of $P$ such that along $P'$ at least one $B_2$-cycle is active. Then some $D$-branch on $P'$ selects at least one internal edge if a single $B_2$-cycle is active on $P'$ and otherwise some $D$-branches along $P'$ select at least two internal edges. We have proven this property analyzing all possible configurations which can occur due to applications of Step~$3(a)$ or Step~$3(a')$ and subsequent iterations of Step~1. From Lemma~\ref{b2b3structure} we know that along $P'$ there are at most two active $B_2$-cycles. Thus, for every subpath $P'$ defined as above, the number of internal edges selected by $D$-branches along $P'$ is greater or equal to the number of $B_2$-cycles which are active in $P'$. Finally, we can conclude that if on $P$ there are totally $b_2$ $B_2$-cycles then at least $b_2$ internal edges are selected on $P$. This follows from the property that every $B_2$-
cycle has to be activated once on path $P$ and that subpaths $P'$ do not overlap. 
\end{proof}

%\begin{proof}(Appendix for proof of Lemma~\ref{lemma4b2})

\begin{proof}[Proof of Lemma~\ref{lemma4b3}]
We proceed analogously as in the proof of Lemma~\ref{b2b3structure} to show that a single $A$-branch can activate at most one $B_3$-cycle. Again, we use the property stated in Lemma~\ref{b3properties} that the only way in which the algorithm makes a $B_3$-cycle active is via an $A$-branch transition $C(2)\rightarrow C(4)$.
Then, the first part of the lemma
%that two attached edges of $C$ have been selected by $D$-branches 
follows easily. Indeed, since $C$ becomes $C(4)$ on the path, from Lemma~\ref{ref:lemma:d:branch} we know that the last two attached edges to $C$ are edges selected by $D$-branches. Next, we show the second part.

From Lemma~\ref{b3properties} we also know that after performing the branch $Q$ which activates $C$, a branch due to Step~$3(a)$ has to be directly followed. Let us denote by  $\tilde{C}$ a 4-cycle affected by Step~$3(a)$ and by $yz$ an edge chosen in this step. Then the algorithm recursively calls on $G, F\cup\{yz\}$ and on $G\setminus\{yz\}, F$ (see Fig.~\ref{step3a:first:call} and \ref{step3a:second:call}). In the first case, we get an isolated 4-cycle for which, in Step~2, the shortest Hamiltonian cycle is found based on a graph representing the isolated 4-cycles. The solution corresponds then to a Hamiltonian cycle, which selects exactly two cycle edges of each isolated 4-cycle. Thus, in this case, we can associate with $C$ two of four edges of $\tilde{C}$. If a recursive call is made on 
$G\setminus\{yz\}, F$, then in consequence three cycle edges of $\tilde{C}$ are selected by iterations of Step~1 and all of them are internal edges.

%the two internal edges we are looking for, are the selected edges of the 4-cycle $\tilde{C}$.
% By Lemma~\ref{b3properties} (which will be proved in the next subsection) a $B_3$-cycle can only be achieved, if one cycle becomes $C(4)$ and the next branch is due to Step~$3(a)$. 
% 
% Directly after $C$ became $C(4)$, Step~$3(a)$ will be applied and a 4-cycle will be isolated. Thus, for every cycle $C$ changing from $C(2)$ to $C(4)$ by an $A$-branch, there is an isolated 4-cycle. In Step~2 of the algorithm, the shortest Hamiltonian cycle is found based on a graph representing the isolated 4-cycles. The solution corresponds then to a Hamiltonian cycle, which selects exactly two cycle edges of each isolated 4-cycle.
\end{proof}

\subsection{An Auxiliary  Result}
\label{subsection:auxiliary}
%\noindent {\bf \large Further Lemmata for Section~\ref{sec:proofs}}\\[2mm]

% Wir verwenden dieses Lemma nur im Beweis eines Lemma!
% \begin{lemma}
% Let $P$ be a single path of the backtrack tree and let $C$ be a $B_2$-cycle. Then the first three attached edges have been selected by $A$-branches and the last three attached edges have been selected by $D$-branches.
% \label{b2properties}
% \end{lemma}
% \begin{proof}
% The cycle $C$ changes on $P$ from $C(0)$ to $C(3)$ and then from $C(3)$ to $C(6)$ (directly or indirectly). Due to Lemma~\ref{onlyd} the last three attached edges are selected by $D$-branches. Since a $B_2$-cycle has more than two attached edges selected by $A$-branches, the first three edges have to be selected by $A$-branches and the Lemma follows.
% \end{proof}

\begin{lemma}
Let $P$ be a single path of the backtrack tree and let $C$ be a $B_3$-cycle on $P$. Then the only way in which the algorithm makes $C$ active on $P$ is via an $A$-branch transition $C(2)\rightarrow C(4)$ which has to be  directly followed by a branch due to Step~$3(a)$.
\label{b3properties}
\end{lemma}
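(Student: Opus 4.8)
The plan is to track the state $C(i)$ of the $B_3$-cycle $C$ along $P$ and to reason about the single branch that first brings $C$ into an active state. Since $C$ is a $B_3$-cycle it never passes through $C(3)$, so it stays live until it becomes $C(6)$, and the activating branch realises a direct transition $C(i)\to C(j)$ with $i\le 2$ and $4\le j\le 5$ (we take $j\le 5$ because $C(6)$ is not an active state, and $j=6$ is excluded anyway by the edge count below). First I would re-establish, independently of Lemma~\ref{b2b3structure} in order to avoid circularity, that the activating branch is an $A$-branch: by Lemma~\ref{onlyd} every branch that raises the selected-edge count of a live $C(3)$-, $C(4)$- or $C(5)$-cycle while keeping it live is a $D$-branch, and the free-vertex count of Fig.~\ref{6cycle-a-branch} shows that a $B$-branch cannot change the degree of another live 6-cycle; hence the only branch that can push a live 6-cycle from $C(i)$ with $i\le 2$ up to $C(j)$ with $j\ge 4$ in a single step is an $A$-branch.

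Next I would pin the transition down to $C(2)\to C(4)$ by an edge count combined with the local shape of an $A$-branch. An $A$-branch selects exactly three edges, counting those forced by the subsequent iterations of Step~1 (cf.\ Fig.~\ref{fig:basic:branchings}(a)): choosing $yz$ forces out the third edge $yw$ at $y$, which in turn forces the two remaining edges at $w$ into $F$. Thus two of the three selected edges share the common off-cycle endpoint $w$. Since every vertex of a live 6-cycle carries exactly one attached edge (degree three minus the two cycle edges), two selected edges meeting at $w$ can be attached to $C$ only when $w$ lies off $C$ and its two neighbours lie on $C$, and then the third selected edge cannot also be an attached edge of $C$ without forcing a cycle edge of $C$ into $F$ and thereby killing $C$. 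This shows that at most two of the three selected edges are attached edges of $C$, so $j-i\le 2$; together with $i\le 2$ and $j\ge 4$ this forces $i=2$ and $j=4$, and in particular rules out $j=5$ and $j=6$.

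It then remains to show that this $A$-branch is directly followed by a branch due to Step~$3(a)$. Here I would analyse the configuration left by the activating $A$-branch in each of its two recursive calls together with the ensuing Step~1 simplifications. The two attached edges that push $C$ from $C(2)$ to $C(4)$ leave $w$ adjacent to two vertices of $C$, and the third selected edge is then forced to be attached to a 4-cycle $\tilde{C}$, so that after the branch $\tilde{C}$ is a 4-cycle two of whose vertices are incident to forced edges. Since Step~$3(a)$ has strict priority over Step~$3(a')$ and Step~$3(b)$ and such a 4-cycle is present, the next branch must be performed by Step~$3(a)$ acting on $\tilde{C}$; I would also check that none of the Step~1 reductions (degree-two forcing, parallel-edge removal, triangle contraction) destroys $\tilde{C}$ before branch selection.

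I expect this last paragraph to be the main obstacle. Proving that the third selected edge is necessarily attached to a qualifying 4-cycle $\tilde{C}$ --- rather than leaving a live $C(4)$ 6-cycle that Step~$3(a')$ would instead pick up --- requires a complete case analysis of the two subproblems of the $A$-branch (Fig.~\ref{fig:basic:branchings}(a)) and a verification that every alternative continuation is impossible, using crucially that immediately before the branch neither Step~$3(a)$ nor Step~$3(a')$ applied (otherwise the $A$-branch, which lives in Step~$3(b)$, could not have been chosen). By contrast, the edge-counting argument of the second paragraph is routine once the observation that two of the three selected edges meet at a common vertex is in place.
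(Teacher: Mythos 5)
Your central step --- the edge count in the second paragraph claiming that at most two of the three edges selected by an $A$-branch can be attached edges of $C$, hence $j-i\le 2$ --- is false, and the lemma cannot be obtained this way. It is true that two of the three selected edges meet at a common off-cycle vertex ($w$ or $z$ in Fig.~\ref{fig:basic:branchings}(a)), but nothing prevents the \emph{third} selected edge from also being attached to $C$: the two edges meeting at $w$ are contracted by Step~1$(h)$ into a single forced edge joining two vertices of $C$, and as long as those two vertices are opposite on the 6-cycle no triangle or parallel edge arises and $C$ stays live. The paper's Fig.~\ref{c1toc4} exhibits exactly this: an $A$-branch whose three selected edges are all attached to a live $C(1)$-cycle, realising the transition $C(1)\to C(4)$. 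Likewise $C(2)\to C(5)$ requires three attached edges and is not excluded by counting. The paper eliminates these two transitions by entirely different, non-local arguments: for $C(1)\to C(4)$ it shows the resulting configuration is forced to be the unique graph of Fig.~\ref{c1toc4:b}, in which the one remaining unforced attached edge $ab$ can never subsequently be added to $F$ (it is not adjacent to $F$ and no qualifying 4-cycle exists), so $C$ never reaches $C(6)$ and hence is not a $B_3$-cycle; for $C(2)\to C(5)$ it shows, via the triangle contraction of $de$ and $eg$ in Fig.~\ref{6cycle-a-branch-in-c2--6cycle-a-branch-for-c5}, that $C$ would have to have length at least seven. Your argument as stated would "prove" that these transitions are geometrically impossible, which contradicts the paper's own figures.

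Two further points. First, before concluding that a non-Step~$3(a)$ continuation would kill $C$, the paper must exclude the possibility that the activating $A$-branch turns \emph{two} 6-cycles into $C(4)$ simultaneously (otherwise Step~$3(a')$ might act on the other one and leave $C$ alive); this requires the separate case analysis on shared attached edges (Figs.~\ref{one:shared:edge}--\ref{three:shared:edges}), which your proposal omits entirely. Second, your plan for the "directly followed by Step~$3(a)$" part --- proving the third selected edge is necessarily attached to a qualifying 4-cycle --- is not what is needed and is not how the paper argues; the paper argues contrapositively that once $C$ is the unique $C(4)$-cycle and no $C(5)$-cycle exists, a Step~$3(a')$ branch would have to act on $C$ itself and make it dead, so for $C$ to be a $B_3$-cycle the next branch must come from Step~$3(a)$. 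You correctly flag this last part as the main obstacle, but the fatal problem is earlier, in the edge count.
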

% \begin{lemma}
% Let $C$ be a $C(6)$ along a path in the backtrack tree and $C$ has not been $C(3)$ before. Then either (i) at least four attached edges of $C$ have been selected by $D$-branches or (ii) two attached edges of $C$ have been selected by $D$-branches and two additional edges have been selected due to Step~2.
% \label{atleast3byd}
% \end{lemma}
\begin{proof}
% Consider a cycle $C$ along a path in the backtrack tree, which at some point becomes $C(3)$. Then we know by Lemma~\ref{c3andmore} that all three attached edges, which are not yet selected, have to be selected by $D$-branches. Thus, Lemma~\ref{atleast3byd} is true.
Recall, that if $C$ is a $B_3$-cycle, it becomes $C(6)$ without becoming $C(3)$ in between and at least three attached edges of $C$ are selected by $A$-branches. We will show that in most cases, when transforming a 6-cycle $C$ from $C(0)$ to $C(6)$ (without $C(3)$ in between), at least four attached edges of $C$ are selected by $D$-branches, which means $C$ is a $B_1$-cycle and not a $B_3$ one. Moreover, we will prove that the only exception is the case when initially $C(0)$ is transformed to $C(2)$ by (one or two) $A$-branches and next, a single transition $C(2)\rightarrow C(4)$ via an $A$-branch is made that is immediately followed by a branch due to Step~$3(a)$. Remaining transitions, if needed, are done by $D$-branches (this is a consequence of Lemma~\ref{ref:lemma:d:branch}). This will prove the lemma.

Assume a 6-cycle $C$ transforms along a path $P$ from $C(0)$ to $C(6)$ without becoming $C(3)$ in between. If $C$ becomes also neither $C(4)$ nor $C(5)$, it has a direct change from $C(0)$, $C(1)$ or $C(2)$ to $C(6)$. In all these cases at least four   edges are selected in a single branch, hence, the branch has to be a $D$-branch and $C$ cannot be a $B_3$-cycle.
Thus, we assume that $C$ becomes $C(6)$ with becoming $C(4)$ or $C(5)$ in between, but, of course, without becoming $C(3)$. We will therefore discuss the following direct transitions of~$C$:
\[C(0) \rightarrow C(4),\quad C(1) \rightarrow C(4),\quad C(2) \rightarrow C(4),\]
\[C(0) \rightarrow C(5),\quad C(1) \rightarrow C(5),\quad C(2) \rightarrow C(5).\]
Since every branch which selects more than three edges has to be a $D$-branch, three of these cases indicate immediately that $C$ cannot be a $B_3$-cycle. Thus, only the transitions
\begin{equation}\label{eq:three:transitions}
  \quad C(1) \rightarrow C(4),\quad C(2) \rightarrow C(4),\quad C(2) \rightarrow C(5)
\end{equation}
have to be discussed further. Our aim is to show, that these transitions have to be made by $D$-branches, with the exception as described above.  
%This is equivalent to show that the transitions (except of some cases of $C(2) \rightarrow C(4)$) are not made by an $A$-branch, since a $B$-branch cannot change the degree of another $C(i)$.
Since the case $C(2) \rightarrow C(4)$ is the most involved, we will discussed it as the last one.

Let us assume that the transitions \eqref{eq:three:transitions} were made by $A$-branches. Since Step~$3(a)$ can only result in a $D$-branch and because immediately before starting to branch, $G\setminus F$ contains a live 6-cycle (e.\,g.\ cycle $C$), an edge $zy$ used for the recursive call has to be chosen by Step~$3(a')$. 
Let $C'$ denote the live 6-cycle containing edge $zy$ 
%Consequently, denote by $C'$ a live 6-cycle, an edge of which is chosen in Step~$3(a')$, and  
which in turn will become dead after the branch is finished. Since we assume that an $A$-branch was applied, we can conclude that $C'$ is a $C(j)$-cycle, with $1\le j\le 2$. This follows from the fact shown in the proof of Lemma~\ref{ref:lemma:d:branch} (cf. Fig.~\ref{6cycle-a-branch}), stating that if an $A$-branch affects a $C(j)$-cycle then $j\le 2$. \\[2mm]
\noindent
{\bf Case $\mathbf{C(1) \rightarrow C(4)}$.}  Let us call the branch $Q$. A resulting configuration of the branch is shown in Fig.~\ref{6cycle-a-branch}. Since the transition  changes $C$ from $C(1)$ to  $C(4)$, all edges selected by the branch have to be attached edges of $C$.
Two of the edges %selected by the $A$-branch 
are adjacent to each other and, thus, will be contracted to a single edge by Step~1, before the next branch will be performed. If this edge then forms a triangle, the edge will be removed and cannot be an attached edge of a $C(4)$. Hence these two edges have to be attached to opposite vertices of $C$. Due to symmetry, we can assume the third selected edge to be attached to any of the four other vertices of $C$. The pattern we have concluded to so far is shown in Fig.~\ref{c1toc4}. The naming of the vertices $y$ and $z$ is chosen in a  consistent way with the notation used for the $A$-branch in Fig.~\ref{6cycle-a-branch}. 

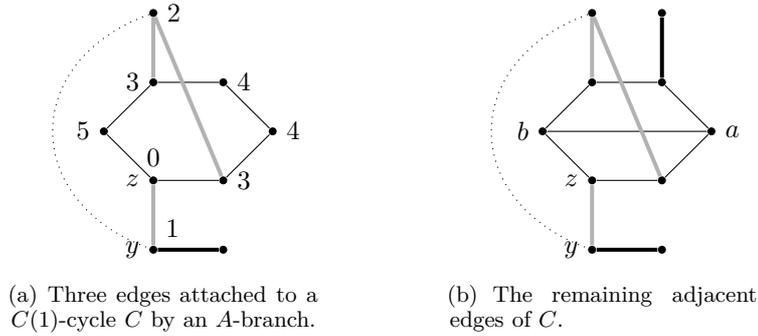
\begin{figure}[htbp]
  \centering
  \subfigure[Three edges attached to a $C(1)$-cycle $C$ by an $A$-branch.]{
    \label{c1toc4}
    \begin{tikzpicture}[node distance=0.8*1, scale=1]
      \node[vertex,label=left:$3$] (a) {};
      \node[vertex,label=right:$4$] (b) [right=of a] {};
      \node[vertex,label=right:$4$] (c) [below right=of b] {};
      \node[vertex,label=right:$3$] (d) [below left=of c] {};
      \node[vertex,label=above:$0$,label=left:$z$] (e) [left=of d] {};
      \node[vertex,label=left:$5$] (f) [above left=of e] {};

      \node[vertex,label=right:$2$] (g) [above=of a] {};
      \node[vertex,label=above right:$1$,label=left:$y$] (h) [below=of e] {};
      \node[vertex] (i) [right=of h] {};

      \foreach \source/ \dest in {a/b, b/c, c/d, d/e, e/f, f/a}
        \path[edge] (\source) -- (\dest);
      \foreach \source/ \dest in {h/i}
        \path[edgef] (\source) -- (\dest);
      \foreach \source/ \dest in {a/g, e/h, g/d}
        \path[edgesel] (\source) -- (\dest);

      \path[edgerem] (g) to [out=200,in=160,min distance=1.8cm] (h);
      %\path[edgesel] (g) to [out=0,in=0,min distance=1.4cm] (d);
    \end{tikzpicture}
  }
  \hspace*{1.5cm}
  \subfigure[The remaining adjacent edges of $C$.]{
    \label{c1toc4:b}
    \begin{tikzpicture}[node distance=0.8*1, scale=1]
      \node[vertex] (a) {};
      \node[vertex] (b) [right=of a] {};
      \node[vertex,label=right:$a$] (c) [below right=of b] {};
      \node[vertex] (d) [below left=of c] {};
      \node[vertex,label=left:$z$] (e) [left=of d] {};
      \node[vertex,label=left:$b$] (f) [above left=of e] {};

      \node[vertex] (g) [above=of a] {};
      \node[vertex,label=left:$y$] (h) [below=of e] {};
      \node[vertex] (i) [right=of h] {};
      \node[vertex] (j) [above=of b] {};

      \foreach \source/ \dest in {a/b, b/c, c/d, d/e, e/f, f/a, c/f}
        \path[edge] (\source) -- (\dest);
      \foreach \source/ \dest in {h/i, b/j}
        \path[edgef] (\source) -- (\dest);
      \foreach \source/ \dest in {a/g, e/h, g/d}
        \path[edgesel] (\source) -- (\dest);

      \path[edgerem] (g) to [out=200,in=160,min distance=1.8cm] (h);
      %\path[edgesel] (g) to [out=0,in=0,min distance=1.4cm] (d);
    \end{tikzpicture}
  }
  \caption{An $A$-branch attaching three forced edges to a $C(1)$-cycle $C$. Figure (a) shows cycle $C$, but with only three adjacent edges that are selected by $A$-branch $Q$. Cycle $C'$ which becomes dead by the branch, is not fully seen in the figure. Figure (b) shows the only possible placement  for the remaining adjacent edges of $C$.}
\end{figure}

Next, our aim is to determine the missing edges adjacent to $C$. We will see that the solution is unique. Starting with the situation as presented in Fig.~\ref{c1toc4} (including the edge shown as a dotted line), we  establish first edges which had to belong to the cycle $C'$, before $Q$ made it dead. Obviously, the chosen edge $zy$ and the adjacent dotted edge were cycle edges of $C'$. The next cycle edge was one of the two selected edges adjacent to the dotted edge. To determine the remaining edges, we proceed as follows. For each candidate vertex $v$, assuming it belongs to $C'$, we determine a distance from $z$ to $v$ on a path along the cycle $C'$ starting from $z$ in direction to $y$. We will see the distance values are unique. Thus, $z$ has distance $0$. Vertex $y$ is the only vertex having distance 1 and the vertex incident to the both selected edges is the only vertex having distance 2. It is easy to verify that its both adjacent vertices must have  distance value 3, meaning that only one of them 
could belong to $C'$.  
The distances for the remaining vertices are 4 or 5. We claim that the unique values are as presented in  Fig.~\ref{c1toc4}. Indeed, the vertices marked with distance 4 cannot have distance 5 since from those vertices no direct edge to $z$ can exist (this could imply degree four of $z$). Next, since at least one vertex has to be of distance 5, the claim follows. Now, the  missing edge of $C'$, let us call it $ab$, have to join a vertex of distance 4 with that one of distance  5. Since $ab$ may not form a triangle with edges of $C$, we get as solution a graph as shown in Fig.~\ref{c1toc4:b}. Thus, the only possible cycle $C'$ which could enable branch $Q$ resulting in configuration in Fig.~\ref{c1toc4}, is the cycle containing the vertices  $z,y,$ vertex of distance 2,  vertex of distance 3, $a,$ and $b$.

In this way we were able to determine the missing edges of $C$. Since, before branch $Q$, the cycle was a $C(1)$-cycle, the last edge has to be the forced edge and it has to be incident with the only vertex left with a degree lower than three.
Thus, the graph in Fig.~\ref{c1toc4:b} is the only configuration which can be obtained after the transition $C(1) \rightarrow C(4)$.

Now, for $C$ to become $C(6)$ from $C(4)$, the edge $ab$ has to be selected by some further branch. Since this edge is not adjacent to any edge in $F$, $F$ is not empty, and there exists a 4-cycle, two vertices of which are adjacent to edges in $F$, only Step~$3(a)$ can choose this edge and add it to $F$. It is also not possible, that $ab$ gets selected in a consequence of some other branches due to Step~1, since its adjacent edges have to stay unforced. To choose $ab$ by Step~$3(a)$, there must be a 4-cycle of unforced edges attached to $ab$. However, there is no such 4-cycle, so that $C$ can never become $C(6)$.
Thus, if a 6-cycle $C$ becames $C(4)$ through an $A$-branch  $C(1) \rightarrow C(4)$, then $C$ cannot be a $B_3$-cycle.\\[2mm]
\noindent
{\bf Case $\mathbf{C(2) \rightarrow C(5)}$.} Due to prioritisation used in Step~$3(a')$, the cycle $C'$ used by the branch $Q$ has to be $C(2)$ (see  Fig.~\ref{6cycle-a-branch-in-c2--6cycle-a-branch-for-c5}(a)). 
Note that the second selected attached edge of $C'$ has to be incident to the vertex $c$. Otherwise the vertex $b$ would not be free and we would not have an $A$-branch, since the other configuration of the branch would select one more edge.

\begin{figure}[htbp]
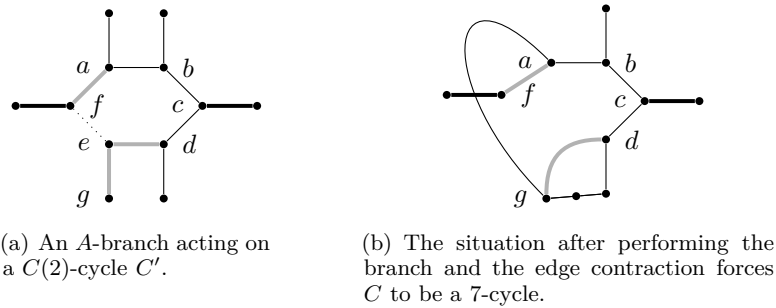

  \centering
  \subfigure[An $A$-branch acting on a $C(2)$-cycle $C'$.]{
    \label{6cycle-a-branch-in-c2}
    \begin{tikzpicture}[node distance=0.6*1, scale=1]
      \input{6cycle}

      \input{a-in-c2-labels}

      \foreach \source/ \dest in {a/g, b/h, d/j, e/k, a/b, b/c, c/d}
        \path[edge] (\source) -- (\dest);
      \foreach \source/ \dest in {f/l, c/i}
        \path[edgef] (\source) -- (\dest);
      \foreach \source/ \dest in {e/f}
        \path[edgerem] (\source) -- (\dest);
      \foreach \source/ \dest in {f/a, d/e, e/k}
        \path[edgesel] (\source) -- (\dest);
    \end{tikzpicture}
  }
   \hspace*{1.0cm}
  \subfigure[The situation after performing the branch and the edge contraction forces $C$ to be a 7-cycle.]{
    \hspace*{0.7cm}
    \label{6cycle-a-branch-for-c5}
    \begin{tikzpicture}[node distance=0.6*1, scale=1]
      \node[vertex] (a) {};
      \newcommand{\ind}{}
      \node[vertex] (b\ind) [right=of a\ind] {};
      \node[vertex] (c\ind) [below right=of b\ind] {};
      \node[vertex] (d\ind) [below left=of c\ind] {};
      \node[] (e\ind) [left=of d\ind] {};
      \node[vertex] (f\ind) [above left=of e\ind] {};

      \node[vertex] (g\ind) [above=of a\ind] {};
      \node[vertex] (h\ind) [above=of b\ind] {};
      \node[vertex] (i\ind) [right=of c\ind] {};
      \node[vertex] (j\ind) [below=of d\ind] {};
      \node[vertex] (k\ind) [below=of e\ind] {};
      \node[vertex] (l\ind) [left=of f\ind] {};

      \node[fill=white, minimum size=4pt] at (g) {}; % TODO this is not nice

     \node[label=left:$a$] at (a) {};
     \node[label=right:$b$] at (b) {};
     \node[label=left:$c$] at (c) {};
     \node[label=right:$d$] at (d) {};
 %    \node[label=left:$e$] at (e) {};
     \node[label=right:$f$] at (f) {};
     \node[label=left:$g$] at (k) {};

      \draw (j) -- (k) node[vertex,pos=0.5](m){};

      \foreach \source/ \dest in {b/h, d/j, a/b, b/c, c/d, j/m, m/k}
        \path[edge] (\source) -- (\dest);
      \foreach \source/ \dest in {f/l, c/i}
        \path[edgef] (\source) -- (\dest);
 %     \foreach \source/ \dest in {e/f}
 %       \path[edgerem] (\source) -- (\dest);
      \foreach \source/ \dest in {f/a}
        \path[edgesel] (\source) -- (\dest);

      \path[edge] (a) to [out=135,in=135,min distance=2cm] (k);
      \path[edgesel] (d) to [out=180,in=90,min distance=0.5cm] (k);
    \end{tikzpicture}
    \hspace*{0.7cm}
  }
  \caption{An $A$-branch acting on $C(2)$. Figure (a) shows the situation after performing the branch. % that the two selected edges of the $C(2)$-cycle have to be of distance three.
  Figure (b) shows the situation  after edges $de$ and $eg$ get contracted in Step~1. 
 }
  \label{6cycle-a-branch-in-c2--6cycle-a-branch-for-c5}
\end{figure}

For changing $C$ from $C(2)$ to $C(5)$ all three edges selected by the $A$-branch $Q$ have to be attached edges of $C$. Therefore, the edges $ab$ and $cd$ have to be cycle edges of $C$. Since $c$ is incident to an edge in $F$, $bc$ is also a cycle edge of $C$. Consider the path along the cycle $C$ starting from $a$ over $b$ and $c$ to $d$. The path has to visit $g$ and then complete the 6-cycle by returning to $a$. Since $de$ and $eg$ get contracted, the path from $d$ to $g$ has to have at least three edges as shown in Fig.~\ref{6cycle-a-branch-in-c2--6cycle-a-branch-for-c5}(b). Otherwise, a triangle contraction would be performed on $C$. Thus, we have a path length of three from $a$ to $d$ and a path length of at least three from $d$ to $g$. With at least one more edge to get from $g$ back to $a$ the cycle $C$ has a length of at least seven, which is a contradiction to $C$ being a 6-cycle. Therefore, we conclude that the transition $C(2) \rightarrow C(5)$ cannot be performed by an $A$-branch.

\noindent
{\bf Case $\mathbf{C(2) \rightarrow C(4)}$.} Let us call the branch $Q$. 
%Similarly as in the previous case, due to prioritisation used in Step~$3(a')$, the cycle $C'$ used by $Q$, has to be $C(2)$. 
Obviously, after the branch cycle $C$ becomes $C(4)$. Then it can happen that either $C$ is the only $C(4)$-cycle at that moment or some other cycle becomes $C(4)$, too. In the analysis below we will consider these two subcases separately.\\[2mm]
\noindent
{\bf Subcase: one $\mathbf{C(4)}$-cycle after branch $\mathbf{Q}$.} 
Assume $C$ is the only cycle becoming $C(4)$ by $Q$. Then, to be a $B_3$-cycle, the next branch directly after $Q$ has to be done due to Step~$3(a)$. If this is not the case, then after $Q$, Step~$3(a')$ is performed. Since $C$ is the only $C(4)$-cycle, and because there exists no $C(5)$ cycle (this fact can be seen easily), a branch due to Step~$3(a')$ makes $C$ dead and, thus, $C$ cannot be $B_3$.

%We can conclude that, for the current subcase, if directly after $Q$, Step~$3(a)$ does not follow, then $C$ cannot become $B_3$.  
By Fig.~\ref{4:by:a:branches} we provide an example, showing that it is possible to activate a cycle $C$ by an $A$-branch transition $C(2)\rightarrow C(4)$ followed by Step~$3(a)$. The first transition from $C(0)$ to $C(2)$ has been done by an $A$-branch and then $A$-branch $Q$ performs transition $C(2)\to C(4)$. Next, using a branch due to Step~$3(a)$ the last two attached edges of $C$ are selected, making $C$ a $B_3$-cycle.  

%four attached edges of a cycle $C$ can be selected by $A$-branches, while $C$ performs a transition $C(2) \rightarrow C(4)$ by an $A$-branch $Q$.% Our aim is to show, that for each $C(6)$ cycle of this type, at least two additional edges are selected by Step~2 and thereby not by an $A$- or a $B$-branch.

% Directly after $C$ became $C(4)$, Step~$3(a)$ will be applied and a 4-cycle will be isolated. Thus, for every cycle $C$ changing from $C(2)$ to $C(4)$ by an $A$-branch, there is an isolated 4-cycle. In Step~2 of the algorithm, the shortest Hamiltonian cycle is found based on a graph representing the isolated 4-cycles. The solution corresponds then to a Hamiltonian cycle, which selects exactly two cycle edges of each isolated 4-cycle.

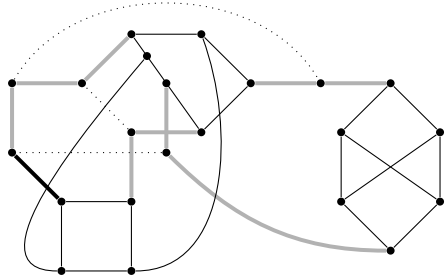
\begin{figure}[htbp]
  \centering
%  \subfigure[attached $A$-branch]{
    \begin{tikzpicture}[node distance=0.8*1, scale=1]
      \node[vertex] (a) {};
      \node[vertex] (b) [right=of a] {};
      \node[vertex] (c) [below right=of b] {};
      \node[vertex] (d) [below left=of c] {};
      \node[vertex] (e) [left=of d] {};
      \node[vertex] (f) [above left=of e] {};
      \draw (a) -- (d) node[vertex,pos=0.2](g){} node[vertex,pos=0.5](h){};
      %\node[vertex] (i) [above=of b] {};
      \node[vertex] (j) [right=of c] {};
      \node[vertex] (k) [left=of f] {};
      \node[vertex] (l) [below=of h] {};
      \node[vertex] (m) [below=of k] {};
      \node[vertex] (o) [below=of e] {};
      \node[vertex] (n) [left=of o] {};
      \node[vertex] (p) [below=of o] {};
      \node[vertex] (q) [left=of p] {};
      \node[vertex] (r) [right=of j] {};
      \node[vertex] (s) [below right=of r] {};
      \node[vertex] (t) [below=of s] {};
      \node[vertex] (u) [below left=of t] {};
      \node[vertex] (w) [below left=of r] {};
      \node[vertex] (v) [below=of w] {};
      %\node[vertex] (x) [below left=of g] {};

      \foreach \source/ \dest in {a/b, b/c, c/d, n/o, o/p, p/q, q/n, r/s, s/t, s/v, r/w, w/t, w/v, t/u, v/u}
        \path[edge] (\source) -- (\dest);
      \foreach \source/ \dest in {m/n}
        \path[edgef] (\source) -- (\dest);
      \foreach \source/ \dest in {m/k, l/h, k/f, f/a, d/e, e/o, c/j, j/r}
        \path[edgesel] (\source) -- (\dest);
      \foreach \source/ \dest in {l/m, e/f}
        \path[edgerem] (\source) -- (\dest);

      \path[edgerem] (k) to [out=60,in=120] (j);
      \path[edgesel] (l) to [out=-45,in=180] (u);
      \path[edge] (b) to [out=-70,in=0] (p);
      \path[edge] (g) to [out=-130,in=180] (q);
    \end{tikzpicture}
%  }
  \caption{A $B_3$-cycle activated by an $A$-branch performing the transition $C(2)\to C(4)$.}
  \label{4:by:a:branches}
\end{figure}

\noindent
{\bf Subcase: two  $\mathbf{C(4)}$-cycles after branch $\mathbf{Q}$.} 
Let us consider the case of two 6-cycles $C$ and $\hat{C}$ becoming $C(4)$ by $Q$. The $A$-branch $Q$ selects at most three attached edges to these 6-cycles. Since $C$ and $\hat{C}$ both are live after branch $Q$, any vertex which belongs to a cycle and is incident to an attached edge selected by $Q$ is also incident to two unforced cycle edges. In branch $Q$, as  shown in Fig.~\ref{6cycle-a-branch}, only one of the two vertices of each selected edge fulfills this condition, so that each selected edge can be attached only at one end to a cycle and not at both ends. For the transitions of $C$ and $\hat{C}$, at least four attached edges have to be selected by $Q$. Hence, the attached edges are not disjoint. The cycles $C$ and $\hat{C}$ have to share at least one attached edge which gets selected by $Q$ and its two adjacent cycle edges. We consider different cases depending on how many attached edges are shared between $C$ and $\hat{C}$.

\begin{figure}[htbp]
  \centering
  \begin{tikzpicture}[node distance=0.6*1, scale=1]
    \node[vertex] (a) {};
    \node[vertex] (b) [right=of a] {};
    \node[vertex] (c) [below right=of b] {};
    \node[vertex] (d) [below left=of c] {};
    \node[vertex] (e) [left=of d] {};
    \node[vertex] (f) [below=of a] {};

    \node[vertex] (g) [left=of e] {};
    \node[vertex] (h) [above left=of g] {};
    \node[vertex] (i) [above right=of h] {};

    \node[] (j) [above=of b] {};
    \node[] (k) [right=of c] {};
    \node[] (l) [below=of d] {};
    \node[] (m) [left=of f] {};
    \node[] (n) [below=of g] {};
    \node[] (o) [left=of h] {};
    \node[] (p) [above=of i] {};

    \foreach \source/ \dest in {a/b, b/c, c/d, d/e, e/f, f/a, e/g, g/h, h/i, i/a}
      \path[edge] (\source) -- (\dest);
    \foreach \source/ \dest in {b/j, c/k, d/l, f/m, g/n, h/o, i/p}
      \path[edgef] (\source) -- (\dest);
  \end{tikzpicture}
  \caption{Case 1: One shared attached edge. The figure shows the situation immediately after $Q$ has been completed.}
  \label{one:shared:edge}
\end{figure}
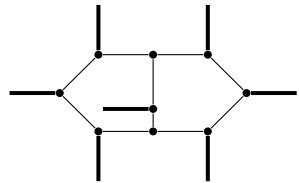

In the first case, $C$ and $\hat{C}$ share only one attached edge and, thus, its both adjacent edges are common cycle edges of $C$ and $\hat{C}$. Figure~\ref{one:shared:edge} shows the cycles immediately after $Q$ has been completed. Both cycles $C$ and $\hat{C}$ have to be at least $C(4)$, so that all non-cycle edges have to be selected. We observe that removing or selecting an arbitrary edge of $\hat{C}$ will also select or remove an edge of $C$. Therefore, $C$ becomes dead and not $C(6)$.

\begin{figure}[htbp]
  \centering
  \subfigure[Shared {\em not neighboring} edges.]
   {
%     \begin{tikzpicture}[node distance=0.3*1, scale=1]
%       \node[] (a) {};
%       \node[vertex] (b) [right=of a] {};
%       \node[] (c) [right=of b] {};
%       \node[vertex] (d) [right=of c] {};
%       \node[] (e) [right=of d] {};
%       \node[vertex] (f) [right=of e] {};
%       \node[] (g) [right=of f] {};
% 
%       \node[vertex] (h) [above=of c] {};
%       \node[vertex] (i) [above=of e] {};
%       \node[] (j) [above=of i] {};
%  
%      \node[vertex] (k) [below=of c] {};
%       \node[vertex] (l) [below=of e] {};
%       \node[] (m) [below=of l] {};
% 
%       \foreach \source/ \dest in {a/b, d/e, f/g, b/h, h/i, i/f, f/l, l/k, k/b, h/d, d/k}
%         \path[edge] (\source) -- (\dest);
%       \foreach \source/ \dest in {i/j, l/m}
%         \path[edgef] (\source) -- (\dest);
%     \end{tikzpicture}
   \hspace*{1.2cm}
    \begin{tikzpicture}[node distance=0.6*1, scale=1]
      \node[vertex] (a) {};
      \node[vertex] (b) [right=of a] {};
      \node[vertex] (c) [right=of b] {};
      \node[vertex] (d) [below=of c] {};
      \node[vertex] (e) [left=of d] {};
      \node[vertex] (f) [left=of e] {};

      \node[] (g) [above=of b] {};
      \node[] (h) [below=of e] {};

      \foreach \source/ \dest in {a/b, b/c, c/d, d/e, e/f, f/a, c/f, a/d}
        \path[edge] (\source) -- (\dest);
      \foreach \source/ \dest in {b/g, e/h}
        \path[edgef] (\source) -- (\dest);
    \end{tikzpicture}
    \hspace{1.2cm}
    \label{two:shared:edges:not:neighboring}
  }
  \hspace*{.5cm}
  \subfigure[Shared {\em neighboring} edges.]{ \hspace*{0.5cm}
    \begin{tikzpicture}[node distance=0.6*1, scale=1]
      \input{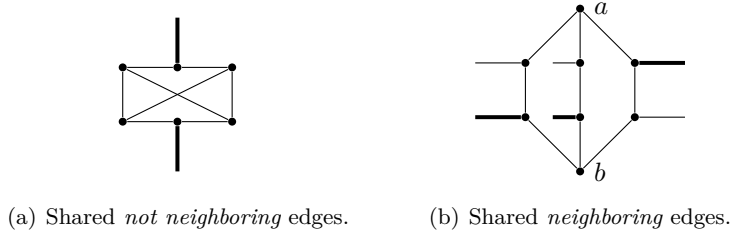}

      \foreach \source/ \dest in {a/b, e/f, d/g, c/j, h/n, f/l}
        \path[edge] (\source) -- (\dest);
    \end{tikzpicture}
    \hspace*{0.5cm}
    \label{two:shared:edges:neighboring}
  }
  \caption{Case 2: Two shared attached edges. Figures (a) and (b) show situations right before $Q$ starts.}
  \label{two:shared:edges}
\end{figure}

In the second case, $C$ and $\hat{C}$ share two attached edges, as shown in Fig.~\ref{two:shared:edges}. If these two shared attached edges are not neighboring, the cycles share four cycle edges and we have a situation as shown in Fig.~\ref{two:shared:edges:not:neighboring}. Note, that right before $Q$ starts, both shared attached edges have do be forced. Obviously, these two 6-cycles cannot both become $C(4)$. If the shared attached edges are neighboring, we have the situation as shown in Fig.~\ref{two:shared:edges:neighboring}. Now, observe that since there is no $C(i)$, with $i>2$, right before branch $Q$ starts, on each path from $a$ to $b$ of length 3 at most one edge can be selected. 
On the other hand, after $Q$ has been completed, two edges on each of these paths have to be selected in order to get two $C(4)$'s. An $A$-branch selects exactly three edges, so that right before $Q$ starts, on each $a-b$ path exactly one edge has to be selected. Furthermore, if all three edges selected by an $A$-branch are attached to $C$ and $\hat{C}$, as required in this case, two attached edges selected by $Q$ need to have a distance of three. Thus, the forced edges right before $Q$ starts, have 
to be attached to $C$ and $\hat{C}$ as shown in Fig.~\ref{two:shared:edges:neighboring}.

\begin{figure}[htbp]
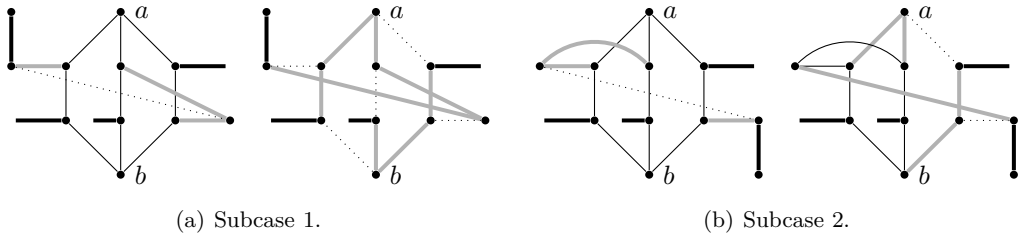

  \centering
  \subfigure[Subcase 1.]{
    \begin{tikzpicture}[node distance=0.6*1, scale=1]
      \input{two-shared-att}

      \node[vertex] (j) [right=of c] {};
      \node[vertex] (n) [left=of h] {};
      \node[vertex] (o) [above=of n] {};

      \foreach \source/ \dest in {a/b, e/f, d/g}
        \path[edge] (\source) -- (\dest);
      \foreach \source/ \dest in {n/o}
        \path[edgef] (\source) -- (\dest);
      \foreach \source/ \dest in {h/n, j/f, c/j}
        \path[edgesel] (\source) -- (\dest);
      \foreach \source/ \dest in {n/j}
        \path[edgerem] (\source) -- (\dest);
    \end{tikzpicture}
    \begin{tikzpicture}[node distance=0.6*1, scale=1]
      \input{two-shared-att}

      \node[vertex] (j) [right=of c] {};
      \node[vertex] (n) [left=of h] {};
      \node[vertex] (o) [above=of n] {};

      \foreach \source/ \dest in {n/o}
        \path[edgef] (\source) -- (\dest);
      \foreach \source/ \dest in {n/j, g/h, h/a, c/d, d/e, f/a, f/j, b/c}
        \path[edgesel] (\source) -- (\dest);
      \foreach \source/ \dest in {h/n, d/g, e/f, a/b, c/j}
        \path[edgerem] (\source) -- (\dest);
    \end{tikzpicture}
    \label{two:shared:edges:n1}
  }
  \subfigure[Subcase 2.]{
    \begin{tikzpicture}[node distance=0.6*1, scale=1]
      \input{two-shared-att}

      \node[vertex] (j) [right=of c] {};
      \node[vertex] (n) [left=of h] {};
      \node[vertex] (o) [below=of j] {};

      \foreach \source/ \dest in {a/b, e/f, d/g}
        \path[edge] (\source) -- (\dest);
      \foreach \source/ \dest in {j/o}
        \path[edgef] (\source) -- (\dest);
      \foreach \source/ \dest in {h/n, c/j}
        \path[edgesel] (\source) -- (\dest);
      \foreach \source/ \dest in {n/j}
        \path[edgerem] (\source) -- (\dest);

      \path[edgesel] (n) to [out=45,in=135] (f);%,min distance=1.8cm
    \end{tikzpicture}
    \begin{tikzpicture}[node distance=0.6*1, scale=1]
      \input{two-shared-att}

      \node[vertex] (j) [right=of c] {};
      \node[vertex] (n) [left=of h] {};
      \node[vertex] (o) [below=of j] {};

      \foreach \source/ \dest in {e/f, d/g, h/n}
        \path[edge] (\source) -- (\dest);
      \foreach \source/ \dest in {j/o}
        \path[edgef] (\source) -- (\dest);
      \foreach \source/ \dest in {h/a, f/a, b/c, c/d, j/n}
        \path[edgesel] (\source) -- (\dest);
      \foreach \source/ \dest in {a/b, c/j}
        \path[edgerem] (\source) -- (\dest);

      \path[edge] (n) to [out=45,in=135] (f);%,min distance=1.8cm
    \end{tikzpicture}
    \label{two:shared:edges:n2}
  }
  \caption{Subcases of Case~2: Two possibilities to attach the three edges selected by $Q$.}
  \label{two:shared:edges:n}
\end{figure}

Recall that, by assumption, $Q$ is an $A$-branch acting on a $C(2)$-cycle $C'$ (Fig.~\ref{6cycle-a-branch}).
Up to symmetries, there are two possibilities to attach the three edges selected  by $Q$ to $C$ and $\hat{C}$, as shown in Fig.~\ref{two:shared:edges:n}. We can observe, that in no case branch $Q$ can be an $A$-branch. %is no longer an $A$-branch. 
% Instead we have
% \begin{equation*}
% \begin{split}
% T(s-3,a,b,f-4)+T(s-8,a,b,f-6),\\
% T(s-3,a,b,f-4)+T(s-5,a,b,f-6)
% \end{split}
% \end{equation*}
% for Fig.~\ref{two:shared:edges:n1} and \ref{two:shared:edges:n2}, respectively.

\begin{figure}[htbp]
  \centering
  \begin{tikzpicture}[node distance=0.6*1, scale=1]
    \node[vertex] (a) {};
    \node[vertex,label=above left:$a$] (f) [below=of a] {};
    \node[vertex] (e) [below=of f] {};
    \node[vertex,label=below left:$b$] (d) [below=of e] {};
    \node[vertex] (c) [below=of d] {};

    \node[vertex, node distance=1.5] (b) [right=of e] {};
    \node[vertex, node distance=1.5] (g) [left=of e] {};

    \node[] (h) [right=of b] {};
    \node[] (i) [right=of d] {};
    \node[] (j) [right=of e] {};
    \node[] (k) [right=of f] {};
    \node[] (l) [left=of g] {};

    \foreach \source/ \dest in {a/b, b/c, c/d, d/e, e/f, a/f, c/g, g/a}
      \path[edge] (\source) -- (\dest);
    \foreach \source/ \dest in {b/h, d/i, e/j, f/k, g/l}
      \path[edgef] (\source) -- (\dest);
  \end{tikzpicture}
  \caption{Case~3: Three shared attached edges.}
  \label{three:shared:edges}
\end{figure}
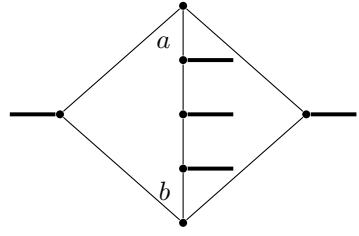

In the third case, $C$ and $\hat{C}$ have three shared attached edges. Assume that these shared edges are not neighboring. Then $C$ is entered and exited by $\hat{C}$ at least two times at disjoint vertices, i.\,e.\ $C$ has at least four vertices for entering and leaving. Additionally, $C$ has three vertices with the shared attached edges. This is a contradiction, since $C$ has exactly 6 vertices.
Thus, the three shared edges have to be neighboring. Figure~\ref{three:shared:edges} shows $C$ and $\hat{C}$ immediately after $Q$ has been completed. Then the graph contains a 4-cycle with opposite attached edges that are selected. Due to Step~1(i), the edges $a$ and $b$ are selected and $C$ cannot become $C(6)$.

For the last case, assume that $C$ and $\hat{C}$ share more than three attached edges. Then they share at least six vertices and thus have to be identical. This contradicts to the assumption, that $C$ will become $C(6)$ when $\hat{C}$ becomes dead.

Summarizing, we found that it is not possible to create two $C(4)$ by an $A$-branch $Q$, where one $C(4)$ becomes $C(6)$ in a subsequent step. Therefore $Q$ has to be a $D$-branch.
\end{proof}

\section{Comments on Work by Iwama and Nakashima}
%Some Issues Concerning the Paper \ref{rec:eq:iwama:nakashima})
\label{section:comments:on:IN}

In \cite{IwamaN07} Iwama and Nakashima modify Eppstein's algorithm in a slightly different way than we do. To analyse the run-time of the modified algorithm the authors give the following recurrence relation 
\begin{equation}\label{eq:rec:iwama}
T(n,a,b,f)\le\text{max}\left.\begin{cases}
              2T(n-3,a-1,b,f-4)\\
              2T(n-3,a,b-1,f)\\
              T(n-5,a,b,f-2)+T(n-2,a,b,f-2)\\
              2T(n-4,a,b,f).
             \end{cases}\right.
\end{equation}
Next, they let $T(n,a,b,f)= 2^{\frac{n+\frac{1}{2}(a+2b)+f/8}{4}}$, verify that this function satisfies relation (\ref{eq:rec:iwama}) and, using the inequality $3a+6b\le n$, they derive the upper bound on the run-time as follows: $T(n,a,b,n)=2^{\frac{n+\frac{1}{2}(a+2b)+n/8}{4}}\le 2^{\frac{n+\frac{1}{2}n/3+n/8}{4}}=2^{31n/96}$. 

The authors show, among others, that $$T(n,a,b,f) < T(n-5,a,b,f-2) + T(n-2,a,b,f-2).$$ This expresses a rather strange property that the run-time of the algorithm for the problem could be less than the sum of the run-times for the both subproblems. On the other hand the authors do not prove that the presented upper bound is valid. Note that e.\,g.\ function $T(n,a,b,f)=1$ also satisfies relation (\ref{eq:rec:iwama}) and the base case $T(0,0,0,0)=1$. Thus, following the reasoning in \cite{IwamaN07} one could derive $T(n,a,b,n)=1$ as an upper bound on the run-time of the algorithm.

% To prove that the function $T$ satisfies the recurrence above they show, among others, that $T(n-5,a,b,f-2) + T(n-2,a,b,f-2) = 2^{\frac{(n-5)+\frac{1}{2}(a+2b)+(f-2)/8}{4}} + 2^{\frac{(n-2)+\frac{1}{2}(a+2b)+(f-2)/8}{4}}>T(n,a,b,f)$. This inequality expresses a rather strange property that the upper bound on the run-time of the algorithm for the problem, with parameters $n,a,b,f$, is smaller than the sum of upper bounds on the run-times for the both subproblems, with parameters $n-5,a,b,f-2$ and  $n-2,a,b,f-2$. Hence, in particular, one cannot conclude that the presented upper bound on the run-time of the algorithm is valid. Note that if one lets as solution  e.\,g.\ $T(n,a,b,f)=n+8$ then it satisfies the conditions in \cite{IwamaN07}, too. Thus, following the reasoning in \cite{IwamaN07} one could derive $T(n,a,b,n)=n+8$ as an upper bound on the run-time of the algorithm. 

A second error  concerns the key lemma of \cite{IwamaN07} (Lemma~1). It says that if a 6-cycle $Q$ becomes $C(6)$, then at least three attached edges of $Q$ have been selected by $D$-branches.
Recently, we have discovered that the lemma is false. As a counterexample, we were able to construct  a (small) cubic graph such that for some path $P$ of the branching tree there exists a 6-cycle $Q$ becoming $C(6)$ on $P$ having only two attached edges selected by $D$-branches (for this counterexample see Appendix). To be correct, the lemma should be reformulated as follows: if a 6-cycle $Q$ becomes $C(6)$ then at least {\em two} attached edges of $Q$ have been selected by $D$-branches (and the bound two is best possible). After reformulating the key lemma and then solving a proper recurrence one could conclude the upper bound $\caO(1.257^n)$.

\section{Conclusions}

In this paper we have provided a new upper bound $\mathcal{O}(1.2553^n)$ for TSP in cubic graphs which consequently also applies for the Hamiltonian cycle problem. We have shown that the exact algorithm of Eppstein with some minor modifications, yields this result. An interesting open problem is to further improve this bound. One could try e.\,g.\ to find a new algorithm and prove a better asymptotic time complexity than $\mathcal{O}(1.2553^n)$. On the other hand, we believe that the worst-case time complexity of Eppstein's algorithm is much smaller than the current upper bound. %(see discussion on our experimental results below). 
Hence, another approach to resolve the problem would be to improve the analysis of the algorithm.

Our upper bound follows from the main technical contribution of this paper that estimates the number of worst-case branches, so called $A$- and $B$-branches, along any path of the backtrack tree. However, constructing backtrack trees containing a worst-case  path we have observed that they result in very 'unbalanced' trees: out of the worst-case paths, the remaining paths are short. Thus, one direction in improving our upper bound could be to improve the estimation of the worst-case number of $A$- and $B$-branches in the whole tree and incorporate this information in an analysis of the worst-case size of the backtrack tree.

Our experimental analysis \cite{Schuster2012} has confirmed that Eppstein's algorithm with our modification is easily to implement and that the algorithm has good performance. Additionally we have shown a gap between our upper bound on the tree size and actual sizes for graphs up to 112 vertices. This could indicate that the worst-case complexity of Eppstein's algorithm is much smaller than  $\mathcal{O}(1.2553^n)$. 

We analyzed the number of branches made by the implementation for random graphs with up to 112 vertices. For each graph size, five random graphs were generated. Since the algorithm is randomized, we run the implementation three times for each input graph. Figure~\ref{total-branches} shows the results of these calls using a logarithmic scale. The maximum number of branches used for the 5 random graphs of size $n$ is shown by the red line. The orange line shows the average of the calls. We further used $(3,g)$-cages, i.\,e.\  3-regular graphs of girth $g$ of minimum order (\cite{Tutte47}, see \cite{Exoo2011} for a survey), with girth values from 3 to 11 as input for our implementation. The branches used in these calls are indicated by the brown line in the figure. The graphs have up to 112 vertices for a girth of 11. Finally, a family of graphs  presented by Eppstein in \cite{Eppstein07} was examined. The family, indexed by their number of vertices $n$, is constructed such that any $n$-vertex graph of the family 
has $2^{n/3}$ Hamiltonian cycles. The results for running the implementation on these graphs is indicated by the purple line for up to 114 vertices.

\begin{figure}[htbp]
 \centering
  %{\bf A new  figure here}

\begin{tikzpicture}[domain=0:118,xscale=0.1,yscale=0.15]
%\draw[very thin,color=gray] (-0.1,-1.1) grid (3.9,3.9);
\draw[->] (-0.2,0) -- (115.2,0) node[right] {$n$};
\draw[->] (0,-0.2) -- (0,28.2) node[above] {\#Branches};
\foreach \x in {20,40,60,80,100}
  \draw (\x,0.5) -- (\x,-0.5) node[anchor=north] {$\x$};
\foreach \y in {10,20}
  \draw (0.5,\y) -- (-0.5,\y) node[anchor=east] {$2^{\y}$};
\draw[color=black,dashed] plot[domain=0:90] (\x,{\x*log2(1.2553)}) node[left] {$R(n,n,n/4,n/7,n) = 1.2553^n$};
\draw[color=green,dashed] plot (\x,{\x*log2(1.15)}) node[right] {$f(n) = 1.15^n$};
\draw[color=blue] plot[mark=x] coordinates {
(1,0.000) (2,0.000) (3,0.000) (4,1.000) (5,1.000) (6,1.585) (7,2.000) (8,2.322) (9,2.585) (10,3.170) (11,3.459) (12,3.700) (13,4.170) (14,4.459) (15,4.755) (16,5.170) (17,5.459) (18,5.755) (19,6.087) (20,6.459) (21,6.700) (22,7.087) (23,7.426) (24,7.700) (25,8.066) (26,8.426) (27,8.700) (28,9.066) (29,9.358) (30,9.700) (31,9.977) (32,10.358) (33,10.700) (34,10.977) (35,11.322) (36,11.700) (37,11.954) (38,12.322) (39,12.687) (40,12.954) (41,13.295) (42,13.687) (43,13.954) (44,14.313) (45,14.600) (46,14.954) (47,15.219) (48,15.600) (49,15.931) (50,16.219) (51,16.585) (52,16.931) (53,17.209) (54,17.585) (55,17.901) (56,18.209) (57,18.581) (58,18.901) (59,19.209) (60,19.581) (61,19.833) (62,20.209) (63,20.512) (64,20.839) (65,21.190) (66,21.512) (67,21.833) (68,22.190) (69,22.459) (70,22.833) (71,23.127) (72,23.459) (73,23.825) (74,24.127) (75,24.447) (76,24.825) (77,25.090) (78,25.447) (79,25.783) (80,26.090) (81,26.426) (82,26.716) (83,27.087) (84,27.426) (85,27.714) (86,28.087) (87,28.361) (88,28.714) (89,29.
071) (90,29.361)% (91,29.697) (92,30.071) (93,30.341) (94,30.697) (95,31.026) (96,31.341) (97,31.655) (98,32.026) (99,32.338) (100,32.655)
} node[right] {$T(n,n,n/4,n/7,n)$};
\draw[color=orange] plot[mark=x] coordinates {
(4,0.000) (8,-0.322) (12,1.263) (16,0.951) (20,3.397) (24,3.553) (28,5.290) (32,6.377) (36,7.379) (40,7.471) (44,8.506) (48,9.299) (52,9.841) (56,11.330) (60,11.179) (64,12.811) (68,12.906) (72,13.951) (76,14.210) (80,15.935) (84,16.712) (88,16.630) (92,17.894) (96,18.873) (100,19.706) (104,20.547) (108,21.082) (112,21.737)
} node[right] {rand.\ avg.};
\draw[color=red] plot[mark=x] coordinates {
(4,0.000) (8,1.000) (12,2.322) (16,2.807) (20,4.524) (24,4.392) (28,5.858) (32,7.190) (36,8.103) (40,8.570) (44,9.331) (48,9.892) (52,10.220) (56,11.997) (60,12.318) (64,13.735) (68,13.699) (72,14.683) (76,15.210) (80,16.506) (84,17.005) (88,17.168) (92,18.512) (96,19.385) (100,20.003) (104,21.361) (108,21.683) (112,22.544)
} node[above left] {rand.\ max.};
% \draw[color=brown] plot[mark=x] coordinates {(10,10)};
\draw[color=brown] plot[mark=x] coordinates {
(4, 0) (6, 1.0) (10, 2.321928094887362) (14, 4.08746284125034) (24, 5.672425341971496) (30, 7.851749041416057) (58, 12.679919878518419) (70, 16.146290228456845) (112, 24.00581272475441)
} node[above] {cages};
\draw[color=purple] plot[mark=x] coordinates {
(6,1.000) (12,2.322) (18,3.459) (24,4.524) (30,5.555) (36,6.570) (42,7.577) (48,8.581) (54,9.583) (60,10.584) (66,11.584) (72,12.585) (78,13.585) (84,14.585) (90,15.585) (96,16.585)
 (102,17.585) (108,18.585) (114,19.585)} node[below right] {HamCycles}; % (120,20.585)
% \fill[brown] (30,7.80735492206) circle (0.5);
% \fill[brown] (70,16.1277928582) circle (0.5);
\end{tikzpicture}

\caption{Total number of branches (in logarithmic scale) in backtrack trees for $n$-vertex random graphs, cages, and graphs of $2^n/3$ Hamiltonian cycles.}
\label{total-branches}
\end{figure}
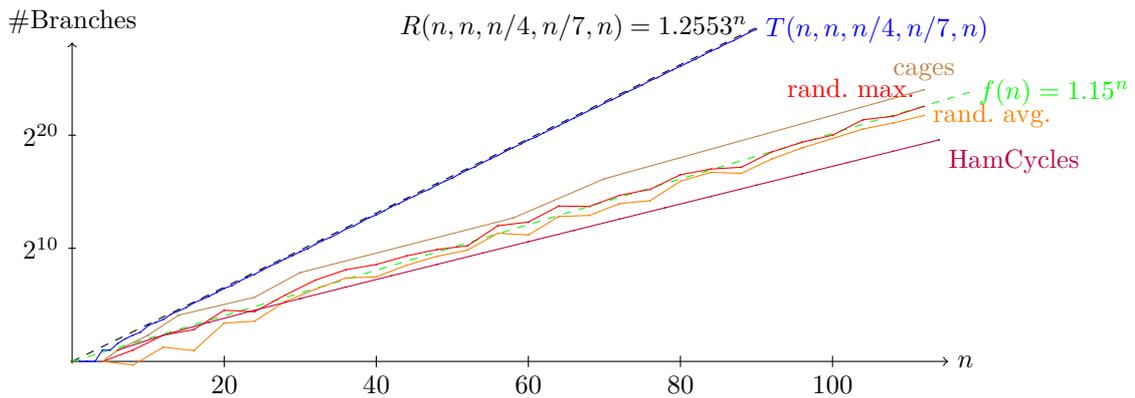

The blue line gives the results for the recurrence function $T(n,s,x,y,f)$ for the start condition deduced by the analysis. The analytical upper bound, given by the function $R$ in our analysis, is shown by the dashed black line. We can observe that the upper bound is a good approximation for $T$.

The number of branches needed to process a cage is higher than the corresponding number for random graphs. However, the ratio seems to be bounded by a constant factor. The same observation can be made for the comparison of the maximum case and the average case for several random graphs of the same size. The worst-case of the tested random graphs stays within a constantly bounded ratio to the average. If we compare the growth of the number of branches with our upper bound of $\mathcal{O}(1.2553^n)$, the experimental results indicate a much better bound of $\mathcal{O}(1.15^n)$, at least for most of the input graphs. Although the graph class corresponding to the purple line has a large number of Hamiltonian cycles, the algorithm needs only a small number of branches and even less branches than some random graph.

%\vspace*{-2mm}

\bibliographystyle{splncs}

\bibliography{literature}

\newpage

\appendix
\section{Eppstein's Algorithm}

%\begin{table} 
%  \caption{Eppstein's Algorithm}
%  \label{algtab}
  \begin{enumerate}%\vspace*{-5mm}
    \item Repeat the following steps until one of the steps returns or none of them applies:
    \begin{enumerate}
      \item[$(a)$] If $G$ contains a vertex with degree zero or one, return {\em None}.
      \item[$(b)$] If $F$ consists of a Hamiltonian cycle, return the cost of this cycle.
      \item[$(c)$] If $F$ contains a non-Hamiltonian cycle, return {\em None}.
      \item[$(d)$] If $F$ contains three edges meeting at a vertex, return {\em None}.
      \item[$(e)$] If $G$ contains two parallel edges, at least one of which is not in $F$, and $G$ has more than two vertices, then remove from $G$ an unforced edge of the two, choosing the one with larger costs if both are unforced. 
      \item[$(f)$] If $G$ contains a self-loop which is not in $F$, and $G$ has more than one vertex, remove the self-loop from $G$.
      \item[$(g)$] If $G$ contains a vertex with degree two, add its incident edges to $F$.
      \item[$(h)$] If $F$ contains exactly two edges meeting at some vertex, remove from $G$ that vertex and any other edge incident to it; replace the two edges by a single forced edge connecting their other two endpoints, having as its cost the sum of the costs of the two replaced edges' costs.
      \item[$(i)$] If $G$ contains a triangle $xyz$, then for each non-triangle edge $e$ incident to a triangle vertex, increase the cost of $e$ by the cost of the opposite triangle edge. Also, if the triangle edge opposite $e$ belongs to $F$, add $e$ to $F$. Remove from $G$ the three triangle edges, and contract the three triangle vertices into a single supervertex.
      \item[$(j)$] If $G$ contains a cycle of four unforced edges, two opposite vertices of which are each incident to a forced edge outside the cycle, then add to $F$ all non-cycle edges that are incident to a vertex of the cycle.
    \end{enumerate}
    \item If $G\setminus F$ forms a collection of disjoint 4-cycles, perform the following steps.
    \begin{enumerate}
      \item[$(a)$] For each 4-cycle $C_i$ in $G\setminus F$, let $H_i$ consist of two opposite edges of $C_i$, chosen so that the cost of $H_i$ is less than or equal to the cost of $C_i\setminus H_i$.
      \item[$(b)$] Let $H=\cup_i H_i$. Then $F\cup H$ is a degree-two spanning subgraph of $G$, but may not be connected.
      \item[$(c)$] Form a graph $G'=(V',E')$, where the vertices of $V'$ consist of the connected components of $F\cup H$. For each set $H_i$ that contains edges from two different components $K_j$ and $K_k$, draw an edge in $E'$ between the corresponding two vertices, with cost equal to the difference between the costs of $C_i$ and of $H_i$.
      \item[$(d)$] Compute the minimum spanning tree of $(G',E')$.
      \item[$(e)$] Return the sum of the costs of $F\cup H$ and of the minimum spanning tree.
    \end{enumerate}
    \item Choose an edge $yz$ according to the following cases:
    \begin{enumerate}
      \item[$(a)$] If $G\setminus F$ contains a 4-cycle, two vertices of which are adjacent to edges in $F$, let $y$ be one of the other two vertices of the cycle and let $yz$ be an edge of $G\setminus F$ that does not belong to the cycle.
%       \item[$(a')$] \emph{If there is no such 4-cycle and if $G\setminus F$ contains a live 6-cycle with a vertex $y$ which has a neighboring edge in $F$ (that is not a cycle edge but an attached one), let $z$ be one of $y$'s neighboring vertices (on the cycle). If two or more such live 6-cycles exist, then select a 6-cycle such that most attached edges are already selected. If there is more than one such edge $yz$ in the 6-cycle, choose $yz$ so, that $z$ also has a neighboring edge in $F$.}
      \item[$(b)$] If there is no such 4-cycle, but $F$ is nonempty, let $xy$ be any edge in $F$ and $yz$ be any adjacent edge in $G\setminus F$.
      \item[$(c)$] If $F$ is empty, let $yz$ be any edge in $G$.
    \end{enumerate}
    \item Call the algorithm recursively on $G$, $F\cup\{yz\}$.
    \item Call the algorithm recursively on $G\setminus\{yz\}$, $F$.
    \item Return the minimum of the set of at most two numbers returned by the two recursive calls.
  \end{enumerate}
%\end{table}

\newpage

\section{A Counterexample to Lemma~1 in \cite{IwamaN07}}
%\noindent {\bf \large A Counterexample to Lemma~1 in \cite{IwamaN07}  }\\[2mm]
\begin{figure}[htbp]
  \centering
  \subfigure[Arbitrary edge selected due to Step~3$(b2)$]{
    \begin{tikzpicture}[node distance=0.7*1, scale=1]
      \node[vertex] (a) {};
      \node[vertex] (b) [right=of a] {};
      \node[vertex] (c) [below right=of b] {};
      \node[vertex] (d) [below left=of c] {};
      \node[vertex] (e) [left=of d] {};
      \node[vertex] (f) [above left=of e] {};
      \draw (a) -- (d) node[vertex,pos=0.2](g){} node[vertex,pos=0.5](h){};
      %\node[vertex] (i) [above=of b] {};
      \node[vertex] (j) [right=of c] {};
      \node[vertex] (k) [left=of f] {};
      \node[vertex] (l) [below=of h] {};
      \node[vertex] (m) [below=of k] {};
      \node[vertex] (o) [below=of e] {};
      \node[vertex] (n) [left=of o] {};
      \node[vertex] (p) [below=of o] {};
      \node[vertex] (q) [left=of p] {};
      \node[vertex] (r) [right=of j] {};
      \node[vertex] (s) [below right=of r] {};
      \node[vertex] (t) [below=of s] {};
      \node[vertex] (u) [below left=of t] {};
      \node[vertex] (w) [below left=of r] {};
      \node[vertex] (v) [below=of w] {};
      %\node[vertex] (x) [below left=of g] {};

      \foreach \source/ \dest in {a/b, b/c, c/d, n/o, o/p, p/q, q/n, r/s, s/t, s/v, r/w, w/t, w/v, t/u, v/u, m/k, l/h, k/f, f/a, d/e, e/o, c/j, j/r, l/m, e/f}
        \path[edge] (\source) -- (\dest);
      \foreach \source/ \dest in {m/n}
        \path[edgesel] (\source) -- (\dest);

      \path[edge] (k) to [out=60,in=120] (j);
      \path[edge] (l) to [out=-45,in=180] (u);
      \path[edge] (b) to [out=-70,in=0] (p);
      \path[edge] (g) to [out=-130,in=180] (q);
    \end{tikzpicture}
  }\hspace*{10mm}
  \subfigure[First $A$-branch]{
    \begin{tikzpicture}[node distance=0.7*1, scale=1]
      \node[vertex] (a) {};
      \node[vertex] (b) [right=of a] {};
      \node[vertex] (c) [below right=of b] {};
      \node[vertex] (d) [below left=of c] {};
      \node[vertex] (e) [left=of d] {};
      \node[vertex] (f) [above left=of e] {};
      \draw (a) -- (d) node[vertex,pos=0.2](g){} node[vertex,pos=0.5](h){};
      %\node[vertex] (i) [above=of b] {};
      \node[vertex] (j) [right=of c] {};
      \node[vertex] (k) [left=of f] {};
      \node[vertex] (l) [below=of h] {};
      \node[vertex] (m) [below=of k] {};
      \node[vertex] (o) [below=of e] {};
      \node[vertex] (n) [left=of o] {};
      \node[vertex] (p) [below=of o] {};
      \node[vertex] (q) [left=of p] {};
      \node[vertex] (r) [right=of j] {};
      \node[vertex] (s) [below right=of r] {};
      \node[vertex] (t) [below=of s] {};
      \node[vertex] (u) [below left=of t] {};
      \node[vertex] (w) [below left=of r] {};
      \node[vertex] (v) [below=of w] {};
      %\node[vertex] (x) [below left=of g] {};

      \foreach \source/ \dest in {a/b, b/c, c/d, n/o, o/p, p/q, q/n, r/s, s/t, s/v, r/w, w/t, w/v, t/u, v/u, e/f, k/f, f/a, d/e, e/o, c/j, j/r}
        \path[edge] (\source) -- (\dest);
      \foreach \source/ \dest in {m/n}
        \path[edgef] (\source) -- (\dest);
      \foreach \source/ \dest in {m/k, l/h}
        \path[edgesel] (\source) -- (\dest);
      \foreach \source/ \dest in {l/m}
        \path[edgerem] (\source) -- (\dest);

      \path[edge] (k) to [out=60,in=120] (j);
      \path[edgesel] (l) to [out=-45,in=180] (u);
      \path[edge] (b) to [out=-70,in=0] (p);
      \path[edge] (g) to [out=-130,in=180] (q);
    \end{tikzpicture}
  }
  \subfigure[Edge contraction and second $A$-branch]{
    \begin{tikzpicture}[node distance=0.7*1, scale=1]
      \node[vertex] (a) {};
      \node[vertex] (b) [right=of a] {};
      \node[vertex] (c) [below right=of b] {};
      \node[vertex] (d) [below left=of c] {};
      \node[vertex] (e) [left=of d] {};
      \node[vertex] (f) [above left=of e] {};
      \draw (a) -- (d) node[vertex,pos=0.2](g){} node[vertex,pos=0.5](h){};
      %\node[vertex] (i) [above=of b] {};
      \node[vertex] (j) [right=of c] {};
      \node[vertex] (k) [left=of f] {};
      %\node[vertex] (l) [below=of h] {};
      %\node[vertex] (m) [below=of k] {};
      \node[vertex] (o) [below=of e] {};
      \node[vertex] (n) [left=of o] {};
      \node[vertex] (p) [below=of o] {};
      \node[vertex] (q) [left=of p] {};
      \node[vertex] (r) [right=of j] {};
      \node[vertex] (s) [below right=of r] {};
      \node[vertex] (t) [below=of s] {};
      \node[vertex] (u) [below left=of t] {};
      \node[vertex] (w) [below left=of r] {};
      \node[vertex] (v) [below=of w] {};
      %\node[vertex] (x) [below left=of g] {};

      \foreach \source/ \dest in {a/b, b/c, c/d, n/o, o/p, p/q, q/n, r/s, s/t, s/v, r/w, w/t, w/v, t/u, v/u, e/f, f/a, d/e, e/o}
        \path[edge] (\source) -- (\dest);
      \foreach \source/ \dest in {}
        \path[edgef] (\source) -- (\dest);
      \foreach \source/ \dest in {k/f, c/j, j/r}
        \path[edgesel] (\source) -- (\dest);
      \foreach \source/ \dest in {}
        \path[edgerem] (\source) -- (\dest);

      \path[edgerem] (k) to [out=60,in=120] (j);
      \path[edge] (b) to [out=-70,in=0] (p);
      \path[edge] (g) to [out=-130,in=180] (q);
      \path[edgef] (h) to [out=-90,in=180] (u);
      \path[edgef] (k) to [out=-90,in=135] (n);
    \end{tikzpicture}
  }\hspace*{10mm}
  \subfigure[Edge contraction and third $A$-branch]{
    \begin{tikzpicture}[node distance=0.7*1, scale=1]
      \node[vertex] (a) {};
      \node[vertex] (b) [right=of a] {};
      \node[vertex] (c) [below right=of b] {};
      \node[vertex] (d) [below left=of c] {};
      \node[vertex] (e) [left=of d] {};
      \node[vertex] (f) [above left=of e] {};
      \draw (a) -- (d) node[vertex,pos=0.2](g){} node[vertex,pos=0.5](h){};
      %\node[vertex] (i) [above=of b] {};
      %\node[vertex] (j) [right=of c] {};
      %\node[vertex] (k) [left=of f] {};
      %\node[vertex] (l) [below=of h] {};
      %\node[vertex] (m) [below=of k] {};
      \node[vertex] (o) [below=of e] {};
      \node[vertex] (n) [left=of o] {};
      \node[vertex] (p) [below=of o] {};
      \node[vertex] (q) [left=of p] {};
      \node[vertex] (r) [right=of j] {};
      \node[vertex] (s) [below right=of r] {};
      \node[vertex] (t) [below=of s] {};
      \node[vertex] (u) [below left=of t] {};
      \node[vertex] (w) [below left=of r] {};
      \node[vertex] (v) [below=of w] {};
      %\node[vertex] (x) [below left=of g] {};

      \foreach \source/ \dest in {a/b, b/c, c/d, n/o, o/p, p/q, q/n, r/s, s/t, s/v, r/w, w/t, w/v, t/u, v/u}
        \path[edge] (\source) -- (\dest);
      \foreach \source/ \dest in {c/r}
        \path[edgef] (\source) -- (\dest);
      \foreach \source/ \dest in {f/a, d/e, e/o}
        \path[edgesel] (\source) -- (\dest);
      \foreach \source/ \dest in {e/f}
        \path[edgerem] (\source) -- (\dest);

      \path[edge] (b) to [out=-70,in=0] (p);
      \path[edge] (g) to [out=-130,in=180] (q);
      \path[edgef] (h) to [out=-90,in=180] (u);
      \path[edgef] (f) to [out=-135,in=90] (n);
    \end{tikzpicture}
  }
  \subfigure[Edge contraction and Step~3(a)]{
    \begin{tikzpicture}[node distance=0.7*1, scale=1]
      \node[vertex] (a) {};
      \node[vertex] (b) [right=of a] {};
      \node[vertex] (c) [below right=of b] {};
      \node[vertex] (d) [below left=of c] {};
      %\node[vertex] (e) [left=of d] {};
      %\node[vertex] (f) [above left=of e] {};
      \draw (a) -- (d) node[vertex,pos=0.2](g){} node[vertex,pos=0.5](h){};
      %\node[vertex] (i) [above=of b] {};
      %\node[vertex] (j) [right=of c] {};
      %\node[vertex] (k) [left=of f] {};
      %\node[vertex] (l) [below=of h] {};
      %\node[vertex] (m) [below=of k] {};
      \node[vertex] (o) [below=of e] {};
      \node[vertex] (n) [left=of o] {};
      \node[vertex] (p) [below=of o] {};
      \node[vertex] (q) [left=of p] {};
      \node[vertex] (r) [right=of j] {};
      \node[vertex] (s) [below right=of r] {};
      \node[vertex] (t) [below=of s] {};
      \node[vertex] (u) [below left=of t] {};
      \node[vertex] (w) [below left=of r] {};
      \node[vertex] (v) [below=of w] {};
      %\node[vertex] (x) [below left=of g] {};

      \foreach \source/ \dest in {a/b, b/c, c/d, n/o, o/p, p/q, q/n, r/s, s/t, s/v, r/w, w/t, w/v, t/u, v/u}
        \path[edge] (\source) -- (\dest);
      \foreach \source/ \dest in {c/r, d/o}
        \path[edgef] (\source) -- (\dest);
      \foreach \source/ \dest in {}
        \path[edgesel] (\source) -- (\dest);
      \foreach \source/ \dest in {}
        \path[edgerem] (\source) -- (\dest);

      \path[edgesel] (b) to [out=-70,in=0] (p);
      \path[edgesel] (g) to [out=-130,in=180] (q);
      \path[edgef] (h) to [out=-90,in=180] (u);
      \path[edgef] (a) to [out=-135,in=90] (n);
    \end{tikzpicture}
  }
  \caption{The input graph is shown in Fig.~(a). Figures (b)-(e) show resulting configurations, i.\,e.\ $G$ and $F$, along a path of the backtrack tree performed by the algorithm of Iwama and Nakashima presented in  \cite{IwamaN07}. The $C(6)$-cycle seen in Fig.~(e) has four attached edges selected by $A$-branches.}
\end{figure}
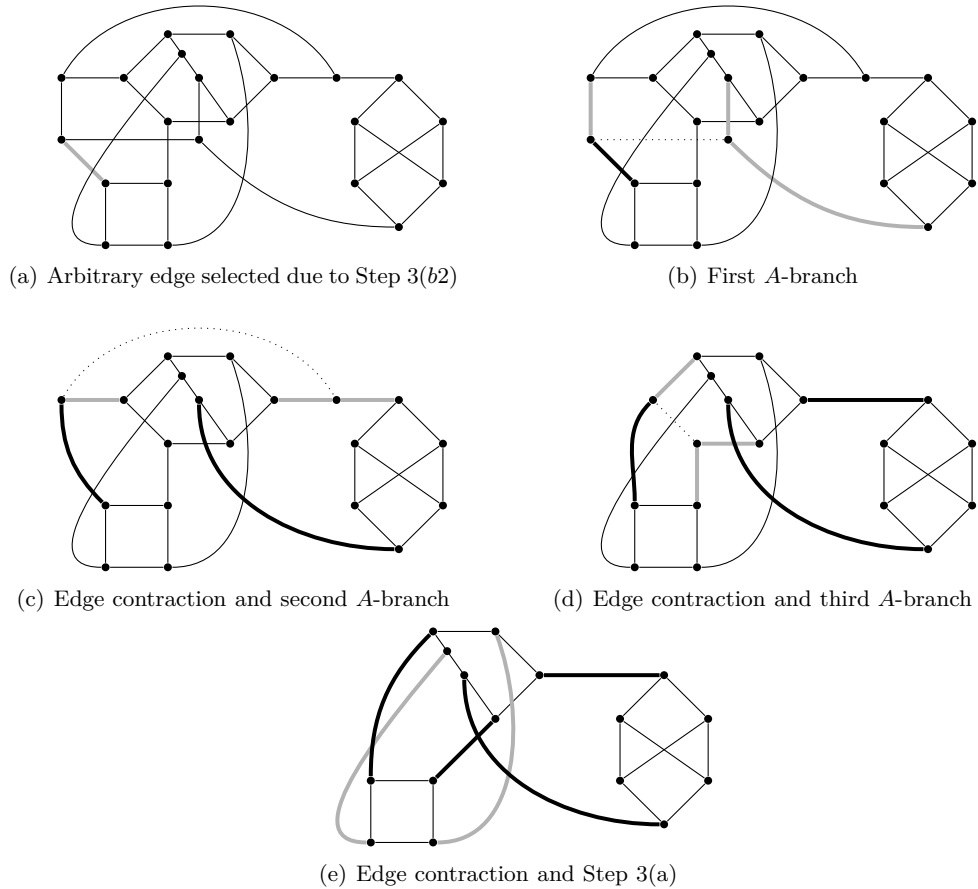

\end{document}